%% file: operator-cumulant-hierarchy-paper-with-abstract.tex
\documentclass[11pt,a4paper]{article}
\usepackage{amssymb,amsmath, amsfonts}
\usepackage{graphicx,graphics}
\usepackage{mathtools}
\usepackage{bbold}
\usepackage[english]{babel}
\usepackage[utf8]{inputenc}
\usepackage{epsfig,url}
\usepackage{bbm,theorem}
\usepackage{a4wide}
\usepackage{xcolor}
\usepackage{enumerate}
\usepackage{calrsfs}
\usepackage[bookmarks=true]{hyperref}
\usepackage{bookmark}
\usepackage{verbatim}

\usepackage{stmaryrd}

\DeclareMathAlphabet{\pazocal}{OMS}{zplm}{m}{n}
\newtheorem{theorem}{Theorem}[section]

\newtheorem{lemma}[theorem]{Lemma}
\newtheorem{proposition}[theorem]{Proposition}
\newtheorem{assumption}[theorem]{Assumption}

\newtheorem{remark}[theorem]{Remark}

\numberwithin{equation}{section}
\numberwithin{theorem}{section}
\newcommand{\qed}{\hfill$\Box$}
\newenvironment{proof}{\begin{trivlist}\item[]{\em Proof:}\/}{\qed\end{trivlist}}

\usepackage{enumitem}

\newcommand{\lf}{\left}
\newcommand{\ri}{\right}

\newcommand{\mean}[1]{\lf\langle #1 \ri\rangle}

\newcommand{\innerprod}[2]{\langle #1,#2 \rangle}

\newcommand{\FT}[1]{\widehat{#1}}

\DeclareFontFamily{U}{mathx}{\hyphenchar\font45}
\DeclareFontShape{U}{mathx}{m}{n}{
      <5> <6> <7> <8> <9> <10>
      <10.95> <12> <14.4> <17.28> <20.74> <24.88>
      mathx10
      }{}
\DeclareSymbolFont{mathx}{U}{mathx}{m}{n}
\DeclareFontSubstitution{U}{mathx}{m}{n}
\DeclareMathAccent{\widecheck}{0}{mathx}{"71}
\DeclareMathAccent{\wideparen}{0}{mathx}{"75}

\newcommand{\defem}[1]{{\em #1\/}}
\newcommand{\rmd}{{\rm d}}
\newcommand{\rme}{{\rm e}}
\newcommand{\ci}{{\rm i}}

\newcommand{\T} {\mathbb{T}}

\newcommand{\vspan} {\mathrm{span}}

\newcommand{\ves} {{V_\mathrm{ES}}}
\newcommand{\es}[2]{ {#1 \llbracket#2\rrbracket} }
\newcommand{\ES}{ {\mathrm{ES}} }

\newcommand{\R}{{\mathbb R}}                                                    
\newcommand{\Z}{{\mathbb Z}}                                                    
\newcommand{\E}{{\mathbb E}}                                                    
\newcommand{\C}{{\mathbb C}}                                                    
\newcommand{\h}{{\mathfrak{h}}}

\newcommand{\N}{{\mathbb N}}                                                    
                                                   
\newcommand{\F}{{\mathcal F}}                                                   
\newcommand{\A}{{\mathcal A}}                                                   
\newcommand{\D}{{\mathcal D}}                                                   
\newcommand{\I}{{\mathcal I}}                                                   
                                                   
\newcommand{\B}{{\mathcal B}}                                                   
\newcommand{\LL}{{\mathcal L}}                                                   
\newcommand{\p}{{\mathcal P}}                                                   
\newcommand{\tp}{{\widetilde{P}}}                                                   
\newcommand{\hh}{{\mathcal H}}

\newcommand{\pr}{\mathrm{pr}}

\newcommand{\norm}[1]{\|#1\|}
\newcommand{\bes}{A_{\text{b}}} %% base even set

\newcommand{\xwick}[2]{{:\!#1\!:_{#2}}}
\newcommand{\bwick}[1]{{\xwick{#1}{\mathrm{s}}}}
\newcommand{\fwick}[1]{{\xwick{#1}{\mathrm{as}}}}

\DeclareMathOperator{\tr}{Tr} 
\DeclareMathOperator{\sgn}{sgn} 

\newcommand{\ind}{\mathbb{1}}

\newcommand{\lattice}{ \Lambda }

\newcommand{\duallattice}{ {\lattice^{\!*}} }

\newcommand{\bcum}{\kappa^{\mathrm{s}}}
\newcommand{\fcum}{\kappa^{\mathrm{as}}}

\newcommand{\ccol}{{\mathcal{C}^{\lambda}}}                                                    

\newcommand{\fta}{\FT{a}}

\newcommand{\id}{{\mathbf{1}}}
\newcommand{\ddt}{{\frac{\rmd}{\rmd t}}}
\newcommand{\dds}{{\frac{\rmd}{\rmd s}}}

\allowdisplaybreaks

\newcommand{\email}[1]{E-mail: {\tt #1}}
\newcommand{\emailjani}{\email{jani.lukkarinen@helsinki.fi}}
\newcommand{\emailsakari}{\email{sakari.pirnes@helsinki.fi} (corresponding author)}
\newcommand{\UHaddress}{\em University of Helsinki, Department of Mathematics and Statistics\\
\em P.O. Box 68, FI-00014 Helsingin yliopisto, Finland}

\begin{document}

\title{Symmetric and antisymmetric cumulant hierarchies and\\ {Wick} polynomials of operators}

\author{Jani Lukkarinen\thanks{\emailjani}\\[0.5em] Sakari Pirnes\thanks{\emailsakari}\\[1em]%
$\,^*,\,^\dag$\UHaddress}

\date{\today}

\maketitle
\begin{abstract}
There has been much progress in 
understanding mathematical foundations of kinetic theory in recent years, in
particular, for the original model of a rarefied gas of  classical particles
with repulsive interactions, as well as for weakly nonlinear wave evolutions
related to wave turbulence.  These results are mainly based on refined analysis
of evolution hierarchies of moments of the relevant random variables, or of
related quantities such as of cumulants or of correlation errors.  In this
work, we extend two of such basic tools, cumulants and Wick polynomials, to a
setting where the random variables no longer need to commute, with the goal of
handling expectations of creation and annihilation operators on a Fock space,
both in the  bosonic and fermionic case.  We call these constructions symmetric
and antisymmetric cumulants and Wick polynomials, respectively.  As an
application, we derive a hierarchy of evolution equations satisfied by such
cumulants, usually called truncated correlation functions in this context, for
a gas of bosons or fermions hopping on a finite lattice and interacting via a
two-body potential.
With sufficient control over solutions to these hierarchies, one could control
the magnitude of the error terms related to an approximation of the state of
the quantum system by a finite lattice kinetic equation studied in our earlier
work, closely connected to standard bosonic and fermionic Boltzmann--Nordheim
equations.  
\end{abstract}

\bigskip

\textbf{Keywords:}  non-commutative probability, cumulants, truncated correlation functions, Wick polynomials,
quantum kinetic theory,  bosonic Fock space, fermionic Fock space, Boltzmann--Nordheim equation, evolution of cumulants

\bigskip

\textbf{Subject classification:} 82C10, 82C20, 82C22, 82C40, 35Q20, 35Q40,  81V73, 81V74 

\newpage

%\bigskip

\tableofcontents
\input{section1}
\input{section2}
\input{section3}

\input{section4}

\input{section5}

\input{section6}

\bigskip

\noindent \textbf{Acknowledgements}

The authors thank Peter Madsen, Phan Thành Nam, Chiara Saffirio, Manfred Salmhofer, Herbert Spohn,  and   Minh Binh Tran
%Gigliola Staffilani, and Minh Binh Tran 
for helpful discussions relevant to the project.  The work has been
supported by the Academy of Finland, via the {\em Finnish Centre of Excellence in Randomness and
Structures\/} (project Nos. 346306 and 364213).
 
The funders had no role in study design, analysis, decision to publish, or
preparation of the manuscript.

\smallskip

\noindent\textbf{Compliance with ethical standards} 
\smallskip

\noindent \textbf{Data availability statement} Data sharing not applicable to this article as no datasets
were generated or analysed during the current study.
\smallskip

\noindent \textbf{Conflict of interest} The authors declare that they have no conflict of interest.

%\input{appendix}

%\nocite{*} % Print all references regardless of whether they were cited in the poster or not
\bibliographystyle{abbrv} 
\bibliography{bibliography} % Use the  file bibliography.bib

\end{document}

%% file: section1.tex
\section{Introduction}

Mathematical foundations of kinetic theory of locally interacting classical
particles and of weakly interacting wave evolutions have seen significant focus
and advances over the last few years.  For example, the breakthrough results
of Deng, Hani, and Ma \cite{deng_full_2023, deng_long_2025}, 
were quoted as one of the
motivations for Yu Deng's Fields Medal in 2026.
These works have clarified the complex
mechanisms that lie behind cancellations leading to controllability of
accumulation of statistical correlations in these models.  They 
rely on highly refined analysis of perturbation expansions
of moments of the relevant time-evolving random variables, using tools such as
regularized correlations functions and cluster expansions.
These techniques have a rich history 
starting from the pioneering work of Landford in 1973 \cite{lanford73} for a system of interacting classical particles.  Recent developments on the topic are summarized in the proceedings article by Bodineau, Gallagher, Saint-Raymond and Simonella \cite{Bodineau2023}.

In contrast, advances in quantum kinetic theory for weakly interacting system of bosons or fermions have been more limited.  From a technical point of view, the question of controlling correlations in these systems is quite similar to controlling the evolution hierarchy related to the nonlinear Schr\"{o}dinger equation (NLS), as detailed for example in \cite{lukkarinen_not_2009}.  Since the accuracy of kinetic evolution equations can now be controlled for NLS with sufficiently well-prepared initial data \cite{buckmaster_onset_2021,DengHani_fullrange_2023,deng_full_2023,deng_long_2024}, with further techniques available if stochastic forcing is included \cite{Dymov2021,Dymov2022,Hannani_stoch_KdV_2022,dymov_kinetic_2025},
it seems plausible that also the quantum kinetic equation should be accurate in a similar setup.
One important distinction is in that the correlation functions in the quantum case are not associated with probability measures and random variables but instead come from certain linear functionals, called \defem{states}, over a non-commutative algebra of relevant observables.
Hence, many tools from probability theory are not directly applicable in the quantum case.

Examples of such tools are given by \defem{cumulants}, \defem{Wick polynomials}, and similar constructions, including the \defem{correlation errors} of the marginals of probability distributions introduced by Pulvirenti and Simonella \cite{Pulvirenti2016}.  Cumulants can be thought of as regularized moments,
and both carry the same amount of information about the system: knowing all
moments up to certain order allows determining all cumulants up to the same
order, and vice versa \cite{lukkarinen_wick_2016}.  The reason why cumulants appear as a more convenient choice
to control accuracy of kinetic theory is the main assumption needed in its
derivation: some form of sufficient statistical
independence is used in the derivations, called \defem{chaos} in this context.
In addition to independence, cumulants can also be used to measure how close the distribution of the random variables is to
a Gaussian distribution.
The main results up to day concern \defem{propagation of chaos} \cite{kac_foundations_1956,paul_size_2019,deng_propagation_2024} which corresponds to 
checking that, if the state of the system is initially sufficiently chaotic and well-prepared, it will
remain so at least up to times which are needed to see nontrivial
evolution connected to kinetic theory.   Given sufficiently strong stochastic forcing, further properties of chaos can sometimes be derived. 
In particular, cumulant hierarchy of particle energies in the stochastic Kac model has recently been used to control also \defem{generation of chaos} by the stochastic collisions in this mean field model \cite{lukkarinen_generation_2025}.  Cumulant hierarchies have also been used in \cite{dymov_kinetic_2025} for a stochastically forced NLS.

In this work, we focus on extending the main tools used in the above works in a
non-commutative setting which is required in the study of quantum kinetic
theory.  The concept of cumulants has already a counterpart in the quantum mechanical setting, called 
\defem{truncated correlation functions} \cite{brattelirobinson2}.  
Our aim is to generalize these definitions and introduce symmetric and antisymmetric cumulants for any expectation map over an algebra, as well as the corresponding Wick polynomials which serve as convenient regularization of monomials in the algebra.  In particular, we show how the Wick polynomials allow for a straightforward derivation of an evolution hierarchy of both symmetric and antisymmetric cumulants, assuming merely that the cumulants are composed of elements of the algebra satisfying a closed evolution equation.  This result generalizes those derived for (commuting) random variables in \cite{lukkarinen_wick_2016}.

We also apply these techniques to derive the cumulant evolution hierarchy for two quantum systems: mutually interacting bosons or fermions hopping on a finite $d$-dimensional lattice.  We connect these cumulant hierarchies to our earlier work \cite{lukkarinen2026finitelatticekineticequations} in which we explain 
how the commonly used quantum kinetic equations, in this context called Boltzmann--Nordheim or Ueling--Uhlenbeck equations, may be
derived from a \defem{truncated  hierarchy}, obtained from a \defem{closure assumption} which amounts to ignoring certain higher order cumulants.  Here we give an explicit expression for the related truncation error, the difference between the true and the truncated cumulant hierarchy.  
Control of this truncation error would lead to control of the accuracy of the kinetic evolution equations also for these fermionic and bosonic quantum systems.

% The current work can hence be understood in the general context of the mathematical programme of determining the precise conditions and errors for application of kinetic evolution equations, such as general Boltzmann equations.

Much of the earlier work in connecting quantum correlation functions to quantum kinetic equations relies on evolution hierarchies of moments, see for example \cite{erdos_quantum_2004,benedetto_weak-coupling_2005,lukkarinen_not_2009,Frst2013} for the relation between these hierarchies and the kinetic equations.  There are also some more complete mathematical results using mean-field type scalings  in \cite{chen_emergence_2023,Crdenas2025}, corresponding to evolution at subkinetic times.   In a recent preprint \cite{madsen2026asymptoticbehaviorkinetictime}, a fermionic lattice counterpart of an earlier 
result  \cite{lukkarinen_weakly_2011} for time-correlations at an equilibrium state of a discrete NLS is proven. Fermionic truncated correlation functions, 
together with Grassmann integral techniques, have also been used by Salmhofer \cite{Salmhofer2009} to control their decay in space at an equilibrium state for such a fermionic system.  Here, we systematize and extend these tools to a more general non-commutative setting, including also Wick polynomial regularizations.
%Non-commutative character
%becomes more prominent for Wick polynomials, as their definition will need a
%choice of a \defem{reference  order} for the arguments.
%In addition to cumulants, we also extend the definition of Wick polynomials to
%the general setup considered here.  We also show how these Wick polynomials are
%closely connected to evolution of the cumulants in weakly interacting quantum
%systems, extending the results from  [LMarcozzi].  

In fact, there already are cumulant constructions, called \defem{free cumulants} \cite{Speicher1994}, which are applicable in a general non-commutative setting, used for example in the analysis of properties of random matrices.
A general discussion about the associated \defem{free probability} can be found in
\cite{NicaSpeicher2006}.
However, it should be stressed that the cumulants we use here are \defem {not} the same as free
cumulants: even though all three types of  ``non-commuting cumulants'' 
can be defined for the systems we consider here, they
typically have different
numerical values and provide different type of control for the properties of
expectation values in the system.

What we call \defem{symmetric cumulants} is the most obvious generalization of standard cumulants to a non-commutative setting, but these do not share the main properties of free cumulants.
Their usefulness in the analysis involving bosonic creation and annihilation operators comes from the fact that these satisfy the \defem{canonical commutation relations} (CCR), and as such they are ``almost'' commuting.  In particular, the corresponding symmetric cumulants are then argument permutation invariant, apart possibly from order two --- see Proposition \ref{prop:cum_per_invariance}.
In addition, the symmetric cumulants can be used to check how close the expectation values match those coming from a bosonic quasi-free state.

The numerical values of \defem{antisymmetric cumulants} have little relation to those of either free or symmetric cumulants.  Instead, they are convenient for elements which satisfy 
\defem{canonical anticommutation relations} (CAR) such as fermionic creation and annihilation operators.
In this case, the corresponding antisymmetric cumulants are  argument permutation invariant, apart possibly from order two and up to a sign determined by the permutation.  The detailed result is given in  Proposition  \ref{prop:cum_per_invariance_F} below.
In addition, the symmetric cumulants can be used to check how close the expectation values match those coming from a fermionic quasi-free state.

The precise definition of symmetric cumulants and Wick polynomials is given in Section \ref{sec:sym}
and the antisymmetric cumulants and Wick polynomials in Section \ref{sec:anti}.
We will then also discuss in more detail their basic properties and relation to free cumulants.
In particular, to avoid misconceptions when defining and using the non-commuting objects, we develop a
somewhat heavy notational machinery in Section \ref{sec:notations} for indexing and manipulating linear combinations related to elements in the algebra.  
We also detail in Section\ \ref{sec:expectationmaps} what kind of expectation maps we consider here.  
Although any state can be used to give an expectation map, we do not require positivity of expectations maps here, only a normalization condition is needed. 

The last two sections contain our applications of these tools to time-evolution problems.
In Section\ \ref{sec:time-evolution}, we explain how the results from Sections  \ref{sec:sym} and  \ref{sec:anti} lead to evolution hierarchies of cumulants, given in Theorems \ref{thm:cumhieR}, \ref{thm:cumhie} and \ref{thm:cumhie_F}.  The main difference between these results comes from the assumed form of the rate function of the dynamical system determining the evolution of the related algebra elements.  In the first result, we assume that the rate function is a polynomial, while in the last two results it is explained how using regularized Wick polynomials in the rate function leads to fewer terms in the cumulant hierarchies.

Section \ref{sec:hie_bos_fer} specializes these methods in the two lattice systems of quantum particles discussed above.  We detail some natural assumptions for both the fermionic 
and the bosonic case which allow direct application of the previous results.  We state explicitly the antisymmetric cumulant hierarchy for fermions in \ref{sec:fermions} and the symmetric cumulant hierarchy for bosons in Section\ \ref{sec:bosons}. 
The connection of these two hierarchies to the truncated cumulant hierarchy of \cite{lukkarinen2026finitelatticekineticequations} and to quantum kinetic theory is discussed in Section\ \ref{sec:derivation}.

%\subsection{Summary of results}

%(ADD A SUMMARY?)

%% file: section2.tex
\section{Notations}
\label{sec:notations}

\subsection{Sequence notations}
\label{sec:sequence}

Given an index set $J$, i.e.\ any non-empty set, we define an {\em index sequence} to be any
finite set $I\subset\N\times J$ satisfying that if $(\ell,j)\in I$ then
$(\ell,i)\notin I$ for all $i\in J$, $i\ne j$. Thus such index sequences are in
one-to-one correspondence with graphs of maps $f:N\to J$, where $N\subset \N$
is any finite subset.   In particular, $\emptyset$ is an (empty) index sequence.
Let $\I(J)$ denote the
collection of all index sequences constructed from $J$.   Using index sequences,
we can now systematically handle combinatorics of subsequences  and permutations.  
For example, with this notation, we can use the standard subset notation $E\subset I$ to construct
a subsequence $E$ of the index sequence $I\in\I(J)$ so that the elements
$(\ell,j)\in E$ will automatically remember their position in the original
index sequence $I$. This will be important as we are interested in labeling operators 
that do not necessarily commute.

Given an index sequence $I$, the corresponding \defem{collection of position
labels} is given by $\pr_1(I)\subset \N$.  Here and in the following, the
notation $\pr_n$ denotes the map projecting the $n$:th component from an
element of a product space.  For example, the collection of indices in $I$ is
given by $\pr_2(I)= \{j\in J\, |\,(\ell,j)\in I\}$.  Note that although
$|\pr_1(I)|=|I|$ we only have $|\pr_2(I)|\le |I|$ since repetitions of an
index are allowed.

If $I,I'\in\mathcal{I}(J)$, then $I\cup I'$, $I\cap I'$ and $I\setminus I'$ denote the 
basic set operations, namely, union, intersection and set difference.   In addition,
if $I \in \I(J)$ and $\ell\in \pr_1(I)$, 
there is a unique  $j\in J$ for which $(\ell,j)\in I$, and we sometimes use the shorthand notation ``$\ell$'' to refer to
$(\ell,j)$.  In particular, if $E\subset I$, we may then denote $E\cup \ell:=E\cup\{(\ell,j)\}$, $E\cap
\ell:=E\cap\{(\ell,j)\}$ and $E\setminus \ell:=E\setminus \{(\ell,j)\}$.
Given $I\in\I(J)$ and $m\in\N$ we further define 
\begin{align}
    \label{eq:jani_notation}
    I|_{<m} = \{ (\ell,j)\in I \ : \ \ell<m\}
    \,,
    \qquad
    I|_{>m} = \{ (\ell,j)\in I \ : \ \ell>m\}
    \,.
\end{align}
Both $I|_{<m}$ and $I|_{>m}$ are subsets of $I$ and hence belong to $\I(J)$.

Note that $I\cup I'\notin\I(J)$ for $I,I'\in\mathcal{I}(J)$
if there is $(\ell,j)\in I$ such that $(\ell,j')\in I'$
for some $j'\ne j$.
This motivates the following definition of $I+I'$ that 
glues $I$ and $I'$ together in a way that results in an index sequence. For $I\in\I(J)$
and $n\in\N_0$ we first define a shift $I+n:=\{(\ell+n,j_\ell)\colon (\ell,j_\ell)\in I\}$
and set $n+I:=I+n$. Given 
$I,I'\in\I(J)$, we define the gluing of $I$ and $I'$ by 
\begin{align}
    \label{eq:glueing}
    I+I':=
    \begin{cases}
    I\cup (I'+\max_{(\ell,j_\ell)\in
    I}\ell), & I\ne\emptyset\\
    I', & I=\emptyset
    \end{cases}.
\end{align}
Therefore, $\emptyset+I=I+\emptyset=I$ for any $I$.  We also always have $|I+I'|=|I|+|I'|$ and
$I+I'\in\I(J)$.
Note that this gluing process 
is associative but not commutative.  The results from the sum in the notation
$$
\sum_{i=1}^{n} I_i= I_1+I_2+\dots+I_n
$$
can thus be defined for $I_i\in\I(J)$, $i=1,2,\ldots,n$, independently of the order in which these sums are computed.

\subsection{Polynomials and permutations}

We will use $\A$ to denote a unital algebra over  
the complex numbers $\C$. Let $\id$ denote the identity element of the algebra. 
Let $J$ be any index set and $b\colon J\to\A$ any map.  We introduce the map $b$ mainly to make it clear when the algebraic rules may and may not be applied to simplify the results from algebraic operations.

We define $b^I\in\A$ to be a product of elements corresponding to the labels in $I$.
To be more precise, we define $b^\emptyset:=\id$, and for a nonempty index sequence $I\in\I(J)$, $I\ne\emptyset$,
we define
\begin{align}
    \label{eq:prod}
    b^I:= \prod_{(\ell,j_\ell)\in I}^{}b(j_\ell),
\end{align}
where the elements $b(j_\ell)$ are ordered in the product so that the corresponding labels $\ell$ increase from left to right. For example, if $i,j,k\in J$ and
$I=\{(3,j),(8,i),(9,j),(12,j),(27,k)\}$, we have $b^I=b(j)b(i)b(j)b(j)b(k)$. Note
that the elements might not commute, and hence the order is important. 
We recall the definition in \eqref{eq:glueing} and note that now there always holds
\[
b^Ib^{I'}=b^{I+I'}\,, \qquad I,I'\in\I(J) \,.
\]

Unless stated otherwise, a \defem{label permutation} $P$ here refers to a bijection from $\N$ to itself.
The shorthand notation 
\begin{align}
    \label{eq:indexpermutation}
    P[I]:=\{(P(\ell),j )\colon (\ell,j )\in I\}
\end{align}
will be used for index sequences $I\in\I(J)$. Note that $P[I]\in \I(J)$ for $I\in \I(J)$.
We will mainly be interested in taking permutations of polynomials, i.e., permute the order of the
operators in each monomial. We note that the
shorthand notation $P(b^E)=b^{P[E]}$ would be ill-defined for permutations $P$. To see why,
consider 
$I=\{(1,i),(2,j),(3,i),(4,j)\}$ and a permutation $P$ that only swaps $3$ and $4$.  Then,
$b^{P[E_1]}=b(i)b(j)\ne b(j)b(i)=b^{P[E_2]}$ for $E_1=\{(1,i),(2,j)\}$ and
$E_2=I\setminus E_1$ while $b^{E_1}=b(i)b(j)=b^{E_2}$. 

Our solution to the above notational problem is to introduce
for each index sequence $I\in\I(J)$ the corresponding {\it element sequence} $\es{b}{I}$ defined by
\begin{align}
    \label{eq:opset}
    \es{b}{I}:=\{(\ell,b(j))\colon (\ell,j )\in I\} \subset \N \times \mathcal{A}.
\end{align}
For notational convenience, we have used the double brackets ``$\es{}{\,}$'' to indicate that $b$ acts on the second component. Similarly, single brackets  ``$[\,]$'' are used when function acts on the first component as in \eqref{eq:indexpermutation}.

Let $\ES(J;b):=\{ \es{b}{I} \ : \ I\in\I(J)\}$ be the collection of all element 
sequences constructed from $b\colon J\to\A$.  In the following, we will drop the map $b$ from the notation $\ES(J;b)$, unless it is required to stress which $b$ is used in the construction.
Often, we first have to work with element sequences $\es{b}{I}$
 and only later convert them into corresponding monomials, as elements of the algebra. 
To this end, we introduce the abstract complex vector space freely spanned by the element sequences $\ES(J)$,
 \begin{align*}
     \ves(J):=  \vspan(\ES(J)) =
     \left\{ 
         c_1 \es{b}{I_1}
         +c_2 \es{b}{I_2}
         \dots
         +c_n \es{b}{I_n}
         \ : \ 
         n\in \N, \ c_i\in\C, \ \es{b}{I_i}\in\ES(J)
     \right\} . 
 \end{align*}
 The elements of $\ves(J)$ are called {\it abstract polynomials}, and 
 $\ES(J)$ is a linearly independent subset of $\ves(J)$.
 Note that as a set $\ES(J;b)=\I(b(J))$, but the meaning of the operation ``$+$'' is defined differently in $\ES(J)\subset \ves(J)$ and $\I(b(J))$.

Given a permutation $P$ of $\N$, we extend $P$ into a linear map acting on $\ves(J)$
by introducing the shorthand notation 
\begin{align}
    \label{eq:Pves}
    P\es{b}{I}:=\es{b}{P[I]} =
    \{(P(\ell),b(j) )\colon (\ell,j )\in I\}\in\ES(J), \qquad I\in\I(J)\,,
\end{align}
for element sequences $\es{b}{I}\in\ES(J)$.  
We define $\Xi$ as the map from $\ves(J)$ to the algebra that collapses the abstract polynomials to their 
corresponding polynomials, i.e.,
\begin{align}
    \label{eq:Xi}
    \Xi\sum_{i=1} ^n c_i \es{b}{I_i} := \sum_{i=1} ^n c_i b^{I_i}\,, \qquad c_i\in\C,\ I_i\in\I(J) \,.
\end{align}
In particular, if $I=\emptyset$, we have $\es{b}{\emptyset}=\emptyset$ and $\Xi(c\emptyset)=\Xi c\es{b}{\emptyset}=cb^\emptyset
= c\id \in \mathcal{A}$. Note that $\Xi$ is linear but never one-to-one since,
given some $j\in J$ and setting $I_1=\{(1,j)\}$, $I_2=\{(2,j)\}$, we have $\es{b}{I_1}-\es{b}{I_2}\ne 0 =\es{b}{I_1}-\es{b}{I_1}$ but $\Xi(\es{b}{I_1}-\es{b}{I_2})=0_{\mathcal{A}}=\Xi0$. We stress that the zero of the abstract polynomial space is a new element added in the construction and it has to be different from $\emptyset$.
%$\Xi \es{b}{I}= \id$ for $I=\{(1,j_1)\}$ with $j_1= b^{-1}(\id)$. 
With these definitions, it becomes possible to  apply permutations on the
abstract polynomials and then 
collapse them into
polynomials of the algebra.  Namely, we have
\begin{align}
    \label{eq:XiP}
    \Xi P\sum_{i}  c_i \es{b}{I_i} = 
    \Xi \sum_{i}  c_i \es{b}{P[I_i]} =\sum_{i} c_i b^{P[I_i]} 
\end{align}
for any permutation $P$ and $\sum_i c_i \es{b}{I_i} \in\ves(J)$.

We will need products of polynomials, and therefore we need to define a product in 
$\ves(J)$ that emulates the product of polynomials. This is done by setting 
\begin{align}
    \label{eq:opseq_prod}
    \es{b}{I} \es{b}{I'}:=\es{b}{I+I'}\,, \qquad I,I'\in\I(J) \, ,
\end{align}
and then extending by linearity.
Note that, since $I+\emptyset=\emptyset+I=I$, the identity element of this product is 
$\es{b}{\emptyset}=\emptyset$.
Now for $\sum_ic_i\es{b}{I_i},\ \sum_m c'_m \es{b}{I_m'}\in\ves(J)$ we have
\begin{align}
    \left( \sum_ic_i\es{b}{I_i} \right)  \left( \sum_m c'_m \es{b}{I_m'} \right) 
   =&
   \sum_{i,m} c_i c_m' \es{b}{I_i+I_m'},
   \\
   \Xi
   \left( \sum_ic_i\es{b}{I_i} \right)  \left( \sum_m c'_m \es{b}{I_m'} \right) 
   =&
   \sum_{i,m} c_i c_m' b^{I_i+I_m'}
   =
   \left( \sum_ic_ib^{I_i} \right)  \left( \sum_m c'_m b^{I_m'} \right) \,.
\end{align}
Hence, $\Xi$ is an algebra homomorphism from $\ves(J)$ to $\mathcal{A}$. Therefore, the product \eqref{eq:opseq_prod} preserves the order,
which is what we wanted.  However, $\Xi P$ is typically not a homomorphism unless $P$ is the identity permutation.

In addition, we want to define an interlacing  product, denoted by ``$\times$'',
 that allows mixing the elements unlike the product \eqref{eq:opseq_prod}. Given $\sum_ic_i\es{b}{I_i},\ \sum_m c'_m \es{b}{I_m'}\in\ves(J)$  we define the interlacing product $\times$ on $\ves(J)$ by 
\begin{align}
    \label{eq:prod_interlace}
    \left(
    \sum_ic_i\es{b}{I_i}
    \right) 
    \times 
    \left(
    \sum_m c'_m \es{b}{I_m'}
    \right)
    = 
    \sum_{i}
    \sum_{m}
    c_i
    c'_{m} 
    \begin{cases}
        \es{b}{I_i}
        \cup
        \es{b}{I'_m}
        \,, & \pr_1(I_i)\cap \pr_1(I'_m)=\emptyset\,,
        \\
        0\,, & \text{else}\,.
    \end{cases}
\end{align}
Here, condition $\pr_1(I_i)\cap \pr_1(I'_m)=\emptyset$ ensures that the right hand side of \eqref{eq:prod_interlace} belongs to $\ves(J)$. We note that in this paper we use the product $\times$ only for polynomials for which the condition $\pr_1(I_i)\cap \pr_1(I'_m)=\emptyset$ is satisfied for all $i$ and $m$.
The interlacing product $\times$ is linear in both arguments. It is also associative and commutative, since  both the union and the intersection have these properties. We note that $\Xi$ is not a homomorphism under 
the interlacing product $\times$.

To illustrate the difference between these two products, take $I=\{(1,j)\}$ and $I'=\{(2,j')\}$. Then,
\begin{align*}
    \Xi 
    (\es{b}{I'}
    \times
    \es{b}{I})
    =
    b(j)b(j')
   = \Xi (
    \es{b}{I}
    \es{b}{I'})
    \ne 
    b(j')b(j)
    =
    \Xi (
    \es{b}{I'}
    \es{b}{I} )
    \,.
\end{align*}

\subsection{Expectation map on unital algebra}
\label{sec:expectationmaps}

We will use $\E$ to denote an \defem{expectation map} on $\A$, namely, a linear map
\begin{align}
    \label{eq:expectation_map}
    \E\colon \A\to\C 
\end{align}
that is normalized in the
sense that $\E[\id]=1$.
If $\mathcal{A}$ is a $C^*$-algebra, any \defem{state} on it would define an expectation map.  However,
unlike states, we do not require any positivity properties from an expectation map.

We say that an expectation map $\E$ is \defem{even} with respect to a subset $\bes{\subset b(J)} \subset \mathcal{A}$ if
\begin{align}
    \label{eq:Eeven}
    \E[b^I]=0
\end{align}
for all $I\in\I(J)$ such that
$|I|$ is odd and $b(j)\in \bes$ for all $(\ell,j)\in I$.
Clearly, then $\id \not\in \bes$.
An even state in a fermionic Fock space gives an example of an even expectation map with respect to the collection of creation and annihilation operators.

%% file: section3.tex
\section{Symmetric cumulants and Wick polynomials}
\label{sec:sym}
\subsection{Symmetric cumulants}

Suppose we are given an expectation map $\E\colon\A\to\C$ on a complex unital algebra $\A$.
First, let us choose $J=\A$ and $b=\text{id} \colon J\to\A$.  Employing the notations from Section\ \ref{sec:notations}, we define {\it symmetric
cumulants}
$\bcum[\es{b}{I}]$ for  $I\in\I(J)$ by setting $\bcum[\emptyset]:=0$ and then
doing induction in $|I|\ge 1$ to the recursion relation 
\begin{align}
    \label{eq:m2k}
    \E[b^I] = \sum_{\pi\in\p(I)}^{} \prod_{A\in\pi}^{} \bcum[\es{b}{A}]\,,
\end{align}
where $\p(I)$ denotes the collection of all partitions of the set $I$.
Then for  any index set $J$ and any map $b\colon J\to \A$ the equation \eqref{eq:m2k} holds for all $I\in\I(J)$.  Here, $I=\emptyset$ is a special case discussed below.
Symmetric cumulants may thus be considered
as a map $\bcum\colon \I(\A)\to\C$.
For the rest of this Section \ref{sec:sym}, we assume that 
 $J$ is any index set and $b\colon J\to \A$ any map.
Then
$\bcum[\es{b}{\cdot}]$ becomes a map from $\I(J)$ to $\C$.
The motivation for a more general choice of $b$ and  $J$ becomes apparent in the applications in Section \ref{sec:hie_bos_fer}.

Equation \eqref{eq:m2k} is called the moments-to-cumulants formula.
This definition is also used in \cite{brattelirobinson2}, where the 
name \defem{bosonic truncated correlation functions} is used over cumulants.
In the nondegerate case, the moments to cumulants formula \eqref{eq:m2k} may be rewritten as
\begin{align}
    \bcum[\es{b}{I}] = \E[b^I] - \sum_{\substack{\pi\in\p(I)\\ |\pi|>1 }}^{} \prod_{A\in\pi}^{} \bcum[\es{b}{A}]\,, \qquad I\in \I(J)\,,\ I\ne \emptyset \,.
\end{align}
If $\pi\in\p(I)$,
we call $A\in\pi$ a \defem{cluster} in the partition $\pi$.
Since we insist that also for $I=\emptyset$
the formula \eqref{eq:m2k} holds with the standard definitions of vacuous
products and sums, we
should define then $\p(\emptyset):=\{\emptyset\}$,
since for $\pi=\emptyset$ the condition $A\in \pi$ is false and thus vacuously
$\prod_{A\in \pi}^{} \left(\cdot\right):= 1$.  Hence, with this definition the
right hand side of \eqref{eq:m2k} evaluates to $1= \E[b^\emptyset]$.

We will be also interested in permuted cumulants. Recall from \eqref{eq:indexpermutation} how permutations act on index sequences. Given a
permutation $P\colon\N\to\N$ and a index sequence $I\in\I(J)$ also
$P[I]$ is in $\I(J)$ and thus
\begin{align}
    \label{eq:m2kP}
    \E[b^{P[I]}] =
    \sum_{\pi_0\in\p(P[I])}^{}\prod_{A\in\pi_0}^{} \bcum[\es{b}{A}]
    =        \sum_{\pi\in\p(I)}^{} \prod_{A\in\pi}^{} \bcum[\es{b}{P[A]}]\,,
\end{align}
where in the second equality we have changed summation variables from
$\pi_0\in\p(P[I])$ to $\pi = \{P^{-1}[A]: A\in\pi_0\}$.  This clearly
results in a bijection between $\p(P[I])$ and $\p(I)$, the inverse given
by the map $P(\pi) := \{P[A]: A\in\pi\}$ for $\pi\in\p(I)$.  Hence, the
corresponding products of cumulants are mapped as stated above.

Whenever the underlying map $b\colon J\to\A$ is clear, we will use the shorthand notation 
\begin{align}
    \label{eq:cum_notation0}
    \bcum[\es{b}{I}]=\bcum[I]\,, \qquad I\in\I(J)\,.
\end{align}
It should be mentioned that $\bcum[I]$ notices only the relative order of the position 
labels in $I$, i.e., if $f:\N\to\N$ is strictly increasing, then 
\begin{align}
    \label{eq:cumorder}
    \bcum[I] = 
    \bcum[f[I]]\,,
\end{align}
for all $I\in\I(J)$, where $f[I] = \{ (f(\ell),j)\ : \ (\ell,j)\in I\}\in\I(J)$. This follows recursively from
\eqref{eq:m2k} and $b^I=b^{f[I]}$. Therefore, the following notations 
are justified, 
\begin{align}
    \label{eq:cum_notation}
    \bcum[b(j_1),  \dots, b({j_n})] 
    =
    \bcum[j_1,  \dots, {j_n}] 
    =
    \bcum[I]
    =
    \bcum[\es{b}{I}]
    ,
\end{align}
where $I=\{(\ell_1,{j_1}),(\ell_2,{j_2}),\dots, (\ell_n,{j_n})\}$ and $\ell_1<\ell_2<\dots<\ell_n$.
The first two cumulants are 
\begin{align*}
 &\bcum[b_1] = \E[b_1]\,, \qquad \bcum[b_1, b_2]
 =  \E[b_1 b_2]
-
\E[b_1]
\E[b_2]
 \,. 
\end{align*}

\begin{remark}
    Note that the symmetric cumulants \eqref{eq:m2k} are different from the free cumulants 
    used in free probability. In \eqref{eq:m2k}, the set $\p(I)$ contains all partitions of $I$, while in the 
    definition of free cumulants one uses only the non-crossing partitions.  For
    detailed definition and properties of free cumulants see
    \cite{Speicher1994,NicaSpeicher2006}.
\end{remark}

\subsection{Symmetric Wick polynomials}
We define Wick polynomials $\bwick{b^I}$  by setting $\bwick{b^\emptyset}:=\id$
and for $I\in\I(J)$, $I\ne\emptyset$, define them recursively by requiring
\begin{align}
    \label{eq:wick}
    \bwick{b^I}:=b^I-\sum_{\emptyset\ne E\subset I}^{} \E[b^E]\bwick{b^{I\setminus E}}.
\end{align}
The lowest order examples are
\begin{align*}
 &\bwick{b_1} = b_1-\E[b_1]\id\,, \qquad 
 \bwick{b_1 b_2}
 =  b_1 b_2  
 -
 \E[b_1]{b_2}
 -
 \E[b_2]{b_1}
 +
 2\E[b_1]\E[ b_2]\id  
 -
 \E[b_1 b_2]\id  
 \,.
\end{align*}
By induction, one can prove that $\E[\bwick{b^I}]=0$ if $I\ne \emptyset$.
Also, it should be stressed that the notation borrowed from
physics is somewhat misleading: the Wick polynomial $\bwick{b^I}$ depends on
the full sequence $I$, not only on the algebra element $b^I$.

As previously discussed, we need to be careful with permutations of polynomials.
Hence, for each $I\in\I(J)$ we define the abstract symmetric Wick polynomial
$\bwick{\es{b}{I}}\in\ves(\I(J))$ by setting $\bwick{\es{b}{\emptyset}}=\es{b}{\emptyset}=\emptyset$
and for $I\ne \emptyset $ recursively by 
\begin{align}
    \label{eq:wickseq}
    \bwick{\es{b}{I}}:= 
    \es{b}{I}-\sum_{\emptyset\ne E\subset I} \E[b^E]\bwick{\es{b}{I\setminus E}}\,.
\end{align}
It is straightforward to check that
\begin{align}
    \label{eq:wickseq_is_wick}
    \Xi\bwick{\es{b}{I}}=\bwick{b^I}
\end{align}
for all $I\in \I(J)$.

\begin{remark}
    \label{rem:samelenght_wick}
   An abstract Wick polynomial defined this way contains sums of element 
   sequences of different lengths. The definition \eqref{eq:wickseq} is sufficient for this paper,
   but we note that this can be inconvenient when multiplying with an abstract
   Wick polynomial from the left. The reason being that the
   different length sequences result in different sized shifts in the
   multiplication \eqref{eq:opseq_prod}  due to the definition
   \eqref{eq:glueing} of gluing sequences. 
   For example, if $I=\{(1,j)\}$ and $I'=\{(1,j')\}$ then 
   \begin{align}
       \nonumber
       \bwick{\es{b}{I}} \es{b}{I'} 
       &=
       \left( \es{b}{I}-\E[b(j)]\es{b}{\emptyset} \right) 
        \es{b}{I'} 
        % \\ \nonumber        &
       =
        \es{b}{I+I'}-\E[b(j)]\es{b}{I'} 
        \\
       \label{eq:example_wick}
       &=
       \{(1,j),(2,j')\}-\E[b(j)]\{(1,j')\}, 
   \end{align}
   and therefore the position label of $j'$ is $2$ in the first term and $1$ in the second term.

   For this reason, it may be convenient to make the sequences have the same
   length by adding the identity element $\id$ of the underlying algebra $\A$.
   If $\id \not\in b(J)$, we can always extend $J$ and $b$, by adding a new index $j_{\id}$ to $J$
   and setting $b(j_{\id}) =\id$.
   For $I\in\I(J)$ define $\es{\id}{I}=\{(\ell,\id)\ \colon \ (\ell,j)\in I\}$ and
   %$\id^I=\{(\ell,\id)\ \colon \ (\ell,j)\in I\}$ and
   \begin{align}
        \label{eq:samelenght_wick0}
       \bwick{\es{b}{\emptyset;I}}=\es{\id}{I}\, .
   \end{align}
   
   Let $I\in\I(J)$ and $U\subset I$, $U\neq\emptyset$. We define
   $\bwick{\es{b}{U;I}}$ recursively in $|I|$ and $|U|$ by
    \begin{align}
        \label{eq:samelenght_wick}
        \bwick{\es{b}{U;I}}:= 
        \es{b}{U}\cup \es{\id}{I\setminus U}-\sum_{\emptyset\ne E\subset U} \E[b^E]
        (\bwick{\es{b}{U\setminus E;U}}\cup\es{\id}{I\setminus U}).
    \end{align}
    For example, by \eqref{eq:samelenght_wick0} for $|U|=1$ we have
    \begin{align*}
        \bwick{\es{b}{U;I}}
        &= 
        \es{b}{U}\cup \es{\id}{I\setminus U}- \E[b^{U}]
        \bwick{\es{b}{\emptyset;U}}\cup\es{\id}{I\setminus U}
        \\
        &=
        \es{b}{U}\cup \es{\id}{I\setminus U}- \E[b^{U}]
        \es{\id}{I}
    \end{align*} 
    It is straightforward to check that every "monomial" sequence $\es{b}{I'}$ in the abstract polynomial 
    $\bwick{\es{b}{U;I}}$ satisfies $\pr_1(\es{b}{I'})=\pr_1(I)$. In addition, there 
    holds 
    \begin{align}
        \label{eq:samelenght_wickseq_is_wick}
        \Xi\bwick{\es{b}{U;I}}=\bwick{b^{U}}
    \end{align}
    whenever $U\subset I\in \I(J)$.
\end{remark}

\subsection{Properties of symmetric cumulants and Wick polynomials}

We start by proving the following useful expression for cumulants and moments 
\begin{proposition}
    \label{prop:expression}
    Take $I\in\I(J)$ and fix $(\ell,j)\in I$. 
    \begin{align}
        \label{eq:expressionE}
        \E[b^{I}] 
        &= 
        \sum_{\substack{E\subset I,\\(\ell,j)\in E}}
        \E[b^{I\setminus E}]\bcum[E]
    \end{align}
    or equivalently 
    \begin{align}
        \label{eq:expressionK}
        \bcum[I] 
        &= 
        \E[b^{I}] 
        -
        \sum_{\substack{E\subsetneq I,\\(\ell,j)\in E}}
        \E[b^{I\setminus E}]\bcum[E].
    \end{align}
\end{proposition}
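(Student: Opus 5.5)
The plan is to start from the moments-to-cumulants formula \eqref{eq:m2k} for $I$ and regroup its partitions according to which cluster contains the distinguished element $(\ell,j)$. Every partition $\pi\in\p(I)$ has exactly one cluster $E\in\pi$ with $(\ell,j)\in E$. The remaining clusters $\pi\setminus\{E\}$ then form a partition of $I\setminus E$. Conversely, any choice of a subset $E\subset I$ with $(\ell,j)\in E$, together with any $\pi'\in\p(I\setminus E)$, gives the partition $\pi'\cup\{E\}\in\p(I)$. These two maps are inverse to each other, so we obtain a bijection
\[
\p(I)\;\longleftrightarrow\;\{(E,\pi') : E\subset I,\ (\ell,j)\in E,\ \pi'\in\p(I\setminus E)\}.
\]
The boundary case $E=I$ is covered by the convention $\p(\emptyset)=\{\emptyset\}$, so that $\pi'=\emptyset$ and $\pi=\{I\}$.

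Rewriting the sum in \eqref{eq:m2k} through this bijection gives
\[
\E[b^I]=\sum_{\substack{E\subset I\\(\ell,j)\in E}}\bcum[E]\sum_{\pi'\in\p(I\setminus E)}\prod_{A\in\pi'}\bcum[A].
\]
The inner sum is exactly the right hand side of \eqref{eq:m2k} applied to the index sequence $I\setminus E\in\I(J)$. This is legitimate because \eqref{eq:m2k} holds for every index sequence, including $\emptyset$, where it gives $\E[b^\emptyset]=1$. The inner sum therefore equals $\E[b^{I\setminus E}]$, which proves \eqref{eq:expressionE}.

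Here one must check that the cumulants $\bcum[A]$ for $A\subset I\setminus E$ are the same objects whether we view $A$ inside $I$ or inside $I\setminus E$. They are, because $\bcum[A]=\bcum[\es{b}{A}]$ depends only on the subsequence $A$ itself, with its original position labels, and by \eqref{eq:cumorder} only on their relative order. Likewise, $b^{I\setminus E}$ is the ordered product of the elements in $I\setminus E$, which is exactly what \eqref{eq:m2k} produces for that index sequence.

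For the equivalence with \eqref{eq:expressionK}, isolate the term $E=I$ in \eqref{eq:expressionE}. It equals $\E[b^\emptyset]\bcum[I]=\bcum[I]$, and moving the remaining terms to the other side gives \eqref{eq:expressionK}; the argument reverses in the same way. I expect no real obstacle here. The only point that needs care is the bookkeeping at the empty-set conventions and the observation that the position labels are preserved under taking subsequences.
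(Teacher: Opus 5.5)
Your proposal is correct and follows the paper's proof: the paper likewise groups the partitions in \eqref{eq:m2k} by the cluster $E$ containing $(\ell,j)$ and identifies the inner sum over $\p(I\setminus E)$ as $\E[b^{I\setminus E}]$ by applying \eqref{eq:m2k} again. It then gets \eqref{eq:expressionK} by solving for the $E=I$ term, using $\E[b^\emptyset]=1$. Your explicit checks of the bijection, the empty-set conventions and the preservation of position labels spell out details the paper leaves implicit.
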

\begin{proof}
    Moments to cumulants formula \eqref{eq:m2k} implies
    \begin{align*}
        \E[b^{I}] 
        &= 
        \sum_{\substack{E\subset I,\\(\ell,j)\in E}}
        \bcum[E]
        \left( 
            \sum_{\pi\in\p(I\setminus E)}
            \prod_{A\in\pi}
            \bcum[A]
        \right) .
    \end{align*}
    Using moments to cumulants \eqref{eq:m2k} again gives \eqref{eq:expressionE}. Then solving 
    $\bcum[I]$ from \eqref{eq:expressionE} implies \eqref{eq:expressionK}.
\end{proof}

Our first example is a version for symmetric cumulants of what happens with deterministic random variables, i.e., with random variables which are almost surely constant.
\begin{proposition}
    \label{prop:constant}
    Let $I\in\I(J)$, $|I|\ge2$. Suppose that there is $(\ell,j)\in I$ such that 
    $b(j)=c\id$ for some $c\in\C$. Then 
    \begin{align}
        \label{eq:constant}
        \bcum[I] = 0
        \, .
    \end{align}
\end{proposition}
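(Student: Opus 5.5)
Induction on $|I|\ge 2$, with Proposition \ref{prop:expression} used around the distinguished element $(\ell,j)$ for which $b(j)=c\id$. The key observation is that moments containing $b(j)$ factorize. Since $b(j)=c\id$ is central, for any $E\subset I$ with $(\ell,j)\in E$ we have
\begin{align*}
b^{E}=c\, b^{E\setminus \ell}.
\end{align*}
This holds even when $E\setminus\ell=\emptyset$, where both sides equal $c\id$.

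First record the order-one case. By \eqref{eq:m2k} with $|E|=1$ we get $\bcum[\{(\ell,j)\}]=\E[b(j)]=c\,\E[\id]=c$, using the normalization $\E[\id]=1$.

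Now apply \eqref{eq:expressionK} with the fixed element $(\ell,j)$:
\begin{align*}
\bcum[I]=\E[b^I]-\sum_{\substack{E\subsetneq I,\\(\ell,j)\in E}}\E[b^{I\setminus E}]\bcum[E].
\end{align*}
Split the sum into $E=\{(\ell,j)\}$ and the terms with $|E|\ge 2$. Each term with $|E|\ge2$ has $E\subsetneq I$ and contains $(\ell,j)$. So the induction hypothesis (the statement for all sequences of length in $[2,|I|-1]$) gives $\bcum[E]=0$ for these terms. Only the term $E=\{(\ell,j)\}$ survives, and it contributes $\E[b^{I\setminus \ell}]\cdot c$. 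Therefore
\begin{align*}
\bcum[I]=\E[b^I]-c\,\E[b^{I\setminus\ell}]=c\,\E[b^{I\setminus\ell}]-c\,\E[b^{I\setminus\ell}]=0.
\end{align*}
The base case $|I|=2$ is covered by the same computation, because then the only admissible $E\subsetneq I$ containing $(\ell,j)$ is $\{(\ell,j)\}$. Concretely, $\bcum[b_1,c\id]=c\E[b_1]-c\E[b_1]=0$.

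The only point needing care is the identity $b^I=c\,b^{I\setminus\ell}$ in the noncommutative setting. It holds because $c\id$ commutes with everything, so deleting it from any position in the ordered product \eqref{eq:prod} only produces the scalar factor $c$. The gap left in the position labels of $I\setminus\ell$ does not matter, since $b^{\cdot}$ depends only on the relative order of the labels (cf.\ \eqref{eq:cumorder}). Apart from this, the argument is a routine strong induction, and I do not expect a genuine obstacle.
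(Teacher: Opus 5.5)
Your proof is correct and follows essentially the same route as the paper: strong induction on $|I|$ using \eqref{eq:expressionK} around the distinguished element $(\ell,j)$. There, the induction hypothesis kills every term with $|E|\ge 2$, and the surviving singleton term cancels $\E[b^I]=c\,\E[b^{I\setminus\ell}]$. The only cosmetic difference is that you handle the base case $|I|=2$ through the same formula, whereas the paper uses the explicit second-order cumulant.
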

\begin{proof}
    We will do induction in $|I|$. Suppose $|I|=2$.
    Since $b^{I} = cb^{I\setminus\ell}$, $\E[\id]=1$
    and $\E$ is linear, the explicit definitions of the lowest order terms yield
    \begin{align}
        \label{eq:some_constant_estim}
        \bcum[I]   = 
        \E[cb^{I\setminus\ell} ]
        -
        \E[b^{I\setminus\ell}] 
        \E[c\id]
        =0
        \,.
    \end{align}
    Therefore, \eqref{eq:constant} holds for $|I|=2$. Take $n\in\N$, $n\ge2$.
    Suppose that \eqref{eq:constant} holds for any $|I|\leq n$. Consider some $I$ with 
    $|I|=n+1$. Then the induction hypothesis 
    implies that $\bcum[E]=0$ in 
    \eqref{eq:expressionK} if $2\leq|E|\leq n$, and hence 
    \eqref{eq:some_constant_estim} holds also for this $I$.
    Thus, induction implies \eqref{eq:constant}. 
\end{proof}

\begin{proposition}
    The symmetric cumulants are multilinear in the sense that 
    \begin{align}
        \nonumber
        \bcum[b(j_1),\dots,  \alpha b(i)+\beta b(k) ,\dots, b({j_n})] 
        &=
         \alpha
        \bcum[b(j_1),\dots,   b(i) ,\dots, b({j_n})] 
        \\
        \label{eq:bcum_multilinear}
        &\quad+
         \beta 
        \bcum[b(j_1),\dots,  b(k) ,\dots, b({j_n})] 
    \end{align}
    for all $j_1,\dots,j_n,i,k\in J$ and $\beta,\alpha\in\C$.
    Similarly, the symmetric Wick polynomials are multilinear.
\end{proposition}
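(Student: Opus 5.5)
Since the cumulants are defined with $J=\A$ and $b=\mathrm{id}$, every entry is an arbitrary algebra element. The claim is therefore that $\bcum[a_1,\dots,a_n]$ is multilinear as a function of $(a_1,\dots,a_n)\in\A^n$. I will argue by strong induction on $n$, using the recursion obtained from \eqref{eq:m2k} by isolating the one-block partition:
\begin{align*}
\bcum[a_1,\dots,a_n]=\E[a_1\cdots a_n]-\sum_{\substack{\pi\in\p(I)\\|\pi|>1}}\prod_{A\in\pi}\bcum[A],
\end{align*}
where $I=\{(1,a_1),\dots,(n,a_n)\}$.

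Fix a position $m$ and write $a_m=\alpha x+\beta y$. Let $I_x,I_y,I_z$ denote the sequences with $x$, $y$, $z=\alpha x+\beta y$ placed at position $m$. Partitions of these three sequences are in canonical bijection through their position labels, so I will sum over partitions $\pi$ of the label set $\{1,\dots,n\}$.

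For $n=1$, $\bcum[a]=\E[a]$ is linear because $\E$ is linear. For $n\ge2$, the moment $\E[a_1\cdots a_n]$ is multilinear, since the product in $\A$ is multilinear and $\E$ is linear. In each product $\prod_{A\in\pi}\bcum[A]$ with $|\pi|>1$, the label $m$ lies in exactly one block $A_m$, and $|A_m|<n$. By the induction hypothesis, $\bcum[A_m]$ evaluated at $z$ equals $\alpha$ times its value at $x$ plus $\beta$ times its value at $y$. The other factors do not involve position $m$, so they are the same for $I_x,I_y,I_z$. Hence each such product is linear in the $m$-th slot. Subtracting these linear pieces from the multilinear moment gives $\bcum[I_z]=\alpha\bcum[I_x]+\beta\bcum[I_y]$. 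The identification \eqref{eq:cum_notation} via \eqref{eq:cumorder} ensures the labels do not matter. The statement for general $J$ and $b$ follows immediately, because $\bcum[\es{b}{I}]$ is the same cumulant evaluated at the elements $b(j_\ell)$.

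For Wick polynomials I proceed in the same way by induction on $|I|$ using \eqref{eq:wick}, again with $J=\A$. Here $b^I$ is multilinear in $\A$. For each $\emptyset\neq E\subset I$, the position $m$ lies either in $E$ or in $I\setminus E$, but not both. In the first case $\E[b^E]$ is linear in slot $m$ and $\bwick{b^{I\setminus E}}$ does not depend on it. In the second case $\E[b^E]$ does not depend on slot $m$, and $\bwick{b^{I\setminus E}}$ is linear in it by the induction hypothesis, since $|I\setminus E|<|I|$. So every term is linear in slot $m$, and $\bwick{b^I}$ is multilinear.

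The main point that needs care, rather than a real obstacle, is making precise the bijection between partitions (respectively subsets) of $I_x$, $I_y$, $I_z$. The element sequences differ only in the second component at label $m$, so the correspondence must be set up through $\pr_1$. I will state this explicitly so that the term-by-term comparison is justified. The same care covers the abstract version \eqref{eq:wickseq} if it is needed, by linearity of $\Xi$.
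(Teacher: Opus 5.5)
Your proof is correct and essentially matches the paper's: strong induction on the sequence length via a moments-to-cumulants recursion together with multilinearity of $\E[b^I]$, and the same induction through \eqref{eq:wick} for the Wick polynomials. The only cosmetic difference is that the paper runs the induction with the single-distinguished-element recursion \eqref{eq:expressionK} instead of your full sum over partitions with $|\pi|>1$ from \eqref{eq:m2k}.
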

\begin{proof}
The multilinearity \eqref{eq:bcum_multilinear} follows recursively from
\eqref{eq:expressionK} together with the  multilinearity of $\E[b^I]$. Similarly we obtain multilinearity of the 
symmetric wick polynomials from \eqref{eq:wick}.
\end{proof}

For random variables, if a sequence contains one random variable which
  is independent from the rest, then the corresponding cumulant is zero.
  Although the meaning of ``independence'' is more difficult to capture in
  non-commutative setting, let us present below two factorization properties
  which are connected to vanishing of certain symmetric cumulants.  One
  special aspect about the results below is that they do not require full
  independence, it suffices that the cumulants are zero for subsequences of the
  reference sequence $I$.  The first case is motivated by properties of
  Gaussian distributions and the second case of independence.

\begin{proposition}
    \label{prop:independence}
     Fix $I\in\mathcal{I}(J)$ such that  $|I|\ge2$. Fix $(\ell,j_\ell)\in I$. 
     Then let 
     \[
         \mathcal{E}(n,I,\ell)=\{ E\subset I \colon |E|\ge n,\ (\ell,j_\ell)\in E\}.
     \]
     Then the following two conditions are equivalent:
     \begin{enumerate}
        \item[(1)]
        There holds
         \begin{align}
             \label{eq:E3_independence}
             \E[b^E]=&\E[b(j_\ell)] \E[b^{E\setminus \ell}] + 
             \sum_{\substack{p\subset E \\
             |p|=2,  (\ell,j_\ell)\in p}} \bcum[p]\E[b^{E\setminus p}] 
         \end{align}
         for all $E\in\mathcal{E}(3,I,\ell)$.
     \item[(1')] 
         There holds
         \begin{align}
             \label{eq:k3_independence}
             \bcum[E] =& 0 \quad 
         \end{align}
         for all $E\in\mathcal{E}(3,I,\ell)$.
     \end{enumerate}
\end{proposition}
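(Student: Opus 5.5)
The natural starting point is Proposition \ref{prop:expression}, applied to a subsequence $E\subset I$ with the fixed element $(\ell,j_\ell)\in E$ as the distinguished element. Since $E\in\I(J)$ and $b^E$ is just a product in the order of $E$, equation \eqref{eq:expressionE} gives
\begin{align*}
\E[b^{E}] = \sum_{\substack{F\subset E,\\(\ell,j_\ell)\in F}} \E[b^{E\setminus F}]\bcum[F].
\end{align*}
We split this sum by $|F|$. The term $|F|=1$ is $F=\{(\ell,j_\ell)\}$, which gives $\E[b^{E\setminus \ell}]\bcum[b(j_\ell)]=\E[b(j_\ell)]\E[b^{E\setminus\ell}]$. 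The terms with $|F|=2$ are exactly $\sum_{p}\bcum[p]\E[b^{E\setminus p}]$ over $p\subset E$, $|p|=2$, $(\ell,j_\ell)\in p$. Hence for every $E\ni(\ell,j_\ell)$ we have the identity
\begin{align*}
\E[b^E]-\E[b(j_\ell)] \E[b^{E\setminus \ell}] - \sum_{\substack{p\subset E \\ |p|=2,\ (\ell,j_\ell)\in p}} \bcum[p]\E[b^{E\setminus p}]
= \sum_{\substack{F\subset E,\ |F|\ge 3\\(\ell,j_\ell)\in F}} \E[b^{E\setminus F}]\bcum[F]. \tag{$*$}
\end{align*}
Note that every such $F$ lies in $\mathcal{E}(3,I,\ell)$, since $F\subset E\subset I$.

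\emph{(1') implies (1).} If all $\bcum[F]$ vanish for $F\in\mathcal{E}(3,I,\ell)$, the right side of $(*)$ is zero for every $E\in\mathcal{E}(3,I,\ell)$, which is exactly \eqref{eq:E3_independence}.

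\emph{(1) implies (1').} We use induction on $|E|$, for $E\in\mathcal{E}(3,I,\ell)$. Assume $\bcum[F]=0$ for all $F\in\mathcal{E}(3,I,\ell)$ with $|F|<|E|$; for $|E|=3$ this assumption is vacuous. Then in $(*)$ only the term $F=E$ survives on the right, and it equals $\E[b^\emptyset]\bcum[E]=\bcum[E]$ because $\E[\id]=1$. By (1) the left side of $(*)$ vanishes, so $\bcum[E]=0$, which completes the induction.

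There is no real obstacle here. The only points needing care are that Proposition \ref{prop:expression} applies to subsequences with the same labels (cumulants depend only on the relative order, by \eqref{eq:cumorder}), and that the $|F|=1$ term correctly reproduces $\E[b(j_\ell)]$ via $\bcum[b_1]=\E[b_1]$.
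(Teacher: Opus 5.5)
Your proof is correct and follows essentially the same route as the paper. Both expand $\E[b^E]$ via \eqref{eq:expressionE}, split off the $|F|=1$ and $|F|=2$ terms, and then induct on $|E|$ so that only the top term $F=E$ survives. The paper phrases this induction through the tail sums over $|U|\ge n$ with increasing threshold $n$, while you use strong induction on the vanishing of the cumulants directly; these are the same argument.
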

\begin{remark}
\label{rem:indepence}
     Note that Proposition \ref{prop:independence} implies that 
     $\bcum[E] = 0$ for all $E\in\mathcal{E}(2,I,\ell)$ if and only if $\E[b^E]= \E[b(j_\ell)] \E[b^{E\setminus \ell}]$ for all $E\in\mathcal{E}(2,I,\ell)$.
 \end{remark}
\begin{proof}
    By \eqref{eq:expressionE} we always have for $E\subset I$ that 
    \begin{align}
        \label{eq:proof_independence_sym}
        \E[b^E]=&\E[b(j_\ell)] \E[b^{E\setminus \ell}] + 
             \sum_{\substack{p\subset E \\
             |p|=2,  (\ell,j_\ell)\in p}} \bcum[p]\E[b^{E\setminus p}]
             +
             \sum_{\substack{U\subset E \\
             |U|\ge3,  (\ell,j_\ell)\in U}} \bcum[U]\E[b^{E\setminus U}].
    \end{align}
    Hence, the condition \eqref{eq:k3_independence} implies \eqref{eq:E3_independence}. 
    Suppose that \eqref{eq:E3_independence} holds. Then \eqref{eq:proof_independence_sym}
    implies for $n=3$ that  
    \begin{align}
        \label{eq:proof_independence_sym2}
             \sum_{\substack{U\subset E \\
             |U|\ge n,  (\ell,j_\ell)\in U}} \bcum[U]\E[b^{E\setminus U}]
             =
             0
             \quad \text{for all} \quad E\in\mathcal{E}(3,I,\ell).
    \end{align}
    Suppose \eqref{eq:proof_independence_sym2} holds for $n=k\ge3$.  Take any
    $E\in \mathcal{E}(3,I,\ell)$ with $|E|=k$. Then
    \eqref{eq:proof_independence_sym2} for $n=k$ implies
    $\bcum[E]=\bcum[E]\E[b^{E\setminus E}]=0$, since $U=E$ is the only subset
    of $E$ satisfying $|U|\ge k$. Therefore \eqref{eq:proof_independence_sym2}
    holds for $n=k+1$. Hence, by induction
    we obtain that the condition \eqref{eq:k3_independence} holds.
\end{proof}

\begin{proposition}
     Fix $I\in\mathcal{I}(J)$ such that  $|I|\ge2$. Fix a pair $p\subset I$, $|p|=2$. 
     Then let for $n\ge 2$,
     \[
     \mathcal{E}(I,p) = \{E\subset I\colon  E\cap p\ne \emptyset\}\,, \qquad
         \mathcal{E}_1(n,I,p) = \{E\subset I\colon |E|\ge n, |E\cap p|=1\}.
     \]
     Then the following two conditions are equivalent:
     \begin{enumerate}
        \item[(1)] There holds
         \begin{align}
             \label{eq:Epair_independence}
             \E[b^E]=&\E[b^{p\cap E}] \E[b^{E\setminus p}] +
             \sum_{\substack{U\subset E\\ p\subset U, |U|\ge 3}}
             \bcum[U]\E[b^{E\setminus U}] 
         \end{align}
         for all $E\in\mathcal{E}(I,p)$.
     \item[(1')]
         There holds
         \begin{align}
             \label{eq:kpair_independence}
             \bcum[E] =& 0 
         \end{align}
         for all $E\in\mathcal{E}_1(2,I,p)$.
     \end{enumerate}
\end{proposition}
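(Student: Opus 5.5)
Throughout, write $p=\{(\ell_1,j_1),(\ell_2,j_2)\}$. The plan is to reduce both conditions to the vanishing of a single ``remainder'' $R(E)$, using the moments-to-cumulants formula \eqref{eq:m2k}. For $E\in\mathcal{E}(I,p)$, split the partitions $\pi\in\p(E)$ into three classes:
\begin{enumerate}
\item[(a)] every cluster of $\pi$ meeting $p$ is either a singleton $\{(\ell_i,j_i)\}$ or, when $p\subset E$, the pair $p$ itself;
\item[(b)] some cluster $U\in\pi$ satisfies $p\subset U$ and $|U|\ge 3$;
\item[(c)] some cluster $U\in\pi$ satisfies $|U|\ge2$ and $|U\cap p|=1$.
\end{enumerate}
These classes are disjoint and exhaust $\p(E)$.

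\emph{Step 1: identify the class sums.} For class (a), factor the sum into a sum over partitions of $p\cap E$ times a sum over partitions of $E\setminus p$. Applying \eqref{eq:m2k} to each factor, class (a) contributes exactly $\E[b^{p\cap E}]\,\E[b^{E\setminus p}]$. This works both when $p\cap E$ is a singleton and when $p\subset E$, since $\E[b^p]$ is the sum of $\bcum[p]$ and the product of the two singleton cumulants. For class (b), sum first over the distinguished cluster $U$ and then apply \eqref{eq:m2k} on $E\setminus U$. This gives $\sum_{U\subset E,\ p\subset U,\ |U|\ge3}\bcum[U]\E[b^{E\setminus U}]$. Denote the class (c) sum by $R(E)$. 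Then, for every $E\in\mathcal{E}(I,p)$,
\[
\E[b^E]=\E[b^{p\cap E}]\E[b^{E\setminus p}]+\sum_{\substack{U\subset E\\ p\subset U,\,|U|\ge3}}\bcum[U]\E[b^{E\setminus U}]+R(E).
\]
Hence condition (1) holds if and only if $R(E)=0$ for all $E\in\mathcal{E}(I,p)$.

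\emph{Step 2: (1') implies (1).} Every partition in class (c) contains a cluster $U\subset I$ with $|U|\ge2$ and $|U\cap p|=1$, so $U\in\mathcal{E}_1(2,I,p)$. Under (1') this cumulant vanishes, so every term of $R(E)$ is zero and (1) follows.

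\emph{Step 3: (1) implies (1').} Only sets $E$ with $|E\cap p|=1$ are needed, say $E\cap p=\{(\ell_1,j_1)\}$. For such $E$, the bad cluster of a class (c) partition is exactly the cluster containing $(\ell_1,j_1)$. It can be any $U\subset E$ with $(\ell_1,j_1)\in U$ and $|U|\ge2$, and the condition $|U\cap p|=1$ is automatic. Grouping by this cluster as in Proposition~\ref{prop:expression} gives
\[
R(E)=\sum_{\substack{U\subset E\\ (\ell_1,j_1)\in U,\ |U|\ge2}}\bcum[U]\E[b^{E\setminus U}].
\]
Now induct on $|E|$, mirroring the proof of Proposition~\ref{prop:independence}. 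For $|E|=2$ the sum is $R(E)=\bcum[E]$, so (1) forces $\bcum[E]=0$. Assume $\bcum[U]=0$ for all $U\in\mathcal{E}_1(2,I,p)$ with $|U|<|E|$. Every $U\subsetneq E$ in the sum lies in $\mathcal{E}_1(2,I,p)$, so all those terms vanish. What remains is $R(E)=\bcum[E]\,\E[\id]=\bcum[E]$, and (1) gives $\bcum[E]=0$.

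The main obstacle is the bookkeeping in Step 1: checking that classes (a), (b) and (c) are genuinely disjoint and exhaustive, and that class (a) recombines exactly to $\E[b^{p\cap E}]\E[b^{E\setminus p}]$ in both cases $|p\cap E|=1$ and $|p\cap E|=2$. The recombination needs the fact that $\bcum$ and products $b^A$ depend only on the relative order of labels, by \eqref{eq:cumorder}. This is what lets sub-index-sequences be treated freely without reindexing.
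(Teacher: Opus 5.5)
Your proof is correct and follows the paper's route: the same three-way split of partitions of $E$ (the pair-or-singletons class, clusters strictly containing $p$, and clusters meeting $p$ in exactly one point), the same immediate direction (1')$\Rightarrow$(1), and an induction for the converse. The only difference is cosmetic. In Step 3 you restrict to $|E\cap p|=1$, collapse $R(E)$ to a sum over the cluster containing the point of $p$, and induct on $|E|$. The paper instead inducts on the size threshold $n$ in $\mathcal{E}_1(n,E,p)$. Your version is slightly cleaner and makes explicit that (1) is only needed for those $E$.
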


\begin{proof}
    Given some $E\in\mathcal{E}(I,p)$, we first use moments to cumulants formula \eqref{eq:m2k},
    and split the partitions into three cases: (1) there are clusters whose union equals $p\cap E$,
    (2) $p$ is a proper subset of a cluster, and (3) the remaing cases, which can be identified by the fact that at least one
    cluster in the partition belongs to $\mathcal{E}_1(2,E,p)$.  We obtain
    \begin{align}
        \nonumber
         \E[b^E]
         &=
         \left( 
             \bcum[p\cap E]+\ind(p\subset E)\prod_{(\ell,j_\ell)\in p}\bcum[j_\ell]
         \right) 
         \left(
             \sum_{\pi\in\p(E\setminus p)}
             \prod_{A\in\pi}
             \bcum[A]
         \right)
         \\
        \nonumber
         &\quad+
         \sum_{\substack{U\subset E\\ p\subset U,|U|\ge3}}
         \bcum[U]
         \left( 
             \sum_{\pi\in \p(E\setminus U)}
             \prod_{A\in\pi}
             \bcum[A]
         \right) 
         +
             \sum_{\pi\in \p(E)}
         \ind({\mathcal{E}_1(2,E,p)\cap \pi\ne\emptyset})
             \prod_{A\in\pi}
             \bcum[A]
         \\
         &=
         \E[b^{p\cap E}]
         \E[b^{E\setminus p}]
         +
         \sum_{\substack{U\subset E\\ p\subset U,|U|\ge3}}
         \bcum[U]
         \E[b^{E\setminus U}]
         +
             \sum_{\pi\in \p(E)}
             \ind({\mathcal{E}_1(2,E,p)\cap \pi\ne\emptyset})
             \prod_{A\in\pi}
             \bcum[A]
        \label{eq:proof_pair_independence_sym}
    \end{align}
    where at the last step we used moments to cumulants \eqref{eq:m2k} to rewrite the brackets 
    as expectations. Hence, condition \eqref{eq:kpair_independence} implies condition \eqref{eq:Epair_independence}.

    Suppose that condition \eqref{eq:Epair_independence}
    holds. Then \eqref{eq:proof_pair_independence_sym} implies for $n=2$ that 
    \begin{align}
        \label{eq:proof_pair_independence_sym2}
             \sum_{\pi\in \p(E)}
             \ind({\mathcal{E}_1(n,E,p)\cap \pi\ne\emptyset})
             \prod_{A\in\pi}
             \bcum[A]
         =0 \quad \text{for all} \quad E\in \mathcal{E}( I,p).
    \end{align}
    Suppose that \eqref{eq:proof_pair_independence_sym2} holds for $n=k\ge2$
    and $\bcum[E]=0$ for $E\in  \mathcal{E}_1(2, I,p)$, $|E| < k$.  Take any
    $E\in  \mathcal{E}_1(2, I,p)\subset\mathcal{E}(I,p)$, $|E|=k$. Then
    \eqref{eq:proof_pair_independence_sym2} for $n=k$ implies $\bcum[E]=0$,
    since $\mathcal{E}_1(k,E,p)=\{E\}$. This means that $\bcum[E]=0$ for any
    $E\in  \mathcal{E}_1(2, I,p)$, $|E|\leq k$. Therefore,
    \eqref{eq:proof_pair_independence_sym2} holds for $n=k+1$. Thus, by induction the condition
    \eqref{eq:kpair_independence} holds.
\end{proof}

The following two result point out the main benefit of using the symmetric
cumulants for  commuting and canonically commuting elements in the
algebra: the cumulants are  then completely commuting, apart possibly from
order two.

\begin{proposition}
    \label{prop:cum_per_invariance}
    Suppose $I\in\I(J)$ is an index sequence for elements which satisfies
    the following generalized canonical commutation relations: if $a,a'\in
    \pr_2(\es{b}{I})$, then there is a scalar $c(a,a')\in \C$ such that $[a,a']=c(a,a')\id$.
    Suppose $P$ is a permutation of $I$. If $|I|\ne2$, then
    \begin{align}
        \label{eq:cum_per_invariance}
        \bcum[I]=\bcum[P[I]]
        \,.
    \end{align}
    If the elements are pairwise commuting, $c(a,a')=0$ for all $a,a'\in
    \pr_2(\es{b}{I})$, then \eqref{eq:cum_per_invariance_F} holds also if $|I|=2$.
\end{proposition}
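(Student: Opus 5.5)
We argue by strong induction on $|I|$, reducing an arbitrary permutation to a swap of two neighbouring elements of $I$. By \eqref{eq:cumorder}, $\bcum[P[I]]$ depends only on the relative order of the position labels in $P[I]$. Any reordering of the $n=|I|$ elements of $I$ is a composition of transpositions of two elements that are neighbours in the current order. So it suffices to prove $\bcum[I]=\bcum[Q[I]]$ when $Q$ swaps the labels $\ell<\ell'$ of two elements $(\ell,j),(\ell',j')\in I$ with no label of $I$ strictly between them.

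The hypothesis passes to every such reordering and to every subsequence $E\subset I$, since $\pr_2(\es{b}{E})\subset\pr_2(\es{b}{I})$. The cases $|I|\le 1$ are trivial, so assume $n\ge 2$ and that the claim holds for all shorter sequences satisfying the hypothesis.

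Write $a=b(j)$, $a'=b(j')$, $c=c(a,a')$ and $p=\{(\ell,j),(\ell',j')\}$. Since the two factors are adjacent in the product, $[a,a']=c\id$ gives
\begin{align*}
b^{Q[I]}=b^I-c\,b^{I\setminus p}\,,
\qquad\text{hence}\qquad
\E[b^{Q[I]}]=\E[b^I]-c\,\E[b^{I\setminus p}]\,.
\end{align*}
Next I expand the left side with \eqref{eq:m2kP}, as $\sum_{\pi\in\p(I)}\prod_{A\in\pi}\bcum[Q[A]]$, and compare cluster by cluster with the expansion \eqref{eq:m2k} of $\E[b^I]$.
\begin{itemize}
\item If $A$ contains at most one of $\ell,\ell'$, then the relative order of labels in $Q[A]$ equals that in $A$, because no label of $I$ lies between $\ell$ and $\ell'$. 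So $\bcum[Q[A]]=\bcum[A]$ by \eqref{eq:cumorder}.
\item If $p\subset A$ with $3\le|A|<n$, the induction hypothesis gives $\bcum[Q[A]]=\bcum[A]$.
\item If $A=p$, then the definition of the order-two cumulant gives $\bcum[a',a]=\E[a'a]-\E[a']\E[a]=\bcum[a,a']-c$.
\item The single remaining cluster is $A=I$ itself.
\end{itemize}

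When $n\ge3$, the clusters $A=p$ and $A=I$ are different, and summing gives
\begin{align*}
\E[b^{Q[I]}]=\E[b^I]-\bcum[I]+\bcum[Q[I]]-c\sum_{\pi\in\p(I\setminus p)}\prod_{B\in\pi}\bcum[B]\,.
\end{align*}
By \eqref{eq:m2k}, the last sum equals $\E[b^{I\setminus p}]$. Comparing with the exact identity above, the $c$-terms cancel and we get $\bcum[Q[I]]=\bcum[I]$, which closes the induction.

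When $n=2$ the clusters $A=p$ and $A=I$ coincide, and the same computation only gives $\bcum[Q[I]]=\bcum[I]-c$. This is exactly why order two is excluded in general. In the pairwise commuting case $c=0$, so invariance also holds for $|I|=2$.

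The main obstacle is the bookkeeping in the cluster comparison: checking that clusters containing only one of the swapped elements keep their relative order, which is the reason for restricting to neighbouring transpositions. Once that is in place, the identification of the leftover $c$-term with $\E[b^{I\setminus p}]$ through \eqref{eq:m2k} is mechanical.
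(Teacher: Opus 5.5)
Your proposal is correct and follows essentially the same route as the paper. Both reduce to swaps of neighbouring elements, then expand both moments with the moments-to-cumulants formula. Clusters other than the pair and the full set are handled by \eqref{eq:cumorder} and the induction hypothesis. Finally, the $c$-shift of the order-two cumulant is matched against the term $c\,\E[b^{I\setminus p}]$ coming from the commutation relation.

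The only cosmetic difference is that the paper checks $|I|=3$ as a separate base case. Your induction covers that case directly, since for $|I|=3$ there are no clusters $A$ with $p\subset A$ and $3\le |A|<|I|$.
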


\begin{proof}
    Note that the case $|I|=1$ is trivial. If $c(a,a')=0$, then $aa'=a'a$ and therefore 
    \begin{align}
        \bcum[a,a']  
        = \E[aa']-\E[a]\E[a']
        =
        \bcum[a',a]  \,,
    \end{align}
    where shorthand notation \eqref{eq:cum_notation} was used. Hence,
    \eqref{eq:cum_per_invariance_F} holds also for $|I|=2$ in the 
    case $c(a,a')=0$.
    We will now focus on the case $|I|\ge 3$.

    By \eqref{eq:cumorder}, without loss of generality, we may assume that $\pr_1
    I=\{1,2,\dots,|I|\}$ and $P(\pr_1I)=\pr_1I$.
    The remaining claim for even $|I|\ge 4$
    has been proven in \cite[Theorem A.1]{lukkarinen_not_2009}. We will give a proof that works for any $|I|\ge3$.
    Note that since $|I|<\infty$ and all permutations of finite sets are compositions of swaps, it is
    sufficient to prove \eqref{eq:cum_per_invariance} for all swaps. Fix $m\in\{1,2,\dots,|I|-1\}$. 
    Let $P$ be the permutation that swaps the $m$th and $(m+1)$th element, namely, 
    $P(m)=m+1$, $P(m+1)=m$ and $P(\ell)=\ell$, otherwise.

    We will prove \eqref{eq:cum_per_invariance} by using induction in $|I|$.
    For the first case $|I|=3$ we have $I=\{(1,j_1),(2,j_2),(3,j_3)\}$. 
        Let $i$ be the integer for
    which $\{i,m,m+1\}=\{1,2,3\}$. Then moments to cumulants implies
    \begin{align}
        \E[b^I]-\E[b^{P[I]}] =& \bcum[I]-\bcum[P[I]]
        + \bcum[{j_m},{j_{m+1}}]\bcum[j_i]
        - \bcum[{j_{m+1}},{j_m}]\bcum[j_i],
    \end{align}
    where the notation \eqref{eq:cum_notation} is used.
    The commutation relation implies
    $\E[b^{P[I]}]=\E[b^I]+c_{m+1,m}\E[b(j_i)]$ and $\bcum[{j_{m+1}},{j_m}]=
    \bcum[{j_m},{j_{m+1}}]+c_{m+1,m}$. Here
    $c_{\ell,\ell'}=c(b(j_{\ell}),b(j_{\ell'}))$. Since $\bcum[j_i]=\E[b(j_i)]$, we
    obtain $\bcum[I]=\bcum[P[I]] $. This proves that
    \eqref{eq:cum_per_invariance} holds for all swaps when $|I|=3$

   Fix $n\ge3$ and suppose that \eqref{eq:cum_per_invariance} holds for all swaps when
   $3\leq|I|\leq n$. Suppose $|I|=n+1$. 
   Let $A_0=\{(m,j_m),(m+1,j_{m+1})\}$. Then moments-to-cumulants together with \eqref{eq:cum_per_invariance}
   for $|A|<n+1$  implies 
   \begin{align*}
       \E[b^I]-\E[b^{P[I]}]
       &= \bcum[I]
        +
        \sum_{\substack{\pi\in\p(I)\\|\pi|>1}}
        \ind(A_0\in \pi)
        \bcum[{j_m},{j_{m+1}}]
        \prod_{\substack{A\in\pi\\A\ne A_0}}\bcum[A]
       \\
        &
         -\bcum[P[I]]
        -
        \sum_{\substack{\pi\in\p(P[I])\\|\pi|>1}}
        \ind(A_0\in \pi)
        \bcum[{j_{m+1}},{j_m}]
        \prod_{\substack{A\in\pi\\A\ne A_0}}\bcum[A]
        \\
        &= \bcum[I]-\bcum[P[I]]
       -c_{m+1,m} \E[b^{I\setminus m\setminus m+1}],
   \end{align*}
   where in the last step we used $\bcum[{j_{m+1}},{j_m}]=
   \bcum[{j_m},{j_{m+1}}]+c_{m+1,m}$, the fact that $P$ swaps only $m$ and $m+1$ and 
    moments to cumulants to obtain $\E[b^{I\setminus m\setminus m+1}]$. Using 
    the commutation relation $[b(i_{m+1}),b(i_m)]=c_{m+1,m}\id$ gives $\E[b^{P[I]}]
    =\E[b^I]+c_{m+1,m}\E[b^{I\setminus m\setminus m+1}]$.
    Hence, $\bcum[\es{b}{I}]=\bcum[P\es{b}{I}]$. This proves \eqref{eq:cum_per_invariance}
    for all swaps which implies it for all permutations.
\end{proof}

\begin{proposition}
    \label{prop:reg_m2k} \textup{(Regularized symmetric moments to cumulants formula)}
    Fix $L\in\N_0$ and take index sets $I',I_1,\dots,I_L\in\I(J)$.
    Let $I=I'+\sum_{l=1}^{L}I_l$. Then
    for any permutation $P$ there holds
    \begin{align}
        \label{eq:regm2k}
        \E\left[\Xi \biggl(\es{b}{P[I']}\times\Big(\bigtimes_{l=1}^L\bwick{\es{b}{P[I_l+n_l]}}\Big)\biggr)\right]=\sum_{\pi\in\p(I)}^{}
        \prod_{A\in\pi}^{}\ind(A\not\subset I_l+n_l\ \forall l) \bcum[P[A]],
    \end{align}
    where $n_l=\max(\pr_1 (I'+\sum_{i=1}^{l-1}I_{i}))$ so that $I_l+n_l\subset I$. In particular, if $P$ is the identity permutation, 
    then 
     \begin{align}
        \E\left[{b}^{I'}\prod_{l=1}^L\bwick{{b}^{I_l}}\right]=\sum_{\pi\in\p(I)}^{}
        \prod_{A\in\pi}^{}\ind(A\not\subset I_l+n_l\ \forall l) \bcum[A].
    \end{align}
\end{proposition}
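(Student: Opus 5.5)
The plan is to prove the abstract-polynomial identity \eqref{eq:regm2k} by induction on the total size $N=\sum_{l=1}^L|I_l|$ of the Wick-ordered blocks. The last claim then follows by taking $P=\mathrm{id}$ and applying \eqref{eq:wickseq_is_wick} together with the homomorphism property of $\Xi$ for the ordinary product, since the blocks $I'$, $I_1+n_1,\dots$ are consecutive.

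First I will introduce a compact notation. For a subset $F\subset I$, the term $\es{b}{P[F]}$ collapses under $\Xi$ to $b^{P[F]}$, and by \eqref{eq:m2kP} we have
\[
\E[b^{P[F]}]=\sum_{\pi\in\p(F)}\prod_{A\in\pi}\bcum[P[A]].
\]
When $N=0$ (all $I_l$ empty, or $L=0$), the left side is $\E[b^{P[I']}]$ and the indicator is identically one, so the statement is exactly \eqref{eq:m2kP}. For the induction step I expand each $\bwick{\es{b}{P[I_l+n_l]}}$ once using the recursion \eqref{eq:wickseq}. One has to check that $P$ commutes with the recursion, i.e.\ that $\bwick{\es{b}{P[K]}}=\es{b}{P[K]}-\sum_{\emptyset\ne E\subset K}\E[b^{P[E]}]\bwick{\es{b}{P[K\setminus E]}}$. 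This holds because the recursion is indexed by subsets and $\E[b^{P[E]}]$ only depends on the relative order of the labels, cf.\ \eqref{eq:cumorder}. Since all label sets are disjoint, the interlacing product is multilinear and well defined.

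It is cleaner to pick a single block, say $l=L$ with $I_L\neq\emptyset$, and expand only that one. Write $K=I_L+n_L$. The left side becomes the term with $K$ moved into the "plain" part, i.e.\ $\E[\Xi(\es{b}{P[I'\cup K]}\times\bigtimes_{l<L}\cdots)]$, minus $\sum_{\emptyset\ne E\subset K}\E[b^{P[E]}]$ times the same expression with the block $K\setminus E$ Wick ordered and $E$ removed. Both pieces have strictly smaller Wick size, so the induction hypothesis applies. The hypothesis must be formulated for arbitrary disjoint label sets rather than consecutive glued blocks, and \eqref{eq:cumorder} lets me relabel to achieve this. The first piece gives $\sum_{\pi\in\p(I)}$ with the constraint $A\not\subset I_l+n_l$ only for $l<L$. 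In the second piece, I expand $\E[b^{P[E]}]$ via \eqref{eq:m2kP} into partitions of $E$, and the hypothesis gives partitions $\sigma$ of $I\setminus E$ with no cluster inside $K\setminus E$ or inside the other blocks.

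The key combinatorial step, and the main obstacle, is the cancellation. A partition $\pi$ of $I$ that satisfies the constraints for $l<L$ is determined by the set $E$ given by the union of its clusters contained in $K$, together with the remainder. The second piece enumerates exactly the pairs with $E\ne\emptyset$, with each cluster inside $K$ assigned to $E$ and no remaining cluster inside $K\setminus E$. Hence the subtraction removes precisely those $\pi$ that have at least one cluster inside $K$, and what is left is the right side of \eqref{eq:regm2k}. The main care is needed in verifying that this decomposition $\pi\leftrightarrow(E,\pi|_E,\pi|_{I\setminus E})$ is a bijection. It is also worth noting that clusters contained in the other blocks $I_l+n_l$ are also excluded from $\pi|_E$ only if they lie in $K$, which is automatic since the blocks are disjoint.
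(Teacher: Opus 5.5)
Your proposal is correct and is essentially the paper's proof: expand one Wick block by \eqref{eq:wickseq}, group the partitions of $I$ by the union $E$ of their clusters lying inside that block, and let the $E\neq\emptyset$ terms cancel against the subtracted terms of the recursion. The paper organizes this differently in one respect. It uses a nested induction on $L$ and on $|I_1|$ and expands the first block $I_1+n_1$, which is adjacent to $I'$, so that $\es{b}{P[I']}\times\es{b}{P[I_1+n_1]}=\es{b}{P[I'+I_1]}$ stays in glued form. Your single induction on $\sum_l|I_l|$, with the statement formulated for arbitrary disjoint label sets, needs no relabeling at all, so your appeal to \eqref{eq:cumorder} is unnecessary.
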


\begin{remark}
    Recall how $\Xi $ acts  from \eqref{eq:Xi}. Recall also the definition of $\times$ in (\ref{eq:prod_interlace}) and note that each $\pr_1(I')$, $\pr_1(I_l+n_l)$, $\dots,$ $ \pr_1(I_L+n_L)$ are pairwise disjoint.
\end{remark}

\begin{proof}
    We will do induction in $L$. We note that the case $L=0$ holds as it is the 
    standard moments to cumulants formula \eqref{eq:m2k} for the sequence $P[I']$.

    Take $m\in\N$. Suppose that \eqref{eq:regm2k} holds for $L < m$. We will prove that 
    \eqref{eq:regm2k} holds for $L=m$ by doing indunction in $|I_1|$.  If $|I_1|=0$, then $I_1=\emptyset$, and therefore 
    $I=I'+\sum_{l=1}^{m}I_l=I'+\sum_{l=2}^{m}I_l$ and $\es{b}{I'}\prod_{l=1}^L\bwick{\es{b}{I_l}}=\es{b}{I'}\prod_{l=2}^m\bwick{\es{b}{I_l}}$.
    Hence, 
    by \eqref{eq:regm2k} for $L=m-1$, the identity \eqref{eq:regm2k} is
    true for $L=m$ and $|I_1|=0$. 

    Take $k\in\N$ suppose that \eqref{eq:regm2k} holds for $L=m$ and $|I_1|<k$.
    Let $|I_1|=k$.
    For a given partition $\pi\in\p(I)$ we define 
    \[
        \pi_1:=\{A_1\in\pi\colon A_1\not\subset I_1+n_1\}
        \text{ and  }
        \pi_0:=\pi\setminus\pi_1.
    \]
    Then for $E=\cup\pi_0$ we have $E\subset I_1+n_1$ and $\pi_0\in\p(E)$, $\pi_1\in\p(I\setminus E)$. For a
    fixed $\pi$ the decomposition $\pi=\pi_0\cup\pi_1$ is unique, and hence 
    $$
    1=\sum_{E\subset I_1+n_1}
    \sum_{\pi_0\in\p(E)}\sum_{\pi_1\in\p(I\setminus
    E)}\ind(\pi=\pi_0\cup\pi_1)\prod_{A_1\in\pi_1}\ind(A_1\not\subset I_1+n_1).
    $$
    This identity, 
    together with \eqref{eq:opseq_prod}, the regularized
    moments-to-cumulants formula \eqref{eq:regm2k} for $L=m-1$, and 
    the fact $\es{b}{P[I']}\times \es{b}{P[I_1+n_1]}=\es{b}{P[I'+I_1]}$, implies
    \begin{align*}
        &\E\left[\Xi \, \es{b}{P[I']}\times \es{b}{P[I_1+n_1]} \times \bigg(
        \bigtimes_{l=2}^m\bwick{\es{b}{P[I_l+n_l]}}
        \bigg)
        \right]
        \\
        &=\E\left[\Xi \, \es{b}{P[I'+I_1]}\times 
        \bigg(
        \bigtimes_{l=2}^m\bwick{\es{b}{P[I_l+n_l]}}
        \bigg)
        \right]
        \\
        &=\sum_{\pi\in\p(I)}^{} 
        \prod_{A\in\pi}
        \ind(A\not\subset I_l+n_l \ \forall l=2,\dots,m)
        \bcum[P[A]]
        \\
        &=
        \sum_{\pi\in\p(I)}
        \sum_{E\subset I_1+n_1}
        \sum_{\pi_0\in\p(E)}
        \sum_{\pi_1\in\p(I\setminus E)}
        \ind(\pi=\pi_0\cup\pi_1)
        \left( 
            \prod_{A_1\in\pi_1}
            \ind(A_1\not\subset I_1+n_1)
        \right) 
        \\
        &\qquad\times
        \prod_{A\in\pi}
        \ind(A\not\subset I_l+n_l \ \forall l=2,\dots,m)
        \bcum[P[A]]
        \\
        &=
        \sum_{E\subset I_1+n_1}
        \sum_{\pi_0\in\p(E)}
        \sum_{\pi_1\in\p(I\setminus E)}
        \left( 
        \prod_{A_0\in\pi_0} \bcum[P[{A_0}]]
        \right) 
        \left( 
        \prod_{A_1\in\pi_1}
        \ind(A_1\not\subset I_l+n_l \ \forall l)
        \bcum[P[A_1]]
        \right),
    \end{align*}
    where in the last step we used the indicator function $\ind(\pi=\pi_0\cup\pi_1)$ to compute the $\pi$ sum,
    and in the $A_1$ product we merged the remaining indicator functions. In the $A_0$ product we 
    used $E\subset I_1+n_1$ to compute that $\ind(A_0\not\subset I_l+n_l \
    \forall l=2,\dots,m)=1$ for $A_0\in \pi_0$.

    Splitting the sum into $E=\emptyset$ and $E\ne\emptyset$ yields
    \begin{align*}
        &\E\left[\Xi \, \es{b}{P[I']}\times \es{b}{P[I_1+n_1]} 
        \times
        \bigg(
        \bigtimes_{l=2}^m\bwick{\es{b}{P[I_l+n_l]}}
        \bigg)
        \right]
        \\
        &=
        \sum_{\pi\in\p(I)}
        \prod_{A\in\pi}
        \ind(A\not\subset I_l+n_l \ \forall l)
        \bcum[P[A]]
        \\
         &+
         \sum_{\emptyset\ne E\subset I_1+n_1}
         \left( 
        \sum_{\pi_0\in\p(E)}
        \prod_{A_0\in\pi_0} \bcum[P[{A_0}]]
         \right) 
         \left( 
        \sum_{\pi_1\in\p(I\setminus E)}
        \prod_{A_1\in\pi_1}
        \ind(A_1\not\subset I_l+n_l \ \forall l)
        \bcum[P[A_1]]
         \right) 
        \\
        =&
        \sum_{\pi\in\p(I)}
        \prod_{A\in\pi}
        \ind(A\not\subset I_l+n_l \ \forall l)
        \bcum[P[A]]
        \\
         &+
         \sum_{\emptyset\ne E\subset I_1+n_1}
         \E[ b^{P[E]}]
         \E\left[\Xi \, \es{b}{P[I']}\times \bwick{\es{b}{P[(I_1+n_1)\setminus
         E]}}\times
         \bigg(
         \bigtimes_{l=2}^m\bwick{\es{b}{P[I_l+n_l]}}
         \bigg)
         \right],
    \end{align*}
    where we used the standard moments to cumulants formula \eqref{eq:m2k} and 
    the induction hypothesis in $|I_1|$. Hence, using \eqref{eq:wickseq} for
    $\es{b}{P[I_1+n_1]}$  implies \eqref{eq:regm2k} for $L=m$ and $|I_1|=k$. Hence,
    induction in $|I_1|$ and  induction in $L$ imply that \eqref{eq:regm2k}
    holds.
    \end{proof}

Let us finally state how cumulants of products are evaluated.  In the following
Proposition \eqref{prop:iteratedcum}, we give the explicit formulae only for
the simplest cases where only one of the arguments
is replaced by a monomial or a Wick polynomial.  More general cumulants of
monomials -formulas can be obtained from these by iteration.  

In order to give the above mentioned formula, we define a permutation that will be also 
used in the rest of the paper. Given index sequences $I,I'\in\I(J)$ and $\ell\in\pr_1(I)$, 
we define $P_\ell^{I',I}\colon \N\to\N$ to be a permutation such that when acting on 
subsets of $I'+I\setminus\ell$ it inserts indices in $I'$ to be at the position where $\ell$ 
was in $I$. Recall from \eqref{eq:indexpermutation} how permutations act on index sequences.

\begin{remark}
    \label{rem:insert_permutation}
    The above mentioned permutation $P_\ell^{I',I}$, which acts on 
    subsets of $I'+I\setminus\ell$ by inserting indices in $I'$ to be at the position where $\ell$ 
    was in $I$, may explicitly be constructed as follows.
    First, for given $\ell,n'\in\N$, we define a permutation 
    $\tp_\ell^{n'}\colon\N\to\N$,
    \begin{align}
            \label{eq:perP}
            \tp_\ell^{n'}(m)=
            \begin{cases}
                m+\ell-1, & \ m\leq n'  
                \\
                m-n', & n'<m\leq \ell-1+n'
                \\
                m, &m\ge \ell+n'
            \end{cases}.
    \end{align}
    Note that $\tp_\ell^{n'}$ maps the first $n'$ integers to be after $\ell-1+n'$, 
    for example when acting on the sequence $(1,2,3,4,5,6,\dots)$ componentwise by $\tp_3^2$ we get 
    $(3,4,1,2,5,6,\dots)$.

    Take $I,I'\in\I(J)$ and $\ell\in\pr_1(I)$. 
    Let $f\colon\N\to\N$ be a bijection such that
    $f(\pr_1(I'+I\setminus\ell))=\{1,2,\dots,|I'+I\setminus\ell|\}$
    and that  $f(m)<f(n)$ for all $m,n\in \pr_1(I'+I\setminus\ell)$, $m<n$.
    Then 
    \begin{align}
        \label{eq:unnoralized_perP}
        P_\ell ^{I',I} = \tp_\ell^{|I'|}\circ f
        \,.
    \end{align}
     Note that if $\pr_1
    I=\{1,2,\dots,|I|\}$ and $\pr_1 I'=\{1,2,3,\dots,|I'|\}$, we can take
    $f=\text{id}$, and then $P_\ell ^{I',I} = \tp_\ell^{|I'|}$.
    We will be only interested on how $P_\ell^{I',I}$ acts on subsets of $I'+I\setminus\ell$ and 
    this does not depend on the choice of $f$.
\end{remark}

\begin{proposition}
    \label{prop:iteratedcum} \textup{(Cumulants of products)}
    Suppose $I,I'\in \I(J)$ and $(\ell,j)\in I$. 
    Let the permutation $P_\ell^{I',I}$ be as in Remark \eqref{rem:insert_permutation}.
    Suppose $\tilde{j}\in J$  and define
    $$\tilde{I}_\ell :=I\setminus\{(\ell,j)\}\cup\{(\ell, \tilde{j})\}.$$
    We consider two options: (1) 
    $b(\tilde{j})= b^{I'}$ or (2) 
    $b(\tilde{j})=\bwick{b^{I'}}$. 
    
    Then, in case (1) we have $b^{\tilde{I}_\ell} =
    b^{I|_{<\ell}}{b^{I'}}b^{I|_{>\ell}} =
    b^{P_\ell^{I',I}(I'+I\setminus\ell)}$ and in case (2) $b^{\tilde{I}_\ell} =
    b^{I|_{<\ell}}\bwick{b^{I'}}b^{I|_{>\ell}}$.  In addition, in case (1) 
    \begin{align}
        \label{eq:cofprod_B1}
       \bcum[{\tilde{I}_\ell}]
       = \sum_{\pi\in\p(I'+I\setminus\ell)}^{}
       \prod_{A\in\pi}^{}\ind(A\cap I' \ne \emptyset) \bcum[P_\ell^{I',I}[A]].
    \end{align}
    and in case (2), 
    \begin{align}
        \label{eq:cofprod_B2}
       \bcum[{\tilde{I}_\ell}]
       = \sum_{\pi\in\p(I'+I\setminus\ell)}^{}
       \prod_{A\in\pi}^{}\ind(A\cap I' \ne \emptyset)\ind(A\not\subset I') \bcum[P_\ell^{I',I}[A]].
    \end{align}
\end{proposition}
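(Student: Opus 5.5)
The plan is to prove both formulas by induction on $|I|$, treating the two options in parallel. The key input is that the product of moments of the form appearing on the left of the moments-to-cumulants formula \eqref{eq:m2k} can be rewritten in two ways. Set $K := I'+I\setminus\ell$ and $P := P_\ell^{I',I}$. The identities $b^{\tilde I_\ell} = b^{I|_{<\ell}} b^{I'} b^{I|_{>\ell}} = b^{P[K]}$ (respectively $b^{I|_{<\ell}}\bwick{b^{I'}}b^{I|_{>\ell}}$) follow directly from the definitions of gluing and of $\tilde{P}_\ell^{n'}$ in Remark \ref{rem:insert_permutation}, since $P$ places the labels of $I'$ exactly into the slot of $\ell$ while preserving the relative order of everything else.

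For every $F\subset I$ with $(\ell,j)\in F$, write $\tilde F_\ell := F\setminus\ell\cup\{(\ell,\tilde j)\}$ and $K_F := I' + F\setminus\ell \subset K$. By \eqref{eq:m2k} applied to $\tilde I_\ell$ we have
\[
\E[b^{\tilde I_\ell}] = \sum_{\tilde\pi\in\p(\tilde I_\ell)} \prod_{A\in\tilde\pi} \bcum[A].
\]
Single out the cluster containing $(\ell,\tilde j)$, of the form $\tilde F_\ell$; the rest is a partition of $I\setminus F$. On the other side we have two expressions for the same moment:
\begin{itemize}
\item in case (1), $\E[b^{P[K]}] = \sum_{\pi\in\p(K)}\prod_{A\in\pi}\bcum[P[A]]$ by \eqref{eq:m2kP};
\item in case (2), by Proposition \ref{prop:reg_m2k} with $L=1$, the moment equals the same sum with the extra indicator $\ind(A\not\subset I')$. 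To apply it, write the monomial as $\es{b}{P[I|_{<\ell}\cup I|_{>\ell}\text{-part}]}\times\bwick{\es{b}{P[I'+n]}}$ after relabelling via \eqref{eq:cumorder} so that $I'$ comes after $I\setminus\ell$ in $K$. Relabelling by a strictly increasing map allows reordering the gluing, up to adjusting $P$.
\end{itemize}

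Now in any $\pi\in\p(K)$, group all clusters meeting $I'$; their union is $I'\cup(F\setminus\ell)$ for a unique $F$. The remaining clusters form a partition of $I\setminus F$ (mapped by $P$, which preserves the relative order on $I\setminus F$, so by \eqref{eq:cumorder} these are just $\bcum[A]$). Thus the right side becomes
\[
\sum_{F\ni(\ell,j)} S(F)\,\sum_{\sigma\in\p(I\setminus F)}\prod_{B\in\sigma}\bcum[B],
\]
where $S(F)$ is exactly the right-hand side of \eqref{eq:cofprod_B1} (resp.\ \eqref{eq:cofprod_B2}) with $I$ replaced by $F$. One must check that $P_\ell^{I',I}$ restricted to $K_F$ agrees, up to a strictly increasing relabelling, with $P_\ell^{I',F}$. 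This holds since inserting $I'$ at $\ell$ commutes with deleting other labels.

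Comparing the two expansions gives
\[
\sum_{F}\bigl(\bcum[\tilde F_\ell]-S(F)\bigr)\,\E[b^{I\setminus F}]=0,
\]
after resumming the inner partitions into moments via \eqref{eq:m2k}; this is just the analogue of \eqref{eq:expressionE}. By the induction hypothesis, every term with $F\subsetneq I$ vanishes. The $F=I$ term then gives $\bcum[\tilde I_\ell]=S(I)$, using $\E[b^\emptyset]=1$. The base case $I=\{(\ell,j)\}$ is just the first-order moment identity: $\bcum[\tilde j]=\E[b^{I'}]$ in case (1) and $\E[\bwick{b^{I'}}]$ in case (2). These match \eqref{eq:m2k} and Proposition \ref{prop:reg_m2k} with $I\setminus\ell=\emptyset$, where the condition $A\cap I'\neq\emptyset$ is automatic.

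I expect the main obstacle to be the bookkeeping in case (2). Proposition \ref{prop:reg_m2k} is stated with the Wick factor placed after $I'$ in gluing order, whereas here it sits in the middle. I would handle this by choosing the gluing $K'=(I\setminus\ell)+I'$ together with a permutation $Q$ with $Q[K']$ order-equivalent to $P[K]$, and using the general-$P$ version \eqref{eq:regm2k}. The $\times$-product then places every factor at its correct position, so no commutation is ever needed.
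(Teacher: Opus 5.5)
Your argument is correct, but it takes a different route from the paper's proof of this statement.

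\textbf{The paper's route.} The paper does not induct on $|I|$. Let $P=P_\ell^{I',I}$ and $n_0=\max\pr_1(I')$. For $A\subset I'+I\setminus\ell$ one has $\ind(A\not\subset I\setminus\ell+n_0)=\ind(A\cap I'\ne\emptyset)$. Using this, the paper reads the right-hand side of \eqref{eq:cofprod_B1}, via Proposition~\ref{prop:reg_m2k}, as $\E[\Xi(\es{b}{P[I']}\times\bwick{\es{b}{P[I\setminus\ell+n_0]}})]$. The right-hand side of \eqref{eq:cofprod_B2} is read as the same expression with $\bwick{\es{b}{P[I']}}$ in place of $\es{b}{P[I']}$. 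Since $P[I']$ is a consecutive block, it collapses to the single letter $(\ell,\tilde j)$. A second application of Proposition~\ref{prop:reg_m2k} then leaves only the one-cluster partition, which gives $\bcum[\tilde I_\ell]$. So the key identity is that a cumulant equals the expectation of one entry times the Wick polynomial of all the others. All the inductive work is hidden inside Proposition~\ref{prop:reg_m2k}.

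\textbf{Your route.} You expand $\E[b^{\tilde I_\ell}]$ twice: once by \eqref{eq:expressionE}, and once by \eqref{eq:m2kP} (respectively Proposition~\ref{prop:reg_m2k} with $L=1$). You then group the clusters meeting $I'$ and peel off the top term by induction. This is exactly the scheme the paper uses for the antisymmetric analogue, Proposition~\ref{prop:iteratedcum_F}.

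\textbf{What each buys.} The paper's route is shorter here. Yours needs no Wick polynomials at all in case (1). It is also the route that carries over to the antisymmetric setting. There, Wick-regularizing the complement of $(\ell,\tilde j)$ generally fails the sign condition \eqref{eq:regP1}, because an even cluster straddling position $\ell$ picks up an extra sign.

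\textbf{A minor wording point.} A strictly increasing relabelling cannot move $I'$ behind $I\setminus\ell$. It is the arbitrary permutation allowed in \eqref{eq:regm2k} that does this, as your final paragraph correctly says.
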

\begin{remark}
     Note that if $J$ does not have such element $\tilde{j}$, then one can take $\tilde{j}\not\in J$ and extend $b$ to $J\cup\{\tilde{j}\}$ by setting in case (1) 
    $b(\tilde{j})=b^{I'}$ or in case (2) 
    $b(\tilde{j})=\bwick{b^{I'}}$.
\end{remark}

\begin{proof}
    Let $P=P_\ell^{I',I}$ and $n_0=\max(\pr_1(I'))$.
    Proposition \ref{prop:reg_m2k} implies that in the case (1) the RHS of \eqref{eq:cofprod_B1} 
    equals to 
    $$\E[\Xi \left( \es{b}{P[I']}\times\bwick{\es{b}{P[I\setminus\ell+n_0]}}\right)]=:\zeta$$ 
    and in the case (2) the RHS of \eqref{eq:cofprod_B2} 
    equals to 
    $$\E[\Xi \left( \bwick{\es{b}{P[I']}}\times \bwick{\es{b}{I\setminus\ell+n_0]}}\right)]=:\zeta.$$ Here, 
    we introduced $\zeta$ to denote the appropriate expectation in each of the two cases.

    Let $I'_1=\{(1,\tilde{j})\}$ and $P_1=P_\ell^{I'_1,I}$. In case (1) we have $\Xi\, \es{b}{P[I']}=b^{I'}=\Xi\, \es{b}{I'_1}$ and 
    in case (2) we have $\Xi\, \bwick{\es{b}{P[I']}}=\bwick{b^{I'}}=\Xi\, \es{b}{I'_1}$.
    Therefore, in case (1) we have
    \begin{align*}
        \Xi \left( \es{b}{P[I']}\times\bwick{\es{b}{P[I\setminus\ell+n_0]}} \right)
        =
        \Xi \left( \es{b}{P_1[I'_1]}\times\bwick{\es{b}{P_1[I\setminus\ell+1]}}\right)
    \end{align*}
    and in case (2) we have
    \begin{align*}
         \Xi \left( \bwick{\es{b}{P[I']}}\times\bwick{\es{b}{I\setminus\ell+n_0]}}]\right)
        =
        \Xi \left( \es{b}{P_1[I'_1]}\times\bwick{\es{b}{P_1[I\setminus\ell+1]}}\right)
        \,.
    \end{align*}
    Hence, in both case (1) and (2) we have 
    \begin{align*}
        \zeta
        &=
        \E[\Xi\left(  \es{b}{P_1[I'_1]}\times\bwick{\es{b}{P_1[I\setminus\ell+1]}}\right)]
      =
        \bcum[P_1[I'_1+I\setminus\ell]]
        \,,
    \end{align*}
    where at the last step we used Proposition \eqref{prop:reg_m2k}. Since
    $P_1[I'_1+I\setminus\ell]= \tilde{I}_\ell$ the result holds in
    both cases.
\end{proof}

%% file: section4.tex
\section{Antisymmetric cumulants and Wick polynomials}
\label{sec:anti}

\subsection{Antisymmetric cumulants}\label{sec:deffcum}

Suppose we are given an expectation map $\E\colon\A\to\C$ on a complex unital algebra $\A$.
For now, suppose $J=\A$ and $b=\text{id} \colon J\to\A$. We define
{\it antisymmetric cumulants} $\fcum[\es{b}{I}]$ for  $I\in\I(J)$
by setting $\bcum[\emptyset]:=0$ and then
doing induction in $|I|\ge 1$ to the recursion relation
\begin{align}
    \label{eq:m2k_Fodd}
    \E[b^I] =
    \sum_{\pi\in\p'_2(I)}^{}\sgn(\pi) \prod_{A\in\pi}^{} \fcum[\es{b}{A}]\,,
\end{align}
where $\p'_2(I)$ denotes the collection of partitions of $I$ which contain
\defem{at most one} odd element, i.e., for every $\pi\in\p_2'$ there is at most one $A_1\in\pi $ with $|A_1|$ odd while $|A|$ is even for every other $A\in\pi$, $A\ne A_1$. Here
$\sgn(\pi)$ is $+1$ or $-1$ according to whether the permutation of $I$ which corresponds to $I\mapsto
(A_1,A_2,\ldots,A_{|\pi|})$
is even or odd. 
We call $A\in\pi$ a \defem{cluster} in the partition $\pi$. 
Here $\pi = \{A_1,A_2,\dots,A_n\}$, $n=|\pi|$, and any ordering
of the clusters $A_i$ of the partition $\pi$ may be used without affecting the
resulting sign (recall that at most one cluster $A_i$ is then odd).   
More precisely,
$\sgn(\pi)$ is the sign of the permutation $P:\pr_1(I)\to\pr_1(I)$ 
constructed by requiring that $P$ preserves relative order for elements in each cluster $A_i$ and 
it orders the clusters in to the chosen order $A_1,\dots,A_n$. Preserving
relative order in $A_i$ means that $P(\ell_1)<P(\ell_2)$ for all
$\ell_1,\ell_2\in \pr_1(A_i)$, $\ell_1<\ell_2$, and ordering the clusters means 
that for each two clusters $A_i,A_m$, $i<m$, there holds $P(\ell_i)<P(\ell_m)$ for 
all $\ell_i\in \pr_1(A_i)$, $\ell_m\in \pr_1(A_m)$.

We will sometimes use the shorthand
notation $\sgn(A_1,A_2,\dots,A_n)$ for $\sgn(\pi)$ if
$\pi = \{A_1,A_2,\dots,A_n\}$.  In particular, this sign is uniquely defined by given the order
of the clusters $A_k$ in the
sequence even if $\pi$ is a general partition, i.e., even if it contains odd clusters.
For example, if $E$ and $I$ are even and nonempty sets for which $E\subsetneq
I$, then also  $I\setminus E$ must be even and nonempty, and thus
$\sgn(\pi)=\sgn(E,I\setminus E)=
\sgn(I\setminus E,E)$ for $\pi=\{E,I\setminus E\}\in \p_2(I)$.
In addition, to these notations we use the following definitions for the
analogous degenerate cases: we set $\sgn(\emptyset,I) := 1$ and
$\sgn(I,\emptyset) := 1$ (note that both cases may be thought of as degenerate
examples of an identity permutation).

Then for  any index set $J$ and any map $b\colon J\to \A$ the equation \eqref{eq:m2k_Fodd} holds for all $I\in\I(J)$.
The antiymmetric cumulants may be considered
as a map $\fcum\colon \I(\A)\to\C$ and 
$\fcum[\es{b}{\cdot}]$ as a map from $\I(J)$ to $\C$.
For the rest of this Section \ref{sec:anti} we assume that 
 $J$ is any index set and $b\colon J\to \A$ any map.
 The motivation for more general choice of $b$ and  $J$ becomes apparent in the applications in Section \ref{sec:hie_bos_fer}.

The equation \eqref{eq:m2k_Fodd} is called the moments-to-cumulants formula.
Assuming an even $I$,
the definition \eqref{eq:m2k_F} is also used in \cite{brattelirobinson2}, where the
name fermionic truncated correlation functions is used over cumulants. 
Note that for even $I$, the moments-to-cumulants formula \eqref{eq:m2k_Fodd} reduces to 
\begin{align}
    \label{eq:m2k_F}
    \E[b^I] = \sum_{\pi\in\p_2(I)}^{}\sgn(\pi) \prod_{A\in\pi}^{} \fcum[\es{b}{A}]\,,
\end{align}
where $\p_2(I)$ denotes the collection of partitions of the set $I$ containing
only even sets. Note that for even $|I|$ we have $\p_2'(I)=\p_2(I)$.
In addition, if $\E$ is an even
expectation map with respect to $\bes\subset b(J)\subset\A$, then one can check by induction that
$\fcum[\es{b}{I}]=0$ for all odd $I$ with
$\pr_2(I)\subset \bes$.
One often 
rewrites the equation \eqref{eq:m2k_Fodd} as
\begin{align}
      \label{eq:m2k_Fall}
    \fcum[\es{b}{I}]=\E[b^I] - 
    \sum_{\substack{\pi\in\p'_2(I)\\ |\pi|>1}}
    \sgn(\pi) 
    \prod_{A\in\pi}
    \fcum[\es{b}{A}]\,, \qquad I \ne \emptyset\, ,
\end{align}
in order to more easily see and use the recursion relation.

If we insist that also for $I=\emptyset$
the formula \eqref{eq:m2k_Fodd} holds with the standard definitions of vacuous
products and sums, we
should define $\p'_2(\emptyset):=\{\emptyset\}$
 and
$\sgn(\emptyset):=1$,
since for $\pi=\emptyset$ the condition $A\in \pi$ is false and thus vacuously
$\prod_{A\in \pi}^{} \left(\cdot\right):= 1$.  Hence, with these definitions the
right hand side of \eqref{eq:m2k_Fodd} evaluates to $1= \E[b^\emptyset]$.

Just like in the symmetric case, we would like to consider also here permuted
cumulants. Recall from \eqref{eq:indexpermutation} how permutations act on index sequences.  Given a permutation $P\colon\N\to\N$ and a  index sequence $I\in\I(J)$ also
$P[I]$ is in $\I(J)$ and thus
\begin{align}
    \label{eq:m2k_F_P}
    \E[b^{P[I]}] =
    \sum_{\pi_0\in\p'_2(P[I])}^{}\sgn(\pi_0) \prod_{A\in\pi_0}^{}
    \fcum[\es{b}{A}] =        \sum_{\pi\in\p'_2(I)}^{}\sgn(\pi;P)
    \prod_{A\in\pi}^{} \fcum[\es{b}{P[A]}]\,,
\end{align}
where in the second equality we have changed summation variables from
$\pi_0\in\p'_2(P[I])$ to $\pi = \{P^{-1}[A]: A\in\pi_0\}$.  This clearly
results in a bijection between $\p'_2(P[I])$ and $\p'_2(I)$, the inverse given
by the map $P(\pi) := \{P[A]: A\in\pi\}$ for $\pi\in\p'_2(I)$.  Hence, the
corresponding products of cumulants are mapped as stated above.  However, the
relative sign can change in this operation, and in the above formula we have
defined
\[
 \sgn(\pi;P) := \sgn(P(\pi)) \,, \qquad \pi\in\p'_2(I)\,.
\]
For example, if $P$ is the permutation $(1234)\mapsto (1324)$ and
$\pi=\{\{1,2\},\{3,4\}\}$ we have $\sgn(\pi)=1$ but
$P(\pi)=\{\{1,3\},\{2,4\}\}$ and thus $\sgn(\pi;P)=-1$.

Whenever the underlying map $b\colon J\to\A$ is clear, we will use the shorthand notation 
\begin{align}
    \label{eq:cum_notation0_fer}
    \fcum[\es{b}{I}]=\fcum[I]\,,\qquad I\in\I(J)\,.
\end{align}
Also in the antisymmetric case, $\fcum[I]$ depends only on the relative order of the position 
labels in $I$, i.e., if $f:\N\to\N$ is strictly increasing, then 
\begin{align}
    \label{eq:fer_cumorder}
    \fcum[I] = 
    \fcum[{f[I]}],
\end{align}
for all $I\in\I(J)$, where $f[I] = \{ (f(\ell),j)\ : \ (\ell,j)\in I\}\in\I(J)$. This follows recursively from
\eqref{eq:m2k_Fodd} and the fact that $b^I=b^{f[I]}$.
Therefore, the following notations 
are justified, 
\begin{align}
    \label{eq:cum_notation_fer}
    \fcum[b(j_1),  \dots, b({j_n})] 
    =
    \fcum[j_1,  \dots, {j_n}] 
    =
    \fcum[I]
    =
    \fcum[\es{b}{I}]
    ,
\end{align}
where $I=\{(\ell_1,{j_1}),(\ell_2,{j_2}),\dots, (\ell_n,{j_n})\}$ and $\ell_1<\ell_2<\dots<\ell_n$.

The lowest order cumulants are thus
\begin{align*}
 &\fcum[b_1] = \E[b_1]\,, \qquad \fcum[b_1, b_2]
 =  \E[b_1 b_2]\,, \\
 &
 \fcum[b_1, b_2, b_3] =
\E[b_1 b_2 b_3]
 - \E[b_1 b_2] \E[b_3]
 + \E[b_1 b_3] \E[b_2]
 - \E[b_2 b_3] \E[b_1]\,,\\
 &
 \fcum[b_1, b_2, b_3,b_4] =
  \E[b_1 b_2 b_3 b_4]
 - \E[b_1 b_2]\E[b_3 b_4]
 + \E[b_1 b_3]\E[b_2 b_4]
 - \E[b_1 b_4]\E[b_2 b_3]
\,.
\end{align*}

\subsection{Antisymmetric Wick polynomials}

We define antisymmetric Wick polynomials $\fwick{b^I}$ 
 by setting $\bwick{b^\emptyset}:=\id$
and for $I\in\I(J)$, $I\ne\emptyset$, define them recursively by requiring
\begin{align}
    \label{eq:fer_wick}
    \fwick{b^I}:=b^I- 
    \sum_{\substack{\emptyset \ne E\subset I,\\ |E|\text{ even }}}
    \sgn(E,I\setminus E)
    \E[b^E]\fwick{b^{I\setminus E}}\,.
\end{align}
The same formula is used for both odd and even $|I|$,
but in either case we only allow removal of even clusters.
The lowest order examples are
\begin{align*}
 &\fwick{b_1} = b_1\,, \qquad \fwick{b_1 b_2}
 =  b_1 b_2 - \E[b_1 b_2]\id \,, \\
 &
 \fwick{b_1 b_2 b_3} =
 b_1 b_2 b_3 
 - \E[b_1 b_2] b_3
 +  \E[b_1 b_3] b_2
 - \E[b_2 b_3] b_1
 \,.
\end{align*}
By induction, one can prove that $\E[\fwick{b^I}]=0$ if $I\ne \emptyset$
and  $|I|$ is even.
However, $\E[\fwick{b^I}]$ might not be zero if $|I|$ is odd; we will later
prove that then $\E[\fwick{b^I}]=\fcum[I]$. Also, it should be stressed that the notation borrowed from
physics is somewhat misleading: the Wick polynomial $\fwick{b^I}$ depends on
the full sequence $I$, not only on the algebra element $b^I$.

As previously discussed, we need to be careful with permutations of
polynomials.    Hence, for each
$I\in\I(J)$ we define the abstract antisymmetric Wick polynomial
$\fwick{\es{b}{I}}\in\ves(\I(J))$ by setting $\fwick{\es{b}{\emptyset}}=\es{b}{\emptyset}=\emptyset$
and for $I\ne \emptyset$ recursively by 
\begin{align}
    \label{eq:fer_wickseq}
    \fwick{\es{b}{I}}:= \es{b}{I}-
    \sum_{\substack{\emptyset\ne E\subset  I,\\ |E|\text{ even }}}
    \sgn(E,I\setminus E) \E[b^E]\fwick{\es{b}{I\setminus E}}\,.
\end{align}
It is straightforward to check that
\begin{align}
    \label{eq:fer_wickseq_is_wick}
    \Xi\fwick{\es{b}{I}}=\fwick{b^I}
\end{align}
for all $I\in \I(J)$.

\begin{remark}
    \label{rem:samelenght_wick_F}
    Similar alternative construction of abstract antisymmetric Wick polynomials can 
    be done as in Remark \ref{rem:samelenght_wick} was done for abstract symmetric 
    polynomials. One just has to replace the recursion relation
    \eqref{eq:samelenght_wick} by
    \begin{align}
        \label{eq:samelenght_wick_F}
        \fwick{\es{b}{U;I}}:= 
        \es{b}{U}\cup \es{\id}{I\setminus U}-
         \sum_{\substack{\emptyset\ne E\subset  U,\\ |E|\text{ even
    }}}\sgn(E,U\setminus E)
        \E[b^E]
        (\fwick{\es{b}{U\setminus E;U}}\cup\es{\id}{I\setminus U}).
    \end{align}
\end{remark}

\subsection{Properties of antisymmetric cumulants and Wick polynomials}

We start by proving the following lemma for the change of the sign of a partition after splitting it into two parts.
\begin{lemma}
    \label{lem:composition}
    Take $I\in\I(J)$ and $\pi\in\p'_2(I)$. Let $\pi_1\subset\pi$. Define $\pi_0=\pi\setminus\pi_1$ and
    $E=\cup\pi_0$. Then $\pi_0\in \p'_2(E)$, $\pi_1\in \p'_2(I\setminus E)$
    and
    \begin{align}
        \label{eq:composition}
            \sgn(\pi)=\sgn(E,I\setminus
            E)\sgn(\pi_0)\sgn(\pi_1).
    \end{align}
    Here, we recall that $\sgn(\emptyset,I) = 1$
    and $\sgn(I,\emptyset) = 1$.
\end{lemma}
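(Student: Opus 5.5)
The membership claims come first, and they are routine. Since $\pi$ partitions $I$, the collection $\pi_0$ consists of disjoint nonempty sets whose union is $E$, so $\pi_0\in\p(E)$. Likewise $\pi_1$ partitions $I\setminus E=\cup\pi_1$. Because $\pi$ has at most one odd cluster, each of the subcollections $\pi_0$ and $\pi_1$ has at most one odd cluster as well. Hence $\pi_0\in\p'_2(E)$ and $\pi_1\in\p'_2(I\setminus E)$.

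The degenerate cases are also immediate. If $\pi_1=\emptyset$, then $E=I$, $\pi_1\in\p'_2(\emptyset)$ with $\sgn(\emptyset)=1$, and $\sgn(I,\emptyset)=1$, so \eqref{eq:composition} is trivial. The case $\pi_0=\emptyset$ is handled the same way using $\sgn(\emptyset,I)=1$.

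For the sign identity I would work directly with the permutations that define the signs. Write $\pi_0=\{A_1,\dots,A_k\}$ and $\pi_1=\{A_{k+1},\dots,A_n\}$, and order $\pi$ as $(A_1,\dots,A_n)$. By the remark after \eqref{eq:m2k_Fodd}, any cluster order gives the same sign, since at most one cluster is odd; so this order is legitimate. Let $Q:\pr_1(I)\to\pr_1(I)$ be the order-preserving-within-blocks permutation realizing $I\mapsto(E,I\setminus E)$, so that $\sgn Q=\sgn(E,I\setminus E)$. Let $R_0$ be the permutation of the first $|E|$ slots that reorders $E$, in its natural order, into $(A_1,\dots,A_k)$; then $\sgn R_0=\sgn(\pi_0)$. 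Define $R_1$ on the last $|I\setminus E|$ slots analogously, with $\sgn R_1=\sgn(\pi_1)$. The composition $(R_0\oplus R_1)\circ Q$ preserves relative order inside each $A_i$ and arranges the clusters in the order $A_1,\dots,A_n$. So it is exactly the permutation defining $\sgn(A_1,\dots,A_n)=\sgn(\pi)$. Multiplicativity of the sign of permutations, together with $\sgn(R_0\oplus R_1)=\sgn R_0\,\sgn R_1$, then gives \eqref{eq:composition}.

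The only delicate point is the uniqueness of these permutations. By the definition in Section~\ref{sec:deffcum}, a permutation is determined by the requirements that it preserve relative order within each block and order the blocks. I need to check that the composition above satisfies both requirements. First, $Q$ keeps each $A_i$ in relative order, because each $A_i$ lies in $E$ or in $I\setminus E$ and $Q$ preserves order within those. Second, $R_0$ and $R_1$ keep relative order within each cluster by construction. Third, all clusters of $\pi_0$ land before all clusters of $\pi_1$. The one remaining subtlety is the order-independence of $\sgn(\pi_0)$ and $\sgn(\pi_1)$, and that holds because each contains at most one odd cluster.
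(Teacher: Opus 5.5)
Your proof is correct and follows the paper's route. After disposing of the degenerate cases, you realize the reference permutation $I\mapsto(A_1,\dots,A_n)$ as the split $I\mapsto(E,I\setminus E)$ followed by the within-block reorderings that define $\sgn(\pi_0)$ and $\sgn(\pi_1)$, and then use multiplicativity of the sign; your explicit check of the defining order properties (and of the order-independence that comes from having at most one odd cluster) only fills in details the paper leaves implicit.
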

\begin{proof}
As a first step, let us point out that the formula is also valid if either
$\pi_0$ or $\pi_1$ is empty.  For example, if $\pi_1=\emptyset$, then
$\pi_0=\pi$, $E=I$ and $I\setminus E = \emptyset$.  Therefore, $\pi_0\in
\p'_2(I)$, $\pi_1\in\p'_2(\emptyset)=\{\emptyset\}$, $\sgn(\pi_1)=1$ and
$\sgn(\pi_0)=\sgn(\pi)$ while $\sgn(I,\emptyset)=1$.  If $\pi_0=\emptyset$, we
have $\pi_1=\pi$, and similar conclusions hold, with the roles of these two
sets reversed and using $\sgn(\emptyset,I)=1$.

Now if both $\pi_0$ and $\pi_1$ are nonempty,
the sets $E$ or $I\setminus E$ are also nonempty, and there is at most one odd
cluster in either of them since there is at most one odd cluster in $\pi$.  We
can make a reference permutation of $\pi$, $I\mapsto  (A)_{A\in\pi}$, as a
composition of the permutations $I\mapsto (E,I\setminus E)$, $E\mapsto
(A)_{A\in\pi_0}$,  and $I\setminus E\mapsto  (A)_{A\in\pi_1}$.  The result
follows since the sign of compositions of permutations is the product of their
individual signs. 
\end{proof}

Next we prove the following useful expression for cumulants and moments.
\begin{proposition}\label{th:fcumiterformulas}
    Take $I\in\I(J)$ and fix $(\ell,j)\in I$.
    If $|I|$ is odd, we have
    \begin{align}
        \label{eq:expressionE_oddfer}
        \E[b^I] 
        &= 
        \sum_{\substack{E\subset I,\\(\ell,j)\in E}}
        \sgn(E,I\setminus E)
        \E[b^{I\setminus E}]\fcum[E]
        =   \sum_{\substack{E\subset I,\\|E|\text{ odd}}}
        \sgn(E,I\setminus E)
        \E[b^{I\setminus E}]\fcum[E]\,,      
    \end{align}
    and if $|I|$ is even,
    \begin{align}
        \label{eq:expressionE_fer}
        \E[b^I]
        &=
        \sum_{\substack{E\subset I,\\{(\ell,j)\in E},\, |E|\,\text{even}}}
        \sgn(E,I\setminus E)
        \E[b^{I\setminus E}]\fcum[E]\,.
    \end{align}
  In particular, for all $I$,
    \begin{align}
        \label{eq:expressionK_fer}
        \fcum[I] 
        &= 
        \E[b^I] 
        -
        \sum_{\substack{E\subsetneq I,\\(\ell,j)\in E}} \ind(|I| \text{ is odd or } |E| \text{ is even})
        \sgn(E,I\setminus E)
        \E[b^{I\setminus E}]\fcum[E].
    \end{align}
\end{proposition}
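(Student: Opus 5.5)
The plan mirrors the proof of Proposition \ref{prop:expression}. We start from the moments-to-cumulants formula \eqref{eq:m2k_Fodd} and decompose each partition $\pi\in\p'_2(I)$ according to the cluster $E\in\pi$ containing the fixed element $(\ell,j)$. Set $\pi_0=\{E\}$ and $\pi_1=\pi\setminus\{E\}$. Lemma \ref{lem:composition} then gives
\[
\sgn(\pi)=\sgn(E,I\setminus E)\,\sgn(\{E\})\,\sgn(\pi_1)=\sgn(E,I\setminus E)\,\sgn(\pi_1),
\]
since a one-cluster partition has sign $+1$. Conversely, every pair $(E,\pi_1)$ with $(\ell,j)\in E\subset I$ and $\pi_1$ a partition of $I\setminus E$ yields $\pi=\{E\}\cup\pi_1$, so the decomposition is a bijection. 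The only point requiring care is tracking the parity constraint ``at most one odd cluster''.

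\textbf{Even $|I|$.} Here $\p'_2(I)=\p_2(I)$, so all clusters are even. Hence $E$ must be even and $\pi_1$ ranges exactly over $\p_2(I\setminus E)=\p'_2(I\setminus E)$. Summing over $\pi_1$ and applying \eqref{eq:m2k_Fodd} to $I\setminus E$ turns the inner sum into $\E[b^{I\setminus E}]$, which gives \eqref{eq:expressionE_fer}.

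\textbf{Odd $|I|$.} There is exactly one odd cluster, and this is the main obstacle. If $E$ is odd, then $\pi_1\in\p_2(I\setminus E)=\p'_2(I\setminus E)$ because $|I\setminus E|$ is even. If $E$ is even, then $\pi_1$ has exactly one odd cluster, so $\pi_1\in\p'_2(I\setminus E)$ with $|I\setminus E|$ odd. In both cases $\pi_1$ ranges over all of $\p'_2(I\setminus E)$, and conversely any such $\pi_1$ together with $E$ gives a partition in $\p'_2(I)$. Summing first over $\pi_1$ therefore yields the first identity in \eqref{eq:expressionE_oddfer}.

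For the second identity in \eqref{eq:expressionE_oddfer}, the natural route is to decompose instead along the unique odd cluster $E$, which can be anywhere in $I$, not necessarily containing $(\ell,j)$. Then $\pi_1\in\p_2(I\setminus E)$, the sign again factors by Lemma \ref{lem:composition}, and \eqref{eq:m2k_F} applied to the even set $I\setminus E$ produces $\E[b^{I\setminus E}]$.

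Finally, \eqref{eq:expressionK_fer} follows by isolating the term $E=I$ from the $(\ell,j)$-versions of the two formulas, using $\sgn(I,\emptyset)=1$ and $\E[b^\emptyset]=1$. The indicator in \eqref{eq:expressionK_fer} encodes which $E$ occur: all $E$ containing $(\ell,j)$ when $|I|$ is odd, and only even $E$ when $|I|$ is even.
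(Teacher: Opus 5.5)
Your proposal is correct and follows essentially the same route as the paper. You decompose each $\pi\in\p'_2(I)$ along the cluster containing $(\ell,j)$, or along the unique odd cluster for the second identity in \eqref{eq:expressionE_oddfer}, factor the sign with Lemma \ref{lem:composition}, and resum the remaining clusters into $\E[b^{I\setminus E}]$ via \eqref{eq:m2k_Fodd} or \eqref{eq:m2k_F} using the same parity case analysis, before isolating the term $E=I$; only the order of presentation differs from the paper.
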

\begin{proof}
We begin with the moments to cumulants formula \eqref{eq:m2k_Fodd}.  Suppose first
that $|I|$ is odd.  Then for each $\pi\in \p'_2(I)$ there is a unique $E\in \pi$ which has an odd $|E|$ and then $I\setminus E$ is even.
Thus we can add a unit factor
$1=\sum_{E\subset I}\sum_{\pi_1\in\p_2(I\setminus E)} \ind(|E|\text{ odd})
\ind(\pi=\pi_0\cup \pi_1)$, where $\pi_0:=\{E\}$, into the summand.  Since
$\sgn(\pi_0)=1$, we can then use  Lemma \ref{lem:composition} to conclude that
\[
  \E[b^I] = \sum_{\substack{E\subset I,\\|E|\text{ odd}}} \sgn(E,I\setminus E)\fcum[E]  \sum_{\pi_1\in\p_2(I\setminus E)}
            \sgn(\pi_1) \prod_{A\in\pi_1}
            \fcum[A]\,.
\]
Using \eqref{eq:m2k_F} yields the righmost expression in \eqref{eq:expressionE_oddfer}.

Let us then consider general $I$.  Given $(\ell,j)\in I$ and $\pi\in \p'_2(I)$ there is again a unique $E\in \pi$ with $(\ell,j)\in E$.
Continuing as above yields
    \begin{align*}
        \E[b^I] 
&=
        \sum_{\substack{E\subset I,\\(\ell,j)\in E}}
        \sgn(E,I\setminus E)
        \fcum[E]
        \left( 
            \sum_{\pi_1\in\p'_2(I\setminus E)}
            \sgn(\pi_1) \prod_{A\in\pi_1}
            \fcum[A]
        \sum_{\pi\in\p'_2(I)} \ind(\pi =\pi_1 \cup \{E\})
        \right) \, .
    \end{align*}
    There are then four cases to consider, depending on the number of elements in $E$ and $I$.
   If $I$ is even and $E$ is odd, then there are no partitions $\pi\in\p'_2(I)$ with $E\in \pi$, so the term evaluates to zero.  If both $E$ and $I$ are even, also $I\setminus E$ is even, and thus $\p'_2(I)=\p_2(I)$ and $\p'_2(I\setminus E)
   = \p_2(I\setminus E)$.  We find that
   $\sum_{\pi\in\p'_2(I)} \ind(\pi =\pi_1 \cup \{E\})=1$ and the term in parentheses evaluates to $\E[b^{I\setminus E} ]$, by  \eqref{eq:m2k_F}.
    If both $I$  and $E$ are odd, $I\setminus E$ is even and
    $\sum_{\pi\in\p'_2(I)} \ind(\pi =\pi_1 \cup \{E\})=1$ for any $\pi_1\in
    \p'_2(I\setminus E)
   = \p_2(I\setminus E)$.  Hence, again the term in parentheses evaluates to $\E[b^{I\setminus E} ]$, by  \eqref{eq:m2k_F}.  Finally, if $I$ is odd and $E$ is even, there is exactly one odd cluster in $\pi$ which will belong to $\pi\setminus \pi_0$.  Now we can use \eqref{eq:m2k_Fodd} to conclude that the term still evaluates to $\E[b^{I\setminus E} ]$. Then solving
    $\fcum[I]$ from \eqref{eq:expressionE_oddfer} or \eqref{eq:expressionE_fer} implies \eqref{eq:expressionK_fer}.
\end{proof}

\begin{proposition}
    The antisymmetric cumulants are multilinear in the sense that
    \begin{align}
        \nonumber
        \fcum[b(j_1),\dots,  \alpha b(i)+\beta b(k) ,\dots, b({j_n})] 
        &=
         \alpha
        \fcum[b(j_1),\dots,   b(i) ,\dots, b({j_n})] 
        \\
        \label{eq:fcum_multilinear}
        &\quad+
         \beta 
        \fcum[b(j_1),\dots,  b(k) ,\dots, b({j_n})] 
    \end{align}
    for all $j_1,\dots,j_n,i,k\in J$ and $\beta,\alpha\in\C$.
    Similarly, the antisymmetric Wick polynomials are multilinear.
\end{proposition}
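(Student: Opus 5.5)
We argue by strong induction on $n=|I|$, using the recursion \eqref{eq:expressionK_fer} from Proposition \ref{th:fcumiterformulas}, in the same way as the symmetric case was obtained from \eqref{eq:expressionK}. Fix a position $m$ in the sequence where the linear combination $\alpha b(i)+\beta b(k)$ sits, and write $I_\gamma$ for the index sequence carrying at position $m$ the index of $\gamma\in\{\alpha b(i)+\beta b(k),\, b(i),\, b(k)\}$, all other entries being equal. If $J$ contains no index for $\alpha b(i)+\beta b(k)$ we extend $J$ and $b$ by a new index, as in the remark after Proposition \ref{prop:iteratedcum}. For $n=1$ the claim is $\fcum[\alpha b(i)+\beta b(k)]=\E[\alpha b(i)+\beta b(k)]$, which follows from linearity of $\E$.

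For the induction step, apply \eqref{eq:expressionK_fer} with the distinguished element chosen to be $(m,\cdot)$ itself, i.e.\ $(\ell,j)$ is the entry at position $m$. Then every $E$ in the sum contains position $m$, and $I\setminus E$ does not. The combinatorial data depend only on $\pr_1$ of the sets involved, not on the entries: the index sets of summation, the signs $\sgn(E,I\setminus E)$, and the parity indicators. Hence the sums for $I_{\alpha b(i)+\beta b(k)}$, $I_{b(i)}$, $I_{b(k)}$ range over corresponding terms. In each term, $\E[b^{I\setminus E}]$ is the same for all three sequences. Since $E\subsetneq I$, the factor $\fcum[E]$ is linear in the $m$-th entry by the induction hypothesis. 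Finally, $\E[b^I]$ is linear in the $m$-th entry because the product is multilinear in $\A$ and $\E$ is linear. Collecting terms gives \eqref{eq:fcum_multilinear}.

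For antisymmetric Wick polynomials we use the same induction on $|I|$ with \eqref{eq:fer_wick}. There $b^I$ is multilinear, and each summand $\sgn(E,I\setminus E)\E[b^E]\fwick{b^{I\setminus E}}$ has position $m$ either in $E$ or in $I\setminus E$. In the first case linearity comes from $\E[b^E]$, with $\fwick{b^{I\setminus E}}$ unchanged. In the second case it comes from the induction hypothesis applied to $\fwick{b^{I\setminus E}}$, with $\E[b^E]$ unchanged. The base case $\fwick{b^\emptyset}=\id$ is trivial, and the one-element case $\fwick{b_1}=b_1$ is immediate.

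The only point needing care is that the signs and parity restrictions (for instance $\ind(|I|$ odd or $|E|$ even$)$) are determined purely by position labels, so they are identical across the three sequences. This is why we choose the distinguished element at the linearly varying position: the varying entry then always sits in the cumulant factor, and no term mixes it nonlinearly.
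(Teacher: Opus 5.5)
Your proposal is correct and follows the paper's own proof. The paper likewise obtains multilinearity of $\fcum$ recursively from \eqref{eq:expressionK_fer} together with multilinearity of $\E[b^I]$, and multilinearity of $\fwick{\cdot}$ from \eqref{eq:fer_wick}; you simply spell out the induction that the paper leaves implicit. Placing the distinguished element at the varying position is convenient but not essential: since $E$ and $I\setminus E$ are disjoint, the varying entry lies in exactly one factor of each term for any choice of distinguished element.
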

\begin{proof}
The multilinearity \eqref{eq:fcum_multilinear} follows recursively from
\eqref{eq:expressionK_fer} together with the  multilinearity of $\E[b^I]$. Similarly we obtain multilinearity of the 
antisymmetric Wick polynomials from \eqref{eq:fer_wick}.
\end{proof}

We next present how the results related to ``independence'' and ``Gaussianity'' are modified 
for antisymmetic cumulants.
\iffalse
  For real random variables, if a collection contains one random variable which
  is independent from the rest, then the corresponding cumulant is zero.
  Although the meaning of ``independence'' is more difficult to capture in
  non-commutative setting, let us present below two factorization properties
  which are connected to vanishing of certain antisymmetric cumulants.  One
  special aspect about the results below is that they do not require full
  independence, it suffices that the cumulants are zero for subsequences of the
  reference sequence $I$.  The first case is motivated by properties of
  Gaussian distributions and the second case of independence.
\fi
\begin{proposition}
    \label{prop:independence_fer}
     Fix $I\in\mathcal{I}(J)$ and choose $(\ell,j)\in I$.
     Let, for $n\in \N$,
     \[
         \mathcal{E}(n,I,\ell)=\{ E\subset I \colon |E|\ge n,\, (\ell,j)\in E\}.
     \]
     Denote $I_\ell := \{(\ell,j)\}$.
     Then the following two conditions are equivalent:
     \begin{enumerate}
        \item[(1)]  For all $E\in\mathcal{E}(3,I,\ell)$, we have
         \begin{align}
             \label{eq:ferE3_independence}
             & \E[b^E]=
             \ind(|E|\,\text{odd})
             \sgn(I_\ell,E\setminus I_\ell)
             \E[b(j)]\E[b^{E\setminus I_\ell}]
             + \sum_{\substack{I_\ell\subset p\subset E ,\\
             |p|=2}}\sgn(p,E\setminus p) \E[b^p]\E[b^{E\setminus p}] \,.
         \end{align}
     \item[(1')]
     For all $E\in\mathcal{E}(3,I,\ell)$, we have
         \begin{align}
             \label{eq:ferk3_independence}
            & \fcum[E] = 0 \,.
         \end{align}
     \end{enumerate}
 \end{proposition}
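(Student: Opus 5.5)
The plan is to copy the proof of Proposition \ref{prop:independence}, with Proposition \ref{th:fcumiterformulas} playing the role of \eqref{eq:expressionE}. The first step is to write, for every $E\subset I$ containing $(\ell,j)$, the exact expansion of $\E[b^E]$ from \eqref{eq:expressionE_oddfer} when $|E|$ is odd and from \eqref{eq:expressionE_fer} when $|E|$ is even. In both cases I sum over subsets $U\subset E$ with $(\ell,j)\in U$, keeping only even $U$ when $|E|$ is even. I then split this sum by the size of $U$:
\begin{align*}
\E[b^E] &= \ind(|E|\,\text{odd})\,\sgn(I_\ell,E\setminus I_\ell)\,\fcum[I_\ell]\,\E[b^{E\setminus I_\ell}]
+\sum_{\substack{I_\ell\subset p\subset E\\ |p|=2}}\sgn(p,E\setminus p)\,\fcum[p]\,\E[b^{E\setminus p}]\\
&\quad+\sum_{\substack{U\subset E,\ (\ell,j)\in U,\ |U|\ge 3}}\ind(|E|\text{ odd or }|U|\text{ even})\,\sgn(U,E\setminus U)\,\E[b^{E\setminus U}]\,\fcum[U].
\end{align*}
Here $|U|=1$ is allowed only for odd $|E|$, and $|U|=2$ is always allowed since $2$ is even. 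Using $\fcum[I_\ell]=\E[b(j)]$ and $\fcum[p]=\E[b^p]$ (the lowest-order formulas), the first two terms are exactly the right-hand side of \eqref{eq:ferE3_independence}.

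The easy direction, (1') $\Rightarrow$ (1), is then immediate. Every $U$ in the last sum satisfies $U\subset E\subset I$, $(\ell,j)\in U$ and $|U|\ge3$, so $U\in\mathcal{E}(3,I,\ell)$. By (1') each such $\fcum[U]$ vanishes, the last sum is zero, and \eqref{eq:ferE3_independence} follows.

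For (1) $\Rightarrow$ (1'), assumption (1) gives, for each $n\ge 3$, the statement $S_n$: the last sum restricted to $|U|\ge n$ vanishes for all $E\in\mathcal{E}(3,I,\ell)$. For $n=3$ this is exactly what (1) says after subtracting the first two terms. I then induct on $n$, as in the symmetric proof. Assume $S_k$ holds and take $E\in\mathcal{E}(3,I,\ell)$ with $|E|=k$. The only admissible $U$ with $|U|\ge k$ is $U=E$. Its indicator equals $1$, since $|E|$ odd or $|U|=|E|$ even always holds. Its sign is $\sgn(E,\emptyset)=1$, and $\E[b^\emptyset]=1$. So $S_k$ gives $\fcum[E]=0$. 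Hence $\fcum$ vanishes on every $U\in\mathcal{E}(3,I,\ell)$ with $|U|=k$, and removing these terms from the sums in $S_k$ yields $S_{k+1}$. Since $I$ is finite, this covers every $E\in\mathcal{E}(3,I,\ell)$.

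The main obstacle is bookkeeping rather than anything conceptual. I must check that the parity indicator and the signs $\sgn(U,E\setminus U)$ coming from Proposition \ref{th:fcumiterformulas} are exactly what is needed: the $|U|=1$ term must appear only for odd $|E|$, and the sign and indicator of the top term $U=E$ must both be trivial. Beyond this check, the argument is identical to the proof of Proposition \ref{prop:independence}.
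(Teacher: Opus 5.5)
Your proposal is correct and follows essentially the same route as the paper's proof. You expand $\E[b^E]$ via Proposition \ref{th:fcumiterformulas} (the paper's \eqref{eq:proof_fer_independence_sym}) and identify the $|U|\le 2$ terms with the right-hand side of \eqref{eq:ferE3_independence}. You then run the same induction on the minimal cluster size, in which for $|E|=k$ only the top term $U=E$ survives, with trivial sign and indicator.
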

 \begin{remark}
     Note that Proposition \ref{prop:independence_fer} implies that 
     $\fcum[E] = 0$ for all $E\in\mathcal{E}(2,I,\ell)$ if and only if $\E[b^E]= 0$ for all $E\in\mathcal{E}(2,I,\ell)$.  The proof of this statement is very similar to the one given below, and we will skip its details.
 \end{remark}
\begin{proof}
If $n>|I|$, we have
$\mathcal{E}(n,I,\ell)=\emptyset$ and all statements assuming $E\in \mathcal{E}(n,I,\ell)$ hold vacuously.  Hence, we may assume $|I|\ge 3$.

By \eqref{eq:expressionK_fer}, if  $(\ell,j_\ell)\in E\subset I$, then
\begin{align}
        \label{eq:proof_fer_independence_sym}
        \E[b^E]
        &=
        \sum_{\substack{F\subset E,\\(\ell,j)\in F}} \ind(|E| \text{ is odd or } |F| \text{ is even})
        \sgn(F,E\setminus F)
        \E[b^{E\setminus F}]\fcum[F]\,.
    \end{align}
 In the sum, if $|F|\ge n$, then $F\in \mathcal{E}(n,I,\ell)$.
 Hence, condition \eqref{eq:ferk3_independence} implies \eqref{eq:ferE3_independence}.

    Suppose that \eqref{eq:ferE3_independence} holds. Then \eqref{eq:proof_fer_independence_sym}
    implies for $n=3$ that
    \begin{align}
        \label{eq:proof_fer_independence_sym2}
        \sum_{U\in\mathcal{E}(n,E,\ell)}
        \ind(|E| \text{ is odd or } |U| \text{ is even})
        \sgn(U,E\setminus U)
        \E[b^{E\setminus U}]\fcum[U]
        =0\,,
        \quad \text{for all }
        E\in \mathcal{E}(3,I,\ell).
    \end{align}
    Suppose that $k\ge 3$ and that \eqref{eq:proof_fer_independence_sym2} holds for $n=k$ and that \eqref{eq:ferk3_independence} holds for every $E\in\mathcal{E}(3,I,\ell)$ with $|E|<k$ which is trivially true for the base case $k=3$. Assume $
E\in \mathcal{E}(3,I,\ell)$ with $|E|=k$. Then
$\mathcal{E}(k,E,\ell)=\{E\}$.  Hence, in
\eqref{eq:proof_fer_independence_sym2}
necessarily $U=E$ which
implies that $\fcum[E]=0$.
Hence, \eqref{eq:ferk3_independence}
holds if $|E|\le k$ and thus  \eqref{eq:proof_fer_independence_sym2} holds for $n=k+1$. Hence, by induction
the condition \eqref{eq:ferk3_independence} holds.
\end{proof}

Below is a similar results which shows that antisymmetic cumulants also can be used to check if it is possible to ``factor out a pair'' from an index sequence.
\begin{proposition}
     Fix $I\in\mathcal{I}(J)$ with $|I|\ge 2$, and choose a pair $p\subset I$, $|p|=2$.
     Let 
     \[
     \mathcal{E}(I,p) = \{E\subset I\colon  E\cap p\ne \emptyset\}\,, \qquad
         \mathcal{E}_1(I,p) = \{E\subset I\colon  |E\cap p|=1\}.
     \]
     Then the following two conditions are equivalent:
     \begin{enumerate}
        \item[(1)] For all $E\in \mathcal{E}(I,p)$,
         \begin{align}
             \label{eq:ferEpair_independence}
             \E[b^E]=&
         \sum_{p\subset U\subset E} \ind(|E| \text{ is odd or } |U| \text{ is even})
        \sgn(U,E\setminus U)
        \E[b^{E\setminus U}]\fcum[U]\,,
         \end{align}
        where the right hand side is vacuously zero, in case $p\not \subset E$.
     \item [(1')]
      For all $E\in\mathcal{E}_1(I,p)$, we have
         \begin{align}
             \label{eq:ferkpair_independence}
             \fcum[E] =& 0 \,.
         \end{align}
     \end{enumerate}
\end{proposition}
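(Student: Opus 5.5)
The plan is to mirror the proof of the corresponding symmetric statement, using the fermionic expansion of Proposition \ref{th:fcumiterformulas} in place of \eqref{eq:expressionE}. Fix an element $(\ell,j)\in p$. For every $E\in\mathcal{E}(I,p)$ we may then apply \eqref{eq:expressionK_fer} (rearranged as in \eqref{eq:proof_fer_independence_sym}), choosing as reference label an element of $p\cap E$. This gives
\begin{align*}
\E[b^E]=\sum_{\substack{F\subset E,\\ F\cap p\ni(\ell',j')}} \ind(|E| \text{ is odd or } |F| \text{ is even})\sgn(F,E\setminus F)\E[b^{E\setminus F}]\fcum[F],
\end{align*}
where $(\ell',j')\in p\cap E$ is the chosen reference element. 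If $p\subset E$, take $(\ell',j')=(\ell,j)$. The clusters $F$ containing the reference element split into two groups: those with $p\subset F$, which reproduce exactly the right-hand side of \eqref{eq:ferEpair_independence}, and those with $|F\cap p|=1$, i.e.\ $F\in\mathcal{E}_1(I,p)$. If $p\not\subset E$, every $F$ in the sum lies in $\mathcal{E}_1(I,p)$, consistent with the right-hand side of \eqref{eq:ferEpair_independence} being vacuously zero. Hence, for all $E\in\mathcal{E}(I,p)$,
\begin{align*}
\E[b^E]-\mathrm{RHS}\eqref{eq:ferEpair_independence}=\sum_{\substack{F\subset E,\ (\ell',j')\in F\\ F\in\mathcal{E}_1(I,p)}}\ind(|E| \text{ is odd or } |F| \text{ is even})\sgn(F,E\setminus F)\E[b^{E\setminus F}]\fcum[F].
\end{align*}
This identity at once gives (1') $\Rightarrow$ (1).

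For (1) $\Rightarrow$ (1'), I would argue by induction on $|E|$ over $E\in\mathcal{E}_1(I,p)$. Such an $E$ itself belongs to $\mathcal{E}(I,p)$, and its reference element is the unique element of $E\cap p$. Assuming (1), the remainder sum above vanishes for this $E$. Every $F$ appearing in it is a subset of $E$ containing the reference element. Since $E$ contains only one element of $p$, every such $F$ automatically lies in $\mathcal{E}_1(I,p)$. By the induction hypothesis $\fcum[F]=0$ for all proper $F\subsetneq E$, so only the term $F=E$ survives. Its coefficient is $\ind(\cdot)\sgn(E,\emptyset)\E[b^\emptyset]=1$, since for $F=E$ the indicator is always $1$: either $|E|$ is odd, or $|E|=|F|$ is even. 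Therefore $\fcum[E]=0$. The base case $|E|=1$ is covered by the same argument, with the sum reducing to $F=E$; the statement makes no restriction to $|E|\ge2$.

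The main obstacle is the bookkeeping of the reference element. When $p\not\subset E$ it must be the element of $p$ that actually lies in $E$, so the choice depends on $E$. One has to check that \eqref{eq:expressionK_fer} is valid for any choice of reference element, which it is, since Proposition \ref{th:fcumiterformulas} holds for every fixed $(\ell,j)\in I$ applied to the sequence $E$. One also has to verify that the indicator and sign conventions, in particular $\sgn(E,\emptyset)=1$ and the parity restriction, cause no problems in the surviving term. Unlike the symmetric case, no separate treatment of the factorised term $\E[b^{p\cap E}]\E[b^{E\setminus p}]$ is needed, because the fermionic statement keeps all clusters containing $p$ on the right-hand side.
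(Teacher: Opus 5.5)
Your proposal is correct and follows the paper's proof essentially step for step. The paper also expands $\E[b^E]$ via \eqref{eq:expressionK_fer} around a reference element of $E\cap p$, and splits the clusters into those containing $p$ and those in $\mathcal{E}_1(I,p)$ to get (1') $\Rightarrow$ (1). It then proves (1) $\Rightarrow$ (1') by induction on $|E|$ over $\mathcal{E}_1(I,p)$, where only the term $F=E$ survives. The only cosmetic difference is that the paper phrases the induction through an auxiliary lower bound $|F|\ge n$ on the cluster sizes, whereas you use strong induction directly.
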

\begin{proof}
Suppose $E\in \mathcal{E}(I,p)$ and fix
$(\ell,j)\in E\cap p$.  Denote
$I_\ell := \{(\ell,j)\}$.
Then,
by \eqref{eq:expressionK_fer},
\begin{align}
        \label{eq:proof_ferpair_independence}
        \E[b^E]
        &=
        \sum_{\substack{F\subset E,\\(\ell,j)\in F}} \ind(|E| \text{ is odd or } |F| \text{ is even})
        \sgn(F,E\setminus F)
        \E[b^{E\setminus F}]\fcum[F]\,.
    \end{align}
If in this sum $p\not \subset F$, we have $F\in  \mathcal{E}_1(I,p)$.
Otherwise, $p\subset F\subset E$, and we obtain
\begin{align}\label{eq:proof_pair_fer_independence_sym}
        \E[b^E]
        &=
        \sum_{p\subset U\subset E} \ind(|E| \text{ is odd or } |U| \text{ is even})
        \sgn(U,E\setminus U)
        \E[b^{E\setminus U}]\fcum[U]
      \nonumber \\ & \qquad
      +
        \sum_{I_\ell \subset F\subset E, F\in  \mathcal{E}_1(I,p)} \ind(|E| \text{ is odd or } |F| \text{ is even})
        \sgn(F,E\setminus F)
        \E[b^{E\setminus F}]\fcum[F]\,
        ,
        \end{align}
 Hence, the condition
    \eqref{eq:ferkpair_independence} implies that the condition
    \eqref{eq:ferEpair_independence} holds.

    Suppose that  the condition \eqref{eq:ferEpair_independence} holds. Applying this to $E\in\mathcal{E}_1(I,p)\subset\mathcal{E}(I,p)$, we obtain
      from \eqref{eq:proof_pair_fer_independence_sym} that
    for $n=1$
    \begin{align}
        \label{eq:proof_pair_fer_independence_sym2}
        & \sum_{F\in  \mathcal{E}_1(I,p),\, F\subset E,\, |F|\ge n} \ind(|E| \text{ is odd or } |F| \text{ is even})
        \sgn(F,E\setminus F)
        \E[b^{E\setminus F}]\fcum[F] = 0\,,
        \nonumber \\ &
        \quad
        \text{for all }
        E\in\mathcal{E}_1(I,p)\,.
    \end{align}
   Suppose that $k\ge 1$ and that \eqref{eq:proof_pair_fer_independence_sym2} holds for $n=k$ and that
   \eqref{eq:ferkpair_independence} holds for every
   $E\in\mathcal{E}_1(I,p)$ with $|E|<k$ which is trivially true for the base case $k=1$. Assume $
E\in \mathcal{E}_1(I,p)$ with $|E|=k$. Then, only the term $F=E$ appears on the left hand side of \eqref{eq:proof_pair_fer_independence_sym2}, and thus
 $\fcum[E]=0$.
Hence, \eqref{eq:ferkpair_independence}
holds if $|E|\le k$ and thus  \eqref{eq:proof_pair_fer_independence_sym2} holds for $n=k+1$. Hence, by induction
the condition \eqref{eq:ferkpair_independence} holds always.
\end{proof}

The following two results point out the main benefit of using the antisymmetric
cumulants for  anticommuting and canonically anticommuting elements in the
algebra: the cumulants are  then completely anticommuting, apart possibly from
order two.

\iffalse
\begin{proposition}
    \label{prop:cum_per_invariance_F}
    Suppose $I\in\I(J)$ is an index sequence for elements which satisfies
    the following generalized canonical anticommutation relations: if $a,a'\in
    \pr_2(\es{b}{I})$, then there is a scalar $c(a,a')\in \C$ such that $\{a,a'\}=c(a,a')\id$.
    Suppose $P:\N\to\N$ is a permutation of $I$, in the sense
    that $P$ leaves $\N\setminus \pr_1(I)$ invariant. If $|I|\ne2$, then
    \begin{align}
        \label{eq:cum_per_invariance_F}
        \fcum[I]=\sgn(P)\fcum[P[I]]
        \,,
    \end{align}
    where $\sgn(P)$ denotes the sign of the finite permutation
    $P|_I$.

    If the elements are pairwise anticommuting, $c(a,a')=0$ for all $a,a'\in
    \pr_2(\es{b}{I})$, then \eqref{eq:cum_per_invariance_F} holds also if $|I|=2$.
\end{proposition}
\begin{remark}\label{rem:signsofpermutI}
 In the above statement, the value of $\sgn(P)$ is computed from the
 restriction $P|_I$ and not from the map $P$ itself.  More precisely, given any
 $\pr_1(I)$ there is a unique increasing function $f$ such that it maps the
 standard enumerating set $M:=\{1,2,\ldots\,|I|\}$ bijectively onto
 $\pr_1(I)$.  By assumption, $Q:=P|_I:I\to I$ is a bijection, and thus
 $P':=f^{-1}\circ Q \circ f$ is a permutation of $M$.  In the above, we define
 $\sgn(P):=\sgn(P')$.  Note that by \eqref{eq:fer_cumorder}, then also
 $\fcum[I]=\fcum[f(M)]$ and $\fcum[P[I]]=\fcum[f(P'[M])]$.
\end{remark}
\fi

\begin{proposition}
    \label{prop:cum_per_invariance_F}
    Suppose $I\in\I(J)$ is an index sequence for elements which satisfies
    the following generalized canonical anticommutation relations: if $a,a'\in
    \pr_2(\es{b}{I})$, then there is a scalar $c(a,a')\in \C$ such that $\{a,a'\}=c(a,a')\id$.
    Suppose $P:\N\to\N$ is a permutation, then
    \begin{align}
        \label{eq:cum_per_invariance_F}
        \fcum[I]=\sgn(P|_I)\fcum[P[I]]
        \,,
    \end{align}
    where $\sgn(P|_I)$ denotes the sign of the permutation $P|_I\colon \pr_1(I)\to\pr_1(I)$, specified in Remark \ref{rem:signsofpermutI}.

    If the elements are pairwise anticommuting, $c(a,a')=0$ for all $a,a'\in
    \pr_2(\es{b}{I})$, then \eqref{eq:cum_per_invariance_F} holds also if $|I|=2$.
\end{proposition}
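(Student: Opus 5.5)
The plan mirrors the proof of Proposition \ref{prop:cum_per_invariance}, with signs tracked via Lemma \ref{lem:composition}. The statement as printed omits the hypothesis $|I|\ne 2$; it is clearly intended, as the last sentence shows, and I assume it.

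\emph{Trivial cases and reductions.} For $|I|\le1$ the claim is immediate. In the pairwise anticommuting case with $|I|=2$,
\[
\fcum[a,a']=\E[aa']=-\E[a'a]=-\fcum[a',a].
\]
By \eqref{eq:fer_cumorder} we may assume $\pr_1(I)=\{1,\dots,n\}$ and that $P$ permutes this set. Since $\sgn$ is multiplicative, it suffices to treat adjacent swaps $P$ exchanging $m$ and $m+1$, where $\sgn(P|_I)=-1$. We then prove $\fcum[P[I]]=-\fcum[I]$ for $|I|\ge3$ by induction on $|I|$.

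\emph{Expanding both moments.} Write $I$ and $P[I]$ via \eqref{eq:m2k_Fodd}, the latter in the form \eqref{eq:m2k_F_P}: a sum over $\pi\in\p'_2(I)$ with signs $\sgn(\pi;P)$ and factors $\fcum[P[A]]$. For the single-block partition $\pi=\{I\}$ the term is $\fcum[P[I]]$. For every other partition, split by whether $m$ and $m+1$ lie in the same cluster.

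If they lie in different clusters, then $P[A]$ preserves the internal order of every cluster, so $\fcum[P[A]]=\fcum[A]$. Reordering the two elements across clusters flips the permutation sign, so $\sgn(\pi;P)=-\sgn(\pi)$.

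If they lie in the same cluster $A_0$ with $|A_0|\ge3$, then $|A_0|<|I|$ and the induction hypothesis gives $\fcum[P[A_0]]=-\fcum[A_0]$. Here $\sgn(\pi;P)=\sgn(\pi)$, since the relative order between clusters is unchanged.

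If $A_0=\{m,m+1\}$ is exactly a pair, the CAR give
\[
\fcum[P[A_0]]=\E[b(j_{m+1})b(j_m)]=c_{m,m+1}-\fcum[A_0],
\]
with $c_{m,m+1}:=c(b(j_m),b(j_{m+1}))$ and $\sgn(\pi;P)=\sgn(\pi)$.

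Hence every proper term of $\E[b^{P[I]}]$ equals minus the corresponding term of $\E[b^I]$, except for the extra contributions
\[
c_{m,m+1}\sum_{\pi\ni A_0}\sgn(\pi)\prod_{A\in\pi\setminus\{A_0\}}\fcum[A].
\]
Lemma \ref{lem:composition}, with $E=A_0$, gives $\sgn(\pi)=\sgn(A_0,I\setminus A_0)\sgn(\pi\setminus\{A_0\})$. Since $A_0$ consists of adjacent labels, $\sgn(A_0,I\setminus A_0)=1$, because moving an adjacent pair to the front is an even permutation. By \eqref{eq:m2k_Fodd} the extra contributions therefore sum to $c_{m,m+1}\E[b^{I\setminus A_0}]$.

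\emph{Comparing with the algebra.} The relation $b(j_{m+1})b(j_m)=c_{m,m+1}\id-b(j_m)b(j_{m+1})$ gives directly
\[
\E[b^{P[I]}]=-\E[b^I]+c_{m,m+1}\E[b^{I\setminus A_0}].
\]
Subtracting this from the expansion, all terms cancel except the single-block ones, leaving $\fcum[P[I]]=-\fcum[I]$. For the base case $|I|=3$, the only proper clusters are singletons and pairs. Singletons cannot contain both $m,m+1$, and a pair containing both is $A_0$ itself, so no induction hypothesis is needed and the same computation applies.

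The main obstacle is the sign bookkeeping, specifically verifying $\sgn(\pi;P)=-\sgn(\pi)$ when $m$ and $m+1$ sit in different clusters, and doing so uniformly in the presence of one odd cluster. I would handle this by choosing the reference ordering of the clusters to be the same for $\pi$ and $P(\pi)$. The defining permutations then differ only by composition with the transposition $(m\ m+1)$ on the source side. When $m,m+1$ lie in the same cluster, that cluster's internal order also changes, so the transposition is absorbed into the cluster and the overall sign is unchanged.
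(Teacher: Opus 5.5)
Your proof is correct. You are also right that the general statement is meant for $|I|\ne 2$; the paper's own proof says there is nothing to prove for $|I|=2$ outside the pairwise anticommuting case.

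Your route differs from the paper's in which expansion gets compared. You expand both moments over full partitions via \eqref{eq:m2k_Fodd} and \eqref{eq:m2k_F_P}, exactly as in the paper's proof of Proposition~\ref{prop:cum_per_invariance}. The paper instead uses the rooted recursion \eqref{eq:expressionK_fer} of Proposition~\ref{th:fcumiterformulas}, with the root fixed at $(m,j_m)$. It matches the term for each $E\ni(m,j_m)$ with the term for $P[E]$ in the corresponding expansion of $\fcum[P[I]]$.

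The case split is the same: $m+1\notin E$, $p\subsetneq E$, and $E=p$, where $p$ is your $A_0$. In the paper's version, only the signs of two-block splits $\sgn(E,I\setminus E)$ need comparing. The anticommutator contribution then appears directly as $c\,\E[b^{I\setminus p}]$ from the single term $E=p$, so no re-summation through Lemma~\ref{lem:composition} is needed.

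Your full-partition version has two extra costs:
\begin{itemize}
\item the sign identity $\sgn(\pi;P)=-\sgn(\pi)$ for partitions separating $m$ and $m+1$, which your composition-with-a-transposition argument establishes correctly regardless of where the odd cluster sits;
\item the re-summation of the pair terms into $c\,\E[b^{I\setminus A_0}]$.
\end{itemize}
In return, your argument is uniform in $|I|\ge 3$. The paper checks the base case $|I|=3$ separately, by explicitly computing the third-order cumulants.
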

\begin{remark}\label{rem:signsofpermutI}
    In the above statement, the value $\sgn(P|_I)$ is computed as 
    follows. Given any
     $\pr_1(I)\subset\N$ there is a unique increasing function $f\colon P (\pr_1(I)) \to \pr_1(I)$. 
     %such that $f\circ P (\pr_1(I)) = \pr_1(I)$. 
     Therefore, $P|_I:=f\circ P|_{\pr_1(I)}$ 
     is a bijection on the finite set $\pr_1(I)$ and $\sgn(P|_I)$ denotes its sign. 
      Note that by \eqref{eq:fer_cumorder}, then also
     %$\fcum[I]=\fcum[f[I]]$ and 
     $\fcum[P[I]]=\fcum[P|_I[I]]$.
\end{remark}

\begin{proof}
The proof is an adaptation of the argument from  the proof of  \cite[Theorem
A.1]{lukkarinen_not_2009}.  It is sufficient to consider sequences $I$ with the
standard enumerating set, assuming $\pr_1(I)=\{1,2,\ldots,|I|\}$, since then
there is a increasing function to map the result to $I$ with general
enumerating sets, as explained in Remark \ref{rem:signsofpermutI}.

If $|I|=1$, only identity permutation is allowed and the statement is obvious.
If $|I|=2$, we have $\es{b}{I}=\{(1,a),(2,a')\}$ and,
then, $\fcum[I]=\E[aa']=-\E[a'a]+c(a,a')=-\fcum[P[I]]+c(a,a')$ for the
permutation $P$ which swaps the elements, i.e., $(12)\mapsto (21)$.  Since this
is the only nontrivial permutation and $\sgn(P)=-1$, we have proved the result
assuming $|I|=2$ for the fully anticommuting case for which $c(a,a')=0$.  There
is nothing to prove in the general statement for these $I$.

It remains to prove the general statement for $I$ such that $|I|\ge 3$.
We do this by induction in $k=|I|$.
Assume thus that $k\ge 3$ and $P$ is a corresponding permutation.
Since any permutation can be written as a product of swaps of neighbouring
elements, it suffices to prove the result for any $m$, $1\le m\le k-1$, using a
permutation $P$ for which $P(m)=m+1$, $P(m+1)=m$, and $P(n)=n$ for $n\ne
m,m+1$.  Then $\sgn(P)=-1$ and
$b(j_m)b(j_{m+1})=-b(j_{m+1})b(j_m)+ c \id$, with $c=c(b(j_m),b(j_{m+1}))\in
\C$ given. Here,  $j_\ell\in J$ is the 
index for which $(\ell,j_\ell)\in I$. Denoting $p=\{(m,j_m),(m+1,j_{m+1})\}\subset I$, we obtain
\begin{align}\label{eq:basicCAR}
  \E[b^I] = -\E[b^{P[I]}] + c \E[b^{I\setminus p}]\,.
\end{align}

For the base case, $|I|=3$, we considering first $m=1$.  Then, explicitly
\begin{align*}
 & \fcum[b_1, b_2, b_3] +  \fcum[b_2, b_1, b_3]  =
\E[b_1 b_2 b_3] + \E[b_2 b_1 b_3]
\\ & \qquad
 - (\E[b_1 b_2]+\E[b_2 b_1]) \E[b_3]
 + \E[b_1 b_3] \E[b_2] +  \E[b_2 b_3] \E[b_1]
 - \E[b_2 b_3] \E[b_1] - \E[b_1 b_3] \E[b_2]
\\ & \quad
 = c \E[b_3] - c \E[b_3]=0\,,
\end{align*}
where the shorthand notation $b_\ell=b(j_\ell)$ was used.
Similarly, one can prove the case with $m=2$ by cheking that $\fcum[b_1, b_2,
b_3] +  \fcum[b_1, b_3, b_2] =0$.  Hence, for $k=3$ the statement holds.

Suppose then that $k\ge 4$ and the result holds whenever $|I|<k$.
Assume $|I|=k$. We obtain from  \eqref{eq:expressionK_fer}:
 \[
   \fcum[I] = \E[b^I]
        -
        \sum_{\substack{E\subsetneq I,\\(m,j_m)\in E}}
        \ind(|I| \text{ is odd or } |E| \text{ is even})
        \sgn(E,I\setminus E)
        \E[b^{I\setminus E}]\fcum[E]\,.
  \]
Consider some $E$ in the above sum.  Then $|E|<k$, and our goal is first to show that, if $E\ne p$, then
\begin{align}\label{eq:antisymmbasestep}
  \sgn(E,I\setminus E)
        \E[b^{I\setminus E}]\fcum[E]
        =
         -\sgn(P[E],P[I]\setminus P[E])
        \E[b^{P[I]\setminus P[E]}]\fcum[P[E]]\,.
\end{align}
If $m+1\not \in E$, $\fcum[P[E]]=\fcum[E]$, since then $P|_E$ can be collapsed to identity using \eqref{eq:fer_cumorder}; for instance, clearly $b^E = b^{P[E]}$.  However, $\sgn(E,I\setminus E)=-\sgn(P[E],P[I]\setminus P[E])$ since a permutation for the latter split is obtained by first acting on $P[I]$ with $P=P^{-1}$ and then using the permutation $I\mapsto (E,I\setminus E)$.  Thus,  \eqref{eq:antisymmbasestep} holds in this case.

Otherwise, $p\subset E$, so $2\le |E|<k$.  If $E\ne p$, then $|E|\ge 3$, and we can use the induction assumption and conclude that $\fcum[E]=-\fcum[P[E]]$.  In this case,
$P$ leaves $I\setminus E$ invariant and $P|_E$ is a permutation of $E$.
Hence, $\sgn(E,I\setminus E)=\sgn(P[E],P[I]\setminus P[E])$.
We conclude that \eqref{eq:antisymmbasestep} holds also in this case.

Finally, if $E=p$, then $\sgn(E,I\setminus E)=+1=\sgn(P[E],P[I]\setminus P[E])$, and
\[
 \fcum[E] = \E[b^p] = -\E[b^{P[p]}] + c = -\fcum[P[E]] + c \,.
\]
Therefore, for all allowed $E$
\[
  \sgn(E,I\setminus E)
        \E[b^{I\setminus E}]\fcum[E]
        =
         -\sgn(P[E],P[I]\setminus P[E])
        \E[b^{P[I]\setminus P[E]}]\fcum[P[E]] + \ind(E=p)c
        \E[b^{I\setminus p}]\,,
\]
which implies that
\[
 \fcum[I] + \fcum[P[I]]
 = \E[b^I] + \E[b^{P[I]}] - c \E[b^{I\setminus J}]
       \sum_{\substack{E\subsetneq I,\\(m,j_m)\in E}}\ind(|I| \text{ is odd or } |E| \text{ is even}) \ind(E=J)\,.
\]
Since $|I|\ge 4$, we  have $p\subsetneq I$ and $|p|$ is even, and thus the sum over  the indicator function evaluates to one for any such $I$. Therefore, by \eqref{eq:basicCAR},
$\fcum[I]=- \fcum[P[I]]$.   This completes the induction step and the proof of the Proposition.
\end{proof}

\begin{proposition}
    \label{prop:reg_m2k_F} \textup{(Regularized antisymmetric moments to cumulants formula)}
    Fix $L\in\N$ and take index sets $I',I_l\in\mathcal{I}, \ l=1,\dots,L$.
    Let $I=I'+\sum_{l=1}^{L}I_l$ and define $n_l=\max(\pr_1 (I'+\sum_{i=1}^{l-1}I_{i}))$ 
    so that $I_l+n_l\subset I$. 
    Suppose we are given a
    permutation $P\colon\N\to\N$. If
    \begin{align}
        \label{eq:regP1}
        \sgn(P[E],P[I]\setminus P[E])
        &=
        \sgn(P[E],P[I_l+n_l]\setminus P[E])
    \end{align}
    for all $E\subset I_l+n_l$, with even $|E|\ge 2$, and for all $l$, then
    \begin{align}
        \label{eq:regm2k_F}
       & \E\left[\Xi \biggl( \es{b}{P[I']}\times \Big(\bigtimes_{l=1}^L\fwick{\es{b}{P[I_l+n_l]}}\Big) \biggr)\right]
        \nonumber \\ & \quad
        =\sum_{\pi\in\p'_2(I)}^{}
        \sgn(\pi;P)
        \prod_{A\in\pi}^{}\ind(|A|\text{ odd, or }A\not\subset I_l+n_l\ \forall l) \fcum[P[A]]\,.
    \end{align}
    
    In particular, \eqref{eq:regP1} is satisfied and \eqref{eq:regm2k_F} holds if $P$ has the following property: for any $l$ and any
    $\ell_1,\ell_2\in \pr_1(I_l+n_l)$ with $P(\ell_1)<P(\ell_2)$, there is an even number of
    elements from different sets between them, i.e., 
    \begin{align}
        \label{eq:regP2}
    |\{\ell\in \pr_1(I):
    P(\ell_1)<\ell<P(\ell_2),\, \ell\not \in \pr_1(I_l+n_l) \}|
    \end{align}
    is even.  
    Moreover, if $P$ is the identity permutation, then
    \begin{align}
        \label{eq:identity_regm2k_F}
        \E\left[ {b}^{I'}\prod_{l=1}^L\fwick{{b}^{I_l}}\right]=\sum_{\pi\in\p'_2(I)}^{}
        \sgn(\pi)
        \prod_{A\in\pi}^{}\ind(|A|\text{ odd, or }A\not\subset I_l+n_l\ \forall l) \fcum[A].
    \end{align}
\end{proposition}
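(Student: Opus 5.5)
We follow the proof of Proposition \ref{prop:reg_m2k}: a double induction, the outer one in $L$ and the inner one in $|I_1|$, with signs tracked via Lemma \ref{lem:composition}.

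\emph{Base cases.} For $L=0$ the claim is \eqref{eq:m2k_F_P} applied to $P[I']$, which is the moments-to-cumulants formula rewritten with $\sgn(\pi;P)$. For the inner base case $I_1=\emptyset$, the statement reduces to the case $L-1$.

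\emph{Inductive step.} Assume $|I_1|=k\ge1$. Expand $\fwick{\es{b}{P[I_1+n_1]}}$ using \eqref{eq:fer_wickseq}. The leading term $\es{b}{P[I_1+n_1]}$ merges with $\es{b}{P[I']}$ under $\times$, giving $\es{b}{P[I'+I_1]}$. By the induction hypothesis in $L$, its expectation is the sum over $\pi\in\p'_2(I)$ in which only the indicators for $l\ge2$ are imposed. Then, as in the symmetric proof, split each such $\pi$ uniquely as $\pi=\pi_0\cup\pi_1$:
\begin{itemize}
\item[] $\pi_0$ consists of the \emph{even} clusters contained in $I_1+n_1$;
\item[] $\pi_1$ is the rest.
\end{itemize}
Put $E=\cup\pi_0$; then $|E|$ is even. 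Lemma \ref{lem:composition}, applied to $P(\pi)$, gives
\[
\sgn(\pi;P)=\sgn(P[E],P[I]\setminus P[E])\,\sgn(\pi_0;P)\,\sgn(\pi_1;P).
\]
Summing over $\pi_0\in\p_2(E)$ and using \eqref{eq:m2k_F_P} for the even set $E$ turns $\sum_{\pi_0}\sgn(\pi_0;P)\prod\fcum[P[A]]$ into $\E[b^{P[E]}]$. The $E=\emptyset$ term is exactly the right hand side of \eqref{eq:regm2k_F}. For $E\ne\emptyset$, the $\pi_1$-sum is, by the inner induction hypothesis (the odd/non-contained indicator has the correct form for $(I_1+n_1)\setminus E$), the regularized expectation with $I_1+n_1$ replaced by $(I_1+n_1)\setminus E$.

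\emph{Matching the Wick recursion.} The subtracted terms in the Wick expansion carry the sign $\sgn(P[E],P[I_1+n_1]\setminus P[E])$. The combinatorics above produces $\sgn(P[E],P[I]\setminus P[E])$ instead. Hypothesis \eqref{eq:regP1} says precisely that these two signs agree, so the two expansions cancel term by term and \eqref{eq:regm2k_F} follows.

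\emph{Main obstacle.} The delicate point is the bookkeeping of signs. One must check that the inner induction hypothesis is applied with the same permutation $P$ on the smaller sequence $I'+(I_1\setminus E)+\dots$. This requires relabeling via \eqref{eq:fer_cumorder} (order-preserving maps), and one must verify that $\sgn(\pi_1;P)$ and condition \eqref{eq:regP1} are stable under that relabeling. One must also check that the decomposition is compatible with $\p'_2$: all clusters in $\pi_0$ are even, so the unique odd cluster, if any, lies in $\pi_1$.

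\emph{Sufficient condition \eqref{eq:regP2}.} The sign $\sgn(P[E],P[I]\setminus P[E])$ is $(-1)^N$, where $N$ counts pairs $(x,y)$ with $x\in P[E]$, $y\notin P[E]$ and $y<x$. Write $N=N_1+N_2$, where $N_1$ counts such $y$ in $P[I_l+n_l]$ and $N_2$ counts $y$ outside $P[I_l+n_l]$. Since $|E|$ is even, list the positions of $P[E]$ as $x_1<\dots<x_{2r}$. The parity of $N_2$ equals the parity of the number of outside elements lying in the gaps $(x_{2i-1},x_{2i})$ (plus an even correction). By \eqref{eq:regP2} each such gap contains an even number of outside elements, so $N_2$ is even. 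Hence the two signs agree and \eqref{eq:regP1} holds. Finally, the identity permutation clearly satisfies \eqref{eq:regP2}, since the sets $I_l+n_l$ are contiguous blocks; this gives \eqref{eq:identity_regm2k_F}.
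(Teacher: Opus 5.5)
\noindent Your proposal follows the paper's proof step by step: the double induction in $L$ and $|I_1|$, the splitting $\pi=\pi_0\cup\pi_1$ with Lemma~\ref{lem:composition}, and the use of \eqref{eq:regP1} to match the sign in \eqref{eq:fer_wickseq}. Your inversion count showing that \eqref{eq:regP2} implies \eqref{eq:regP1} is correct and more explicit than the paper's sketch.

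However, the point you flag as the main obstacle is a genuine gap, and it cannot be closed. The paper's proof shares it: it invokes ``the induction hypothesis in $|I_1|$'' without checking its hypothesis. For $L\ge 2$, \eqref{eq:regP1} is not inherited by the reduced data $I'$, $(I_1+n_1)\setminus E$, $I_2+n_2,\dots$ to which the inner induction hypothesis is applied. Taking $|E|=2$ shows that \eqref{eq:regP1} for block $l$ is equivalent to your gap condition: between any two elements of $P[I_l+n_l]$ there is an even number of elements of $P[I]\setminus P[I_l+n_l]$. Deleting $P[E]$ from block $1$ changes this count, for a block $l\ge 2$, by the number of elements of $P[E]$ lying in the gap. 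That number can be odd.

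In fact the statement is false for $L=2$. Take $b^{I'}=x_1x_2$, $b^{I_1}=y_1y_2$, $b^{I_2}=z_1z_2$, and let $P$ arrange $I$ in the order $y_1x_1z_1y_2x_2z_2$. Between $y_1,y_2$ lie $x_1,z_1$, and between $z_1,z_2$ lie $y_2,x_2$, so \eqref{eq:regP1} holds. The left hand side of \eqref{eq:regm2k_F} is
\[
\E[y_1x_1z_1y_2x_2z_2]-\E[y_1y_2]\,\E[x_1z_1x_2z_2]-\E[z_1z_2]\,\E[y_1x_1y_2x_2]+\E[x_1x_2]\,\E[y_1y_2]\,\E[z_1z_2]\,.
\]
The right hand side can be evaluated by inclusion--exclusion over the forbidden clusters $\{y_1,y_2\}$ and $\{z_1,z_2\}$. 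Both single-cluster terms carry sign $+1$, but the pairing $\{y_1,y_2\},\{x_1,x_2\},\{z_1,z_2\}$ has sign $-1$ in this order. So the right hand side is the same expression with last term $-\E[x_1x_2]\E[y_1y_2]\E[z_1z_2]$. The discrepancy $2\E[x_1x_2]\E[y_1y_2]\E[z_1z_2]$ is nonzero, e.g.\ for $\A=\C[u]$ with all six elements equal to $u$ and $\E[u^2]\ne 0$. Removing $E=\{y_1,y_2\}$ leaves $x_1z_1x_2z_2$, which is exactly the failing example in the Remark after the Proposition.

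What your argument does prove is the following:
\begin{itemize}
\item the case $L=1$, since the complement of the single block is $P[I']$ and is unaffected by deletions;
\item the identity case \eqref{eq:identity_regm2k_F}, since blocks stay contiguous.
\end{itemize}
These cover the later uses, since Proposition~\ref{prop:iteratedcum_F} inserts a single Wick factor. For $L\ge 2$ the hypothesis must be made hereditary, e.g.\ by requiring \eqref{eq:regP1} for every configuration obtained by deleting even subsets from the blocks. This property is inherited in both inductions, and with it your proof goes through.
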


\begin{remark}
First, we point out that, as a corollary, we obtain that $\E[\fwick{b^I}]=\fcum[I]$ if $|I|$ is odd. Recall the definition of $\times$ in (\ref{eq:prod_interlace}) and note that each $\pr_1(I')$, $\pr_1(I_l+n_l)$, $\dots,$ $\pr_1(I_L+n_L)$ are pairwise disjoint.
    Recall also how $\Xi$ acts from
    \eqref{eq:Xi} and the definition of the relative sign, $\sgn(\pi;P)$, of a
    partition $\pi\in\p_2(I)$ after a permutation $P$ of $I$.  The assumption
    on $P$ is used in the proof to find an even permutation to collect all
    elements of $P[I_l+n_l]$, for any $l$, into a consecutive subsequence.

    For general permutations $P$, no such simple expression can be expected in this antisymmetric case.  
    For example, take $I'=\{(1,j'_1),(2,j'_2)\}$, $I_1=\{(1,j_1),(2,j_2)\}$, 
    and choose $P$ such that $P(2)=3$, $P(3)=2$, and $P(i)=i$, otherwise.
    Then $I=\{(1,j'_1),(2,j'_2),(3,j_1),(4,j_2)\}$, 
    $P[I]=\{(1,j'_1),(2,j_1),(3,j'_2),(4,j_2)\}$, $P[I']=\{(1,j'_1),(3,j'_2)\}$,
    $P[I_1+n_1]=\{(2,j_1),(4,j_2)\}$.  Denoting $b_1=b(j_1)$, etc., we can evaluate the left hand side of (\ref{eq:regm2k_F}) to $\E[b'_1 b_1 b'_2 b_2]-\E[b'_1 b'_2] \E[b_1 b_2]$,
    while the right hand side yields
    \[
     \kappa[b'_1, b_1, b'_2, b_2] +\kappa[b'_1, b_1]\kappa[b'_2, b_2] + 
     \kappa[b'_1, b_2]\kappa[b_1,b'_2] = \E[b'_1 b_1 b'_2 b_2]+\E[b'_1 b'_2] \E[b_1 b_2]\,.
    \]
    Hence, in this case the Wick polynomial reqularization picks the wrong sign to cancel out the term which has a cluster internal to $I_1+n_1$.  This example shows that, in contrast to the symmetric case, it is not possible to do the regularization correctly without letting it to depend on $P$ and $I$ instead of just on $P[I_1+n_1]$.
\end{remark}

\begin{proof}
    We will do induction in $L$. We note that the case $L=0$ holds as it is the 
    standard moments to cumulants formula \eqref{eq:m2k_F_P} for the sequence
    $P[I']$, as detailed in Sec.\ \ref{sec:deffcum}.  All characteristic
    functions evaluate then to one since they satisfy the second condition
    vacuously.

    Take $m\in\N$. Suppose that \eqref{eq:regm2k_F} holds for $L < m$. We will prove that 
    \eqref{eq:regm2k_F} holds for $L=m$ by doing indunction in $|I_1| \ge 0$.
    If $|I_1|=0$, then $I_1=\emptyset$, and therefore
    $I=I'+\sum_{l=1}^{m}I_l=I'+\sum_{l=2}^{m}I_l$ and
    $\es{b}{I'}\prod_{l=1}^L\fwick{\es{b}{I_l}}=\es{b}{I'}\prod_{l=2}^m\fwick{\es{b}{I_l}}$.
    Hence, 
    by \eqref{eq:regm2k_F} for $L=m-1$, the identity \eqref{eq:regm2k_F} is
    true for $L=m$ and $|I_1|=0$. 

    Take $k\in\N$, and suppose that \eqref{eq:regm2k_F} holds for $L=m$ and $|I_1|<k$.
    Let $|I_1|=k$.
    For a given partition $\pi\in\p'_2(I)$ we define
    \[
        \pi_1:=\{A_1\in\pi\colon |A_1|\text{ odd, or }A_1\not\subset I_1+n_1\}
        \text{ and  }
        \pi_0:=\pi\setminus\pi_1.
    \]
    Now $\pi_0$ is either empty or contains only even size clusters which are
    subsets of $I_1+n_1$.
    Denote $E=\cup\pi_0$.  Then $|E|$ is even and  $E\subset I_1+n_1$.  If
    $|I|$ is odd, the only odd cluster belongs to $\pi_1$, and thus
    $\pi_0\in\p_2(E)$, $\pi_1\in\p'_2(I\setminus E)$. For a fixed $\pi$ the
    decomposition $\pi=\pi_0\cup\pi_1$ is unique, and hence 
    \[
    1=
    \sum_{\substack{E\subset I_1+n_1
    \\ |E| \text{ even}}}
    \sum_{\pi_0\in\p_2(E)}\sum_{\pi_1\in\p'_2(I\setminus
    E)}\ind(\pi=\pi_0\cup\pi_1)\prod_{A_1\in\pi_1}\ind(|A_1|\text{ odd, or
    }A_1\not\subset I_1+n_1).
    \]
    Next, we aim to use  Lemma \ref{lem:composition} to compute the sign of the
    corresponding split.  We recall that $\sgn(\pi;P)=\sgn(P(\pi))$, and now
    $P(\pi)=P(\pi_0)\cup P(\pi_1)$, thus we conclude that
    \begin{align*}
     &\sgn(\pi;P)=\sgn(P(\pi_0))
    \sgn(P(\pi_1))\sgn(P[E],P[I]\setminus P[E])
    \\& \quad =\sgn(P[E],P[I]\setminus P[E]) \sgn(\pi_0;P)\sgn(\pi_1;P)\,.
    \end{align*}
    Therefore, we can apply these results in the
    regularized moments-to-cumulants formula \eqref{eq:regm2k_F} for $L=m-1$ together with the fact $\es{b}{P[I']}\times\es{b}{P[I_1+n_1]}=\es{b}{P[I'+I_1]}$, and obtain
    \begin{align*}
        &\E\left[\Xi  \, \es{b}{P[I']}\times\es{b}{P[I_1+n_1]} \times \Big(\bigtimes_{l=2}^m\fwick{\es{b}{P[I_l+n_l]}}\Big)\right]
        \\
        &=\E\left[\Xi \, \es{b}{P[I'+I_1]}\times 
        \Big(
        \bigtimes_{l=2}^m\fwick{\es{b}{P[I_l+n_l]}}
        \Big)\right]
        \\
        &=\sum_{\pi\in\p'_2(I)}^{}
        \sgn(\pi;P)
        \prod_{A\in\pi}^{}\ind(|A|\text{ odd, or }A\not\subset I_l+n_l\ \forall l=2,\dots,m) \fcum[P[A]]
        \\
        &=
        \sum_{\pi\in\p'_2(I)}^{}
        \sum_{\substack{E\subset I_1+n_1\\ |E| \text{ even}}}
        \sgn(P[E],P[I]\setminus P[E])
        \sum_{\pi_0\in\p_2(E)}
         \sgn(\pi_0;P)
        \\
        &\qquad\times
         \sum_{\pi_1\in\p'_2(I\setminus E)}
       \sgn(\pi_1;P)
        \ind(\pi=\pi_0\cup\pi_1)
\left(
            \prod_{A_1\in\pi_1}
            \ind(|A_1|\text{ odd, or }A_1\not\subset I_1+n_1)
        \right)
        \\ &\qquad\quad\times
        \prod_{A\in\pi}
        \ind(|A|\text{ odd, or }A\not\subset I_l+n_l \ \forall l=2,\dots,m)
        \fcum[P[A]]
        \\
        &=
        \sum_{\substack{E\subset I_1+n_1\\ |E| \text{ even}}}
        \sgn(P[E],P[I]\setminus P[E])
        \sum_{\pi_0\in\p_2(E)}
        \sgn(\pi_0;P)
        \sum_{\pi_1\in\p'_2(I\setminus E)}
        \sgn(\pi_1;P)
        \prod_{A_0\in\pi_0} \fcum[P[A_0]]
        \\
        &\qquad\times
        \left( 
        \prod_{A_1\in\pi_1}
        \ind(|A_1|\text{ odd, or }A_1\not\subset I_l+n_l \ \forall l)
        \fcum[P[A_1]]
        \right),
    \end{align*}
    where in the last step we used the indicator function $\ind(\pi=\pi_0\cup\pi_1)$ to compute the $\pi$ sum,
    and in the $A_1$ product we merged the remaining indicator functions. In the $A_0$ product, we 
    have used $E\subset I_1+n_1$ to conclude that $A_0\not\subset I_l+n_l \
    \forall l=2,\dots,m$ if $A_0\in \pi_0$.

    Splitting the sum into $E=\emptyset$ and $E\ne\emptyset$ yields
    \begin{align*}
        &\E\left[\Xi \, \es{b}{P[I']}\times\es{b}{P[I_1+n_1]}
        \times\Big(\bigtimes_{l=2}^m\fwick{\es{b}{P[I_l+n_l]}}\Big)\right]
        \\ &\quad
        =\sum_{\pi\in\p'_2(I)}^{}
        \sgn(\pi;P)
        \prod_{A\in\pi}^{}\ind(|A|\text{ odd, or }A\not\subset I_l+n_l\ \forall l) \fcum[P[A]]
        \\
         &\qquad +
       \sum_{\substack{\emptyset\ne E\subset I_1+n_1\\ |E| \text{ even}}}
        \sgn(P[E],P[I]\setminus P[E])
         \left( 
        \sum_{\pi_0\in\p_2(E)}
        \sgn(\pi_0;P)
        \prod_{A_0\in\pi_0} \fcum[P[{A_0}]]
         \right) 
         \\
        &\qquad\quad \times
        \left( 
        \sum_{\pi_1\in\p'_2(I\setminus E)}
        \sgn(\pi_1;P)
        \prod_{A_1\in\pi_1}
        \ind(|A_1|\text{ odd, or }A_1\not\subset I_l+n_l \ \forall l)
        \fcum[A_1]
         \right) 
        \\
        & \quad
       =\sum_{\pi\in\p'_2(I)}^{}
        \sgn(\pi;P)
        \prod_{A\in\pi}^{}\ind(|A|\text{ odd, or }A\not\subset I_l+n_l\ \forall
        l) \fcum[P[A]]
        \\
         & \qquad +
        \sum_{\substack{\emptyset\ne E\subset I_1+n_1\\ |E| \text{ even}}}
        \sgn(P[E],P[I]\setminus P[E])
         \E[ b^{P[E]}]
         \\
        &\qquad\quad \times
        \E\left[\Xi \, \es{b}{P[I']}\times\fwick{\es{b}{P[(I_1+n_1)\setminus
        E]}}\times \Big(\bigtimes_{l=2}^m\fwick{\es{b}{P[I_l+n_l]}}\Big)\right],
    \end{align*}
    where we used the standard moments to cumulants formula \eqref{eq:m2k_F}
    and the induction hypothesis in $|I_1|$. Note that given $E\subset I_1+n_1$
    where $|E|\ge 2$ is even, and recalling the assumption \eqref{eq:regP1} of allowed $P$,
    the sign of the permutation $P[I]\mapsto (P[E],P[I]\setminus P[E])$ is the
    same as that of $P[I_1+n_1] \mapsto (P[E],P[I_1+n_1 ]\setminus P[E])$.
    Hence, using linearity of $\E[\Xi\,\cdot]$ and \eqref{eq:fer_wickseq} for
    $\es{b}{P[I_1+n_1]}$ implies \eqref{eq:regm2k_F} for $L=m$ and $|I_1|=k$.
    Hence, induction in $|I_1|$ and induction in $L$ imply that
    \eqref{eq:regm2k_F} holds as stated in the Proposition.

    Suppose now that \eqref{eq:regP2} holds. Then we can use an even permutation to collect all
    elements of $P[I_1+n_1 ]$ together, after which a permutation separates
    $P[E]$, and as an even set, it can be moved to the beginning and the
    remaining elements of $P[I_1+n_1 ]\setminus P[E]$ moved back to their
    original positions in $P[I]\setminus P[E]$ using an even permutation.
    Therefore, \eqref{eq:regP2} implies \eqref{eq:regP1}.

    Note that for any $l$ and any
    $\ell_1,\ell_2\in \pr_1(I_l+n_l)$ with $\ell_1<\ell_2$, there are zero 
    elements of different sets between them.
    Hence, the identity permutation satisfies
    \eqref{eq:regP2}, and we obtain
        \eqref{eq:identity_regm2k_F}.
 \end{proof}

Let us finally state how cumulants of products are evaluated.  We give the
explicit formulae only for the simplest cases where only one of the arguments
is replaced by a monomial or a Wick polynomial.  More general cumulants of
monomials -formulas can be obtained from these by iteration.  The main
difference to the symmetric case is that now only insertion of odd monomials
leads to simplications in the resulting cumulant expansions.

\begin{proposition}
    \label{prop:iteratedcum_F} \textup{(Cumulants of products)}
    Suppose $I,I'\in \I(J)$ and $(\ell,j)\in I$.  
    Assume that $|I'|$ is odd.
    Let the permutation $P_\ell^{I',I}$ be as in Remark \eqref{rem:insert_permutation}.
    Suppose $\tilde{j}\in J$ and define
    $$\tilde{I}_\ell :=I\setminus\{(\ell,j)\}\cup\{(\ell, \tilde{j})\},$$
    We consider two options: (1) 
    $b(\tilde{j})=b^{I'}$ or (2) 
    $b(\tilde{j})=\fwick{b^{I'}}$. 
    
    Then, in case (1) we have $b^{\tilde{I}_\ell} =
    b^{I|_{<\ell}}{b^{I'}}b^{I|_{>\ell}} =
    b^{P_\ell^{I',I}(I'+I\setminus\ell)}$ and in case (2) $b^{\tilde{I}_\ell} = 
    b^{I|_{<\ell}}\fwick{b^{I'}}b^{I|_{>\ell}}$.  In addition, in case (1) 
    \begin{align}
        \label{eq:cofprod_F1}
       \fcum[{\tilde{I}_\ell}]
       = \sum_{\pi\in\p'_2(I'+I\setminus\ell)}^{}
        \sgn(\pi;P_\ell^{I',I})
        \prod_{A\in\pi}^{}\ind(A\cap I' \ne \emptyset) \fcum[P_\ell^{I',I}[{A}]].
    \end{align}
    and in case (2), 
    \begin{align}
        \label{eq:cofprod_F2}
       \fcum[{\tilde{I}_\ell}]
       = \sum_{\pi\in\p'_2(I'+I\setminus\ell)}^{}
        \sgn(\pi;P_\ell^{I',I})
        \prod_{A\in\pi}^{}\ind(A\cap I' \ne \emptyset)\ind(|A|\text{ odd, or }A\not\subset I') \fcum[P_\ell^{I',I}[{A}]].
    \end{align}
\end{proposition}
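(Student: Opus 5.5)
I will follow the template of the proof of Proposition \ref{prop:iteratedcum}, using the regularized antisymmetric moments-to-cumulants formula, Proposition \ref{prop:reg_m2k_F}, in place of Proposition \ref{prop:reg_m2k}. Set $P=P_\ell^{I',I}$ and $n_0=\max(\pr_1(I'))$. In case (1), I apply \eqref{eq:regm2k_F} with $L=1$, taking the un-regularized part to be $I'$ and the single Wick factor to be $I_1=I\setminus\ell$. In case (2), I take an empty un-regularized part and two Wick factors, $I'$ and $I\setminus\ell$. In both cases the right hand side of \eqref{eq:regm2k_F} should reduce to the right hand sides of \eqref{eq:cofprod_F1} and \eqref{eq:cofprod_F2}.

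The reduction of the indicators needs some care. The indicator $\ind(|A|\text{ odd, or }A\not\subset I\setminus\ell+n_0)$ must be compared with $\ind(A\cap I'\neq\emptyset)$. An odd cluster lying entirely inside $I\setminus\ell+n_0$ is still allowed by the regularized formula, so the two indicators do not agree directly. I expect this discrepancy to be absorbed by the left hand side: the object we actually want is $\fcum[\tilde I_\ell]$. Following the symmetric proof, I will introduce $I'_1=\{(1,\tilde j)\}$, which has odd length $1$, and $P_1=P_\ell^{I'_1,I}$. Then $\Xi\,\es{b}{P[I']}=\Xi\,\es{b}{I'_1}$ in case (1), and $\Xi\,\fwick{\es{b}{P[I']}}=\Xi\,\es{b}{I'_1}$ in case (2). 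Applying \eqref{eq:regm2k_F} with $I'_1$ in place of $I'$ gives a sum over $\pi\in\p'_2(I'_1+I\setminus\ell)$ whose indicators kill every even cluster inside $I\setminus\ell+1$. In the partition that consists only of the singleton $\{I'_1\}$ together with odd clusters, at most one odd cluster is allowed. Since $|I'_1|=1$, the constraint $\p'_2$ forces every cluster not containing $I'_1$ to be even, so all such clusters are excluded. Only $\pi=\{I'_1+I\setminus\ell\}$ survives, giving $\fcum[P_1[I'_1+I\setminus\ell]]=\fcum[\tilde I_\ell]$.

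The same parity argument applies on the $I'$ side. Because $|I'|$ is odd, any $\pi\in\p'_2(I'+I\setminus\ell)$ with some cluster disjoint from $I'$ of odd size would have to contain all its other clusters even. The clusters meeting $I'$ cover $I'$, whose length is odd, so at least one of them is odd, which contradicts $\p'_2$. Hence, among partitions weighted by the regularized formula, clusters disjoint from $I'$ are even and contained in $I\setminus\ell+n_0$, and are therefore excluded. This makes the indicator coincide with $\ind(A\cap I'\ne\emptyset)$, and in case (2) with the additional factor $\ind(|A|\text{ odd, or }A\not\subset I')$. This parity observation is the reason oddness of $|I'|$ is assumed.

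The main obstacle is verifying the sign hypothesis \eqref{eq:regP1}, or the sufficient condition \eqref{eq:regP2}, for $P=P_\ell^{I',I}$ and for $P_1$. Under $P$, the block $P[I']$ is consecutive, so nothing from another set lies between its elements. For the block $P[I\setminus\ell+n_0]$, the elements separating two of its entries are either none or the whole odd block $P[I']$, which makes \eqref{eq:regP2} fail. I would therefore check \eqref{eq:regP1} directly. For even $E\subset I\setminus\ell+n_0$, moving $P[E]$ to the front past the block $P[I']$ contributes the sign $(-1)^{|I'|\cdot(\text{number of elements of }E\text{ after the block})}$. This must be matched against the two evaluations, which both involve the same insertion with an odd block: $|I'|$ in one and $1$ in the other. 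Since only the parity of the block length enters, the signs on the two sides coincide. Equivalently, one can compare $\sgn(\pi;P)$ and $\sgn(\pi;P_1)$ after identifying $I'$ with $I'_1$, using Lemma \ref{lem:composition} and the fact that odd blocks of length $|I'|$ and length $1$ produce the same sign. I expect this bookkeeping, showing that the identity in \eqref{eq:regm2k_F} holds even when \eqref{eq:regP1} fails termwise because the failures cancel consistently, to be the delicate part. If necessary, I would instead redo the induction of Proposition \ref{prop:reg_m2k_F} directly for these two specific permutations.
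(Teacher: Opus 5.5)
Your reduction fails on two counts, the parity claim and the sign claim. As a result, the regularized expectation you build is not $\fcum[\tilde I_\ell]$, even when $|I'|=1$, where the statement itself is trivial.

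Take $I=\{(1,j_1),(2,j_2),(3,j_3)\}$, $\ell=2$, $I'=\{(1,j')\}$, and write $b_i=b(j_i)$, $b'=b(j')$. Then $P[I']=\{(2,j')\}$ and $\fwick{\es{b}{P[I\setminus\ell+n_0]}}=\es{b}{\{(1,j_1),(3,j_3)\}}-\E[b_1b_3]\,\emptyset$. Hence
\[
\E\bigl[\Xi\bigl(\es{b}{P[I']}\times\fwick{\es{b}{P[I\setminus\ell+n_0]}}\bigr)\bigr]=\E[b_1b'b_3]-\E[b_1b_3]\E[b']=\fcum[b_1,b',b_3]+\E[b_1b']\E[b_3]+\E[b'b_3]\E[b_1]-2\E[b_1b_3]\E[b'] .
\]
Both defects are visible here.

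\emph{Parity.} A cluster meeting $I'$ (or $I'_1$) also contains elements of $I\setminus\ell$, so it can be even. The oddness of $|I'|$ therefore says nothing about the clusters disjoint from $I'$. For example, $\{\{j',j_1\},\{j_3\}\}\in\p'_2$. Antisymmetric Wick polynomials subtract only \emph{even} internal clusters, so odd clusters inside $I\setminus\ell+n_0$ survive. The indicator $\ind(|A|\text{ odd, or }A\not\subset I\setminus\ell+n_0)$ therefore never reduces to $\ind(A\cap I'\neq\emptyset)$.

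\emph{Sign.} The failure of \eqref{eq:regP1} does not cancel. The Wick polynomial fixes the sign of each subtracted even cluster relative to the sequence \emph{without} the inserted odd block. Any pair straddling the insertion point is therefore doubled instead of cancelled; this is the term $-2\E[b_1b_3]\E[b']$ above. For $(b_1,b',b_3,b_4)$, where no odd clusters can occur, the same happens to both straddling pairs $\{b_1,b_3\}$ and $\{b_1,b_4\}$.

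Comparing the evaluations with $I'$ and with $I'_1$ cannot repair this, since they are the same algebra element. Redoing the induction of Proposition \ref{prop:reg_m2k_F} for this $P$ cannot repair it either, since the identity is false.

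The missing idea is to never regularize $I\setminus\ell$. The paper inducts on $|I|$ as follows:
\begin{itemize}
\item It expands $\fcum[\tilde I_\ell]$ by \eqref{eq:expressionK_fer} around $(\ell,\tilde j)$.
\item It applies the induction hypothesis to the lower-order terms $\fcum[E]$ with $(\ell,\tilde j)\in E\subsetneq\tilde I_\ell$.
\item It uses Proposition \ref{prop:reg_m2k_F} only for the moment $\E[b^{I|_{<\ell}}\fwick{b^{I'}}b^{I|_{>\ell}}]$ (the plain moment expansion in case (1)). There the only Wick block is $P[I']$, which is consecutive, so \eqref{eq:regP2} holds trivially.
\end{itemize}
Write $S=I'+I\setminus\ell$ and split each $\pi\in\p'_2(S)$ into two parts. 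The clusters meeting $I'$ have union $E_0=I'+E\setminus\ell$. The remaining clusters sum to $\E[b^{\tilde I_\ell\setminus E}]$.

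Oddness of $|I'|$ is used in exactly two places. It gives $\sgn(P[E_0],P[S\setminus E_0])=\sgn(E,\tilde I_\ell\setminus E)$, and it turns the parity constraint on $E_0$ into $\ind(|I|\text{ odd, or }|E|\text{ even})$. All terms with $E\neq\tilde I_\ell$ then cancel against the recursion, leaving \eqref{eq:cofprod_F1} and \eqref{eq:cofprod_F2}.
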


\begin{remark}
    Note that if $J$ does not have such element $\tilde{j}$, then one can take $\tilde{j}\not\in J$ and extend $b$ to $J\cup\{\tilde{j}\}$ by setting in case (1) 
    $b(\tilde{j})=b^{I'}$ or in case (2) 
    $b(\tilde{j})=\fwick{b^{I'}}$.  As in Proposition \ref{prop:reg_m2k_F}, there is a difference between the symmetric and antisymmetric results, namely, it is necessary to add the assumption that $|I'|$ is odd.  
    
    For example, take $I=\{(1,j_1),(2,j_2),(3,j_3)\}$, $I'=\{(1,j'_1),(2,j'_2)\}$, and consider case (1) for $\ell=2$.  Then $I'+I\setminus \ell=\{(1,j'_1),(2,j'_2),(3,j_1),(5,j_3)\}$ and $P_\ell^{I',I}$ maps $(1,2,3,5)$ to $ (2,3,1,4)$. Hence, $P_\ell^{I',I}(I'+I\setminus \ell)=\{(1,j_1),(2,j'_1),(3,j'_2),(4,j_3)\}$.  Denoting $b_1=b(j_1)$, etc., we have $b(\tilde{j})=b'_1 b'_2$.  Assume, for simplicity, that the expectation map is even.  Then,
    \[
    \fcum[\tilde{I}_2] = \fcum[b_1,b'_1 b'_2,b_3] = \E[b_1 b'_1 b'_2 b_3]  + \E[b_1 b_3]\E[b'_1 b'_2]\,,
    \]
    while the right hand side of \eqref{eq:cofprod_F1} yields
    \[
   \fcum[b_1,b'_1,b'_2,b_3] + \fcum[b_1,b'_1] \fcum[b'_2,b_3] -  \fcum[b_1,b'_2]\fcum[b'_1,b_3] 
    = \E[b_1 b'_1 b'_2 b_3]  - \E[b_1 b_3]\E[b'_1 b'_2]\,.
    \]
    Hence, the identity fails even in this simple case with even $|I'|$.
%%% Another example:    
%For example, take $I=\{(1,j_1),(2,j_2)\}$, $I'=\{(1,j'_1),(2,j'_2)\}$, and consider case (1) for $%\ell=1$.  Then $P_\ell^{I',I}$ is an identity permutation  and $I'+I\setminus \ell=\{(1,j'_1),(2,j'_2),(3,j_2)\}$.  Denoting $b_1=b(j_1)$, etc., we have $b(\tilde{j})=b'_1 b'_2$.  Now
  %  \[
  %  \fcum[\tilde{I}_1] = \fcum[b'_1 b'_2,b_2] = \E[b'_1 b'_2 b_2] 
  %  \]
  %  while the right hand side of \eqref{eq:cofprod_F1} yields
  %  \[
  %  \fcum[b'_1,b'_2,b_2] - \fcum[b'_1,b_2] \fcum[b'_2] + \fcum[b'_2,b_2] \fcum[b'_1] 
  %  =  \E[b'_1 b'_2 b_2] -  \fcum[b'_1,b'_2] \fcum[b_2]\,.
  %  \]
    \end{remark}

\begin{proof}
    Let $P=P_\ell^{I',I}$, $n'=|I'|$ and $n_0=\max\pr_1(I')$. Define $S=I'+I\setminus\ell$.
    We proceed by induction in $n=|I|$.  If $n=1$, we have $S=I'$, $P=\text{id}$ and
 $\tilde{I}_\ell = \{(\ell,\tilde{j})\}$. Hence, in case (1),
 \[
  \fcum[{\tilde{I}_\ell}] =\E[b^{I'}] = \sum_{\pi\in\p'_2(S)}^{}
        \sgn(\pi;P)
        \prod_{A\in\pi}^{}\ind(A\cap I' \ne \emptyset) \fcum[P[{A}]]\,
 \]
 and in case (2),
 $\fcum[{\tilde{I}_\ell}] =\E[\fwick{b^{I'}}] = \ind(|I'|\text{ odd})
 \fcum[{I'}]$, which satisfies \eqref{eq:cofprod_F2}.  Therefore, all base
 cases are true.

 Suppose then that $n\ge 2$ and 
  the result has been proven whenever $|I|<n$.  Assume $|I|=n$.  We only
  consider the second, more complicated case in detail, the proof of the first
  case follows by minor modifications from this as explained later.  The expression
  \eqref{eq:expressionK_fer} implies that 
        \begin{align*}
     &  \fcum[{\tilde{I}_\ell}]
       =  \E[b^{\tilde{I}_\ell}] 
        -
        \sum_{\substack{E\subsetneq \tilde{I}_\ell,\\(\ell,\tilde{j})\in E}} \ind(|I| \text{ is odd or } |E| \text{ is even})
        \sgn(E,\tilde{I}_\ell\setminus E)
        \E[b^{\tilde{I}_\ell\setminus E}]\fcum[{E}]\,.
    \end{align*}
  Note that $|\tilde{I}_\ell|=|I|$.  
  The induction assumption
  yields
  \begin{align*}
     &  \fcum[{\tilde{I}_\ell}]
       =  \E[b^{I|_{<\ell}}\fwick{b^{I'}}b^{I|_{>\ell}}] 
        -
        \sum_{\substack{E\subsetneq \tilde{I}_\ell,\\(\ell,\tilde{j})\in E}}
        \ind(|I| \text{ is odd or } |E| \text{ is even})
        \sgn(E,\tilde{I}_\ell\setminus E)
        \E[b^{\tilde{I}_\ell\setminus E}]
       \\ & \qquad \times
        \sum_{\pi\in\p'_2(I'+E\setminus \ell)}^{}
        \sgn(\pi;P)
        \prod_{A\in\pi}^{}\ind(A\cap I' \ne \emptyset)\ind(|A|\text{ odd, or
        }A\not\subset I') \fcum[P[{A}]]\,,
    \end{align*}
    where we have used $\sgn(\pi;P_\ell^{I',E})=\sgn(\pi;P)$
    and $\fcum[P_\ell^{I',E}[{A}]]=\fcum[P[{A}]]$.
   On the other hand, by Proposition \ref{prop:reg_m2k_F} we have
   \begin{align*}
     &  \E[b^{I|_{<\ell}}\fwick{b^{I'}}b^{I|_{>\ell}}] 
      =\sum_{\pi\in\p'_2(S)}^{}
        \sgn(\pi;P)
        \prod_{A\in\pi}^{}\ind(|A|\text{ odd, or }A\not\subset I') \fcum[P[{A}]]\,.
 \end{align*}

 For a given $\pi\in\p'_2(S)$ and $E\subset \tilde{I}_\ell$,
 $(\ell,\tilde{j})\in E$, define $E_0=I'+E\setminus\ell$. Then $(\tilde{I}_\ell
 \setminus E)+n_0 =S\setminus E_0$.
 Note that there is unique such $E$ for which $E_0=\cup\{ A\in \pi : A\cap I' \ne \emptyset\}$.
 Therefore,
 \begin{align}
    1 =  
       \sum_{\substack{E\subset \tilde{I}_\ell,\\(\ell,\tilde{j})\in E}}
        \sum_{\pi_0 \in \p'_2(E_0)}       \sum_{\pi_1 \in \p'_2(S\setminus E_0)}
      \ind(\pi = \pi_0\cup \pi_1)
         \prod_{A\in\pi_0}^{}\ind(A\cap I' \ne \emptyset)
         \,.
 \end{align}
 Hence, Proposition \ref{lem:composition} for $\sgn(\pi;P)=\sgn(P(\pi))$ implies
  \begin{align*}
     &  \E[b^{I|_{<\ell}}\fwick{b^{I'}}b^{I|_{>\ell}}] 
      = \sum_{\pi\in\p'_2(S)}^{}
       \sum_{\substack{E\subset \tilde{I}_\ell,\\(\ell,\tilde{j})\in E}}
        \sum_{\pi_0 \in \p'_2(E_0)}       \sum_{\pi_1 \in \p'_2(S\setminus E_0)}
      \ind(\pi = \pi_0\cup \pi_1)
         \prod_{A\in\pi_0}^{}\ind(A\cap I' \ne \emptyset)
    \\ & \qquad  \times    
          \sgn(P[E_0],P[S\setminus E_0])\sgn(\pi_0;P)\sgn(\pi_1;P)
         \prod_{A\in\pi}^{}\ind(|A|\text{ odd, or }A\not\subset I') \fcum[P[{A}]]\,.
 \end{align*}
 As in the proof of Proposition \ref{th:fcumiterformulas}, here
 $\sum_{\pi\in\p'_2(S)}^{} \ind(\pi = \pi_0\cup \pi_1)$ evaluates to zero if
 $|S|=n+n'-1$ is even and $|E_0|=|E|-1+n'$ is odd, and to one, otherwise.  To each $A\in
 \pi_1$, we have $A\cap I' =\emptyset$, so that $\ind(|A|\text{ odd, or
 }A\not\subset I')=1$.
 Summing over $\pi_1$ hence yields a factor
 $\E[b^{S\setminus E_0}] = 
\E[b^{\tilde{I}_\ell \setminus E}]$.  We obtain
  \begin{align*}
     &  \E[b^{I|_{<\ell}}\fwick{b^{I'}}b^{I|_{>\ell}}] 
      = \sum_{\substack{E\subset \tilde{I}_\ell,\\(\ell,\tilde{j})\in E}} \E[b^{\tilde{I}_\ell \setminus E}]
        \sum_{\pi_0 \in \p'_2(E_0)}    \ind(n+n'-1 \text{ odd, or }|E|-1+n'\text{ even})
    \\ & \qquad  \times    
         \sgn(P[E_0],P[S\setminus E_0]) \sgn(\pi_0;P)
         \prod_{A\in\pi_0}^{}\ind(A\cap I' \ne \emptyset)\ind(|A|\text{ odd, or }A\not\subset I') \fcum[P[{A}]]\,.
 \end{align*}
Since $n'$ is odd, the characteristic function is equal to 
$ \ind(n\text{ odd, or }|E|\text{ even})$.  Also the fact that $n'$ is odd implies
$\sgn(P[E_0],P[S\setminus E_0]) = \sgn(E,\tilde{I}_\ell \setminus E)$.  Hence,
the lower order terms cancel
everything apart from the term $E=\tilde{I}_\ell$ for which $|E|=n$ and
$E_0=S=I'+I\setminus\ell$.  We find that 
\begin{align*}
     &  \fcum[{\tilde{I}_\ell}]
       =   \sum_{\pi_0 \in \p'_2(I'+I\setminus\ell)}  
    \sgn(\pi_0;P)
    \prod_{A\in\pi_0}^{}\ind(A\cap I' \ne \emptyset)\ind(|A|\text{ odd, or }A\not\subset I') \fcum[P[{A}]]\,,
\end{align*}
which completes the induction step in this case.

The proof of case (1) is now obtained by removing every indicator function
\\$\ind(|A|\text{ odd, or }A\not\subset I')$ from the above expressions and
replacing $\E[b^{I|_{<\ell}}\fwick{b^{I'}}b^{I|_{>\ell}}]$ by
  \begin{align*}
     &  \E[b^{I|_{<\ell}}{b^{I'}}b^{I|_{>\ell}}]
       =  \sum_{\pi\in\p'_2(S)}^{}
        \sgn(\pi;P)
        \prod_{A\in\pi}^{} \fcum[P[{A}]]\,.
    \end{align*}
This concludes the proof of both cases of the Proposition.
\end{proof}

%% file: section5.tex
\section{Time evolution of symmetric and antisymmetric cumulants}
\label{sec:time-evolution}

We will now consider the time evolution of cumulants when the algebra elements $b$ satisfy an evolution equation with a polynomial rate function.
The following two definitions aim for maximal generality, and we will give concrete examples from quantum systems in the next Section.

\begin{assumption}
    \label{ass:b_time}
    Suppose $U\subset\R$ is a non-degenerate time interval.
    Assume that for each $t\in U$ we have a complex
    unital algebra $\A_t$ and a map $b_t\colon J\to \A_t$, where the
    domain $J$ is independent of time.  Suppose that  $\E\colon \cup_{t\in U}
    \A_t\to \C$ is such that for each $t\in U$ the restriction $\E|_{\A_t}$ is
    an expectation map on $\A_t$.
\end{assumption}

\begin{assumption}
    \label{ass:derivative}
    Suppose that Assumption \eqref{ass:b_time} holds.
    Take 
    an index sequence $I\in\I(J)$, $I\ne\emptyset$.
    Fix $t\in U$.
    Suppose that for any $E\subset I$ the map $s\mapsto \E[b_s^E]$ is
    differentiable at $t$ with derivative $\ddt \E[b_t^E]$.
    Assume that there is a linear map $\partial$ from the subalgebra $\A_t^{(I)}$ generated 
    by $b_t(\pr_2 I)\subset \A_t$ to $\A_t$ such that
    the map $\partial$ determines the derivatives, 
    $\ddt \E[b_t^E] = \E[\partial (b_t^E)]$ for all $E\subset I$, and 
    satisfies the product rule 
    \[
        \partial (AB)= (\partial A) B+
        A \partial B 
    \]
    for all $A,B\in\A_t^{(I)}$.
\end{assumption}

Let us begin with a Lemma which shows why the above conditions lead to simple
evolution relations of both symmetric and antisymmetric cumulans.

\begin{lemma}\label{th:partialtosum}
  Suppose that Assumption  \ref{ass:b_time} holds and let the corresponding
  symmetric and antisymmetric cumulants at time $t$, i.e., those defined using
  the expectation map $\E|_{\mathcal{A}_t}$ and the map $b_t$, be denoted by
  $\bcum[\es{b_t}{I}]$ and $\fcum[\es{b_t}{I}]$, for $I\in \I(J)$.   Suppose
  also that Assumption \ref{ass:derivative} holds
    for some given $t\in U$ and $I\in\I(J)$.   
    
  Then, 
  \begin{align}\label{eq:partialbI}
    \partial(b_t^I) = \sum_{(\ell,j)\in I} b_t^{{I}|_{<\ell}}
    \partial b_t(j)
    b_t^{{I}|_{>\ell}}
  \end{align}
  and
  \begin{align}\label{eq:ddtcumforpartial}
    \ddt\bcum[\es{b_t}{I}]=  \sum_{(\ell,j)\in I} \bcum[\es{\tilde{b}_t}{\tilde{I}_\ell}]\,,\qquad
    \ddt\fcum[\es{b_t}{I}]=  \sum_{(\ell,j)\in I} \fcum[\es{\tilde{b}_t}{\tilde{I}_\ell}]\,,
  \end{align}
  where $\tilde{b}_t$ is an extension of $b_t$
  to $J\cup \{\tilde{j}_\ell \colon (\ell,j_\ell)\in I\}$ defined by $\tilde{b}_t(\tilde{j}_\ell)=\partial b_t(j_\ell)$.
  Furthermore, given $(\ell,j_\ell)\in I$ and $A\subset I$ let $\tilde{A}_\ell$ denote the set  
    $A\setminus{\{(\ell,j_\ell)\}}\cup\{(\ell, \tilde{j}_\ell)\}$.
\end{lemma}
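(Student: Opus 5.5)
Formula \eqref{eq:partialbI} comes from the product rule in Assumption \ref{ass:derivative}, by induction on $|I|$. For $|I|\le 1$ it is immediate: $\partial\id=\partial(\id\,\id)=2\partial\id$ gives $\partial\id=0$, so the empty sum is correct. For $|I|\ge 2$, let $\ell_{\max}$ be the largest position label in $I$ and write $b_t^I=b_t^{I|_{<\ell_{\max}}}\,b_t(j_{\ell_{\max}})$. The product rule then gives $\partial(b_t^{I|_{<\ell_{\max}}})\,b_t(j_{\ell_{\max}})+b_t^{I|_{<\ell_{\max}}}\,\partial b_t(j_{\ell_{\max}})$. Applying the induction hypothesis to the first term yields exactly the summands with $\ell<\ell_{\max}$, because $(I|_{<\ell_{\max}})|_{>\ell}\cup\{\ell_{\max}\}=I|_{>\ell}$. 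Every $b_t^{E}$ with $E\subset I$ lies in $\A_t^{(I)}$, so the product rule may be applied at each step. The same identity therefore holds for every subsequence $E\subset I$. Since $\tilde b_t(\tilde j_\ell)=\partial b_t(j_\ell)$, each summand $b_t^{E|_{<\ell}}\partial b_t(j_\ell)b_t^{E|_{>\ell}}$ equals $\tilde b_t^{\tilde E_\ell}$, and we obtain $\ddt\E[b_t^E]=\sum_{(\ell,j)\in E}\E[\tilde b_t^{\tilde E_\ell}]$.

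For the cumulant identities \eqref{eq:ddtcumforpartial} I would argue by strong induction on $|I|$, using the recursive characterisations \eqref{eq:expressionK} and \eqref{eq:expressionK_fer}. Cleaner is to differentiate the moments-to-cumulants formulas \eqref{eq:m2k} and \eqref{eq:m2k_Fodd}, applied to every $E\subset I$. Differentiability of each cumulant $\bcum[E]$, $\fcum[E]$ follows from that of the moments by the recursion, since cumulants are polynomials in the moments $\E[b_t^{E'}]$, $E'\subset E$. Take \eqref{eq:m2k} for $I$ and differentiate the product of cumulants with the Leibniz rule. By the induction hypothesis, for every cluster $A\ne I$ we have $\ddt\bcum[A]=\sum_{\ell\in A}\bcum[\tilde A_\ell]$. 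A partition $\pi\in\p(I)$ together with a choice of $\ell$ corresponds bijectively to the partition $\tilde\pi_\ell$ of $\tilde I_\ell$ obtained by replacing the cluster $A\ni\ell$ with $\tilde A_\ell$. Hence $\sum_{\pi}\ddt\prod_{A\in\pi}\bcum[A]$ equals $\sum_\ell\sum_{\tilde\pi\in\p(\tilde I_\ell)}\prod\bcum[\cdot]$, except for the term $\pi=\{I\}$, which contributes $\ddt\bcum[I]$ itself. The left side is $\sum_\ell\E[\tilde b_t^{\tilde I_\ell}]$. Expanding each of these moments by \eqref{eq:m2k} and comparing, every term with $|\pi|>1$ cancels, leaving $\ddt\bcum[I]=\sum_\ell\bcum[\es{\tilde b_t}{\tilde I_\ell}]$.

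The antisymmetric case runs the same way, and its main obstacle is the bookkeeping of signs and parities. Replacing $j_\ell$ by $\tilde j_\ell$ keeps the position labels, so $\p'_2(I)$ and $\p'_2(\tilde I_\ell)$ correspond bijectively via $A\mapsto\tilde A_\ell$. Cluster sizes, and hence the parity constraints, are unchanged. Moreover, $\sgn(\pi)$ depends only on $\pr_1$ of the clusters, so $\sgn(\tilde\pi_\ell)=\sgn(\pi)$. No parity assumption on $\partial b_t(j)$ is needed, because the definition of $\fcum$ only uses positions. With these observations the same cancellation as in the symmetric case gives the second identity in \eqref{eq:ddtcumforpartial}.

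The base case $|I|=1$ in both settings is simply $\ddt\E[b_t(j)]=\E[\partial b_t(j)]$.
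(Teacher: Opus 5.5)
Your proof is correct and follows essentially the same route as the paper. Both arguments run an induction over subsequences $E\subset I$ and use the product rule for \eqref{eq:partialbI}; you peel off the last factor where the paper peels off the first. Both then differentiate the moments-to-cumulants relation and regroup the Leibniz terms by the position $\ell$, using that cluster sizes and $\sgn(\pi)$ depend only on position labels, so $\pi\mapsto\tilde\pi_\ell$ is a sign-preserving bijection onto $\p(\tilde I_\ell)$, resp.\ $\p'_2(\tilde I_\ell)$. You also state two points the paper leaves implicit: cumulants are polynomials in the moments and hence differentiable at $t$, and $\partial\id=0$.
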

\begin{proof}
We prove all three statements via an induction in $|E|\ge 1$ for $E\subset I$.
Note that then $E$ also satisfies Assumption \ref{ass:derivative} for the given
$t$. If $|E|=1$, there is a unique $\ell\in \N$ such that $E=\{(\ell,j)\}$ and
$\tilde{E}_\ell=\{(\ell, \tilde{j})\}$. Then,
$\partial(b_t^E)=\partial(b_t(j))=b_t(\tilde{j})= b_t^{\tilde{E}_\ell}$, so
\eqref{eq:partialbI} holds for $E$.  Since then
$\bcum[\es{b_t}{E}]=\E[b_t(j)]=\fcum[\es{b_t}{E}]$, we find that both
equations in \eqref{eq:ddtcumforpartial} hold for $E$, by taking expectation of
\eqref{eq:partialbI}.
 
 Suppose then that given $2\le k\le |I|$, the results is true whenever $|E|<k$.  
 If $(\ell,j)\in E$ with $\ell=\min (\pr_1(E))$, we have $\partial(b_t^E) = \partial(b_t(j) b_t^{E\setminus \ell})
 = \partial(b_t(j)) b_t^{E\setminus l} + b_t(j) \partial(b_t^{E\setminus
 \ell})$.  The first term is of the correct form and the induction
 assumption will produce the terms of other elements in $E$.  Hence,
 \eqref{eq:partialbI} holds for $E$ which completes the induction step for that
 equation.
 
 We only give the detailed proof for the corresponding antisymmetric cumulant
 since the same proof works for the symmetric case, after mild modifications.
 Differentiating \eqref{eq:m2k_Fall}, we find
 \begin{align*}
    & \ddt \fcum[\es{b_t}{E}]=\E[\partial(b_t^E)] - 
    \sum_{\substack{\pi\in\p'_2(E)\\ |\pi|>1}}
    \sgn(\pi) \sum_{A\in\pi} \ddt \fcum[\es{b_t}{A}] 
    \prod_{A'\in\pi,\, A'\ne A}
    \fcum[\es{b_t}{A'}]\,.
 \end{align*}
 Then, using the induction assumption and \eqref{eq:partialbI} yields the formula
 \begin{align*}
    &
    \sum_{(\ell,j)\in I} \E[\tilde{b}_t^{\tilde{E}_\ell}]
   - 
    \sum_{\substack{\pi\in\p'_2(E)\\ |\pi|>1}}
    \sgn(\pi) \sum_{A\in\pi} \sum_{(\ell,j)\in E} \ind((\ell,j)\in A) \fcum[\es{\tilde{b}_t}{\tilde{A}_\ell}]
    \prod_{A'\in\pi,\, A'\ne A}
    \fcum[\es{b_t}{A'}]
    \\ & \quad =
    \sum_{(\ell,j)\in E} \biggl( \E[\tilde{b}_t^{\tilde{E}_\ell}]
   - 
    \sum_{\substack{\pi\in\p'_2(E)\\ |\pi|>1}}
    \sgn(\pi) \fcum[\es{\tilde{b}_t}{\widetilde{A(\ell)}_\ell}]
    \prod_{A'\in\pi,\, A'\ne A(\ell)}
    \fcum[\es{b_t}{A'}]\biggr)
    \\ & \quad =
    \sum_{(\ell,j)\in E} \biggl( \E[\tilde{b}_t^{\tilde{E}_\ell}]
   - 
    \sum_{\substack{\pi\in\p'_2(\tilde{E}_\ell)\\ |\pi|>1}}
    \sgn(\pi)
    \prod_{A\in\pi}
    \fcum[\es{\tilde{b}_t}{A}]\biggr)\,.  
 \end{align*}
where in the second expression $A(\ell)$ denotes the unique $A\in \pi$ for which
$(\ell,j)\in A$.  By \eqref{eq:m2k_Fall}, we then find that
\eqref{eq:ddtcumforpartial} holds for the antisymmetric cumulant with $E$.
This completes the induction step and hence the proof of the antisymmetric
case.

The proof for the symmetric case is obtained by replacing $\fcum$ by $\bcum$, $
\sgn(\pi)$ by $1$, and $\p'_2$ by $\mathcal{P}$ in the above computations.
  \end{proof}

As a corollary of Lemma \ref{th:partialtosum} and the earlier moment
substitution results, Propositions \ref{prop:iteratedcum} and
\ref{prop:iteratedcum_F}, and using multilinearity of the cumulants, we now
obtain the following hierarchies for evolution of symmetric and antisymmetric
cumulants.  In the first Theorem, we assume that the rate operator $\partial$
is a polynomial, while in the next two Theorems we detail how the evolution
equation will get simplified if this is converted to a rate function using the
appropriate Wick polynomials.
Note that given one of the three rate functions, one can always construct some
coefficients for the other two rate functions, using the general defining
relations, i.e., since $b_t^{I'} = \sum_{E\subset I'} \E[b_t^E]
\bwick{b_t^{I\setminus E}}
= \sum_{E\subset I',\, E\, \text{even}} \sgn(E,I'\setminus E)\E[b_t^E] \fwick{b_t^{I\setminus E}}$.

The proofs of the following three results follow by checking that all the
assumptions for applying the relevant cumulants of moments Proposition are
valid.
\begin{theorem}
    \label{thm:cumhieR}
     Consider some fixed $t\in U$ and $I\in\I(J)$ so that Assumptions
     \ref{ass:b_time} and \ref{ass:derivative} hold.
     We denote by $\bcum[\es{b_t}{I}]$ and by $\fcum[\es{b_t}{I}]$ the
    symmetric and antisymmetric cumulants constructed with
    $\E|_{\mathcal{A}_t}$ and $b_t$.  Suppose that for each
    $(\ell,j_\ell)\in I$
    we are given a finite subset $\I_{j_\ell}$ of $\I(J)$ and
    scalars $R^{I'}_{j_\ell}(t)\in \C$, for every $I'\in\I_{j_\ell}$,
    such that
    \begin{align}
        \label{eq:nonreq_a_evo}
        \partial b_t(j_\ell) = \sum_{I'\in\I_{j_\ell}}R^{I'}_{j_\ell}(t)b^{I'}_t\,.
    \end{align}
Then, there holds
    \begin{align}
    \label{eq:cumhieRbos}
     &\ddt\bcum[\es{b_t}{I}]=
    \sum_{(\ell,j_\ell)\in I}^{}
    \sum_{I'\in\I_{j_\ell}}
    R^{I'}_{j_\ell}(t)
    \sum_{\pi\in\p(I'+I\setminus \ell)}
    \prod_{A\in \pi}
    \ind(A\cap I' \ne \emptyset)
    \bcum[\es{b_t}{P_\ell^{I',I}[A]}]\,,
    \end{align}
    where $P_\ell^{I',I}$ is as in Remark \ref{rem:insert_permutation}, namely, it is a permutation of $I'+I\setminus \ell$ that makes
    room and inserts indices in $I'$ to be at the position where $\ell$ was in
    $I$.

    If, in addition, $|I'|$ is odd for
    every $I'\in \I_{j_\ell}$, $(\ell,j_\ell)\in I$, then there also holds
    \begin{align}
    \nonumber
    & \ddt\fcum[\es{b_t}{I}]=
    \sum_{(\ell,j_\ell)\in I}^{}
    \sum_{I'\in\I_{j_\ell}}
    R^{I'}_{j_\ell}(t)
    \\ & \qquad \times
    \sum_{\pi\in\p(I'+I\setminus \ell)}
        \sgn(\pi;P_\ell^{I',I})
    \prod_{A\in \pi}
    \ind(A\cap I' \ne \emptyset)
    \fcum[\es{b_t}{P_\ell^{I',I}[A]}]\,.
    \label{eq:cumhieRfer}
    \end{align}
\end{theorem}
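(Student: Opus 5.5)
The plan is to combine Lemma \ref{th:partialtosum} with multilinearity and the cumulants-of-products formulas, Propositions \ref{prop:iteratedcum} and \ref{prop:iteratedcum_F}.

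First, by Lemma \ref{th:partialtosum}, for both types of cumulants we have
\[
\ddt\bcum[\es{b_t}{I}]=\sum_{(\ell,j_\ell)\in I}\bcum[\es{\tilde b_t}{\tilde I_\ell}],
\]
and analogously for $\fcum$. Here $\tilde b_t(\tilde j_\ell)=\partial b_t(j_\ell)$. Inserting \eqref{eq:nonreq_a_evo}, the element at position $\ell$ becomes the finite linear combination $\sum_{I'\in\I_{j_\ell}}R^{I'}_{j_\ell}(t)b_t^{I'}$.

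Next, I extend the index set once more. For each $I'\in\I_{j_\ell}$ I add a fresh index $\tilde j_{\ell,I'}$ with value $b_t^{I'}$, as allowed by the remark after Proposition \ref{prop:iteratedcum}. Multilinearity, \eqref{eq:bcum_multilinear} or \eqref{eq:fcum_multilinear}, applied in the $\ell$-th slot then gives
\[
\bcum[\es{\tilde b_t}{\tilde I_\ell}]=\sum_{I'}R^{I'}_{j_\ell}(t)\,\bcum[\es{b_t}{I\setminus\{(\ell,j_\ell)\}\cup\{(\ell,\tilde j_{\ell,I'})\}}].
\]
One point needs care here. Multilinearity is stated for arguments inside $b(J)$, so I work throughout with a single enlarged index set containing all these new indices together with the original $J$. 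The cumulants of the original sequences are unaffected by this enlargement, since cumulants depend only on the element sequence $\es{b}{I}$.

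Each resulting term is exactly case (1) of Proposition \ref{prop:iteratedcum} with $b(\tilde j)=b_t^{I'}$. That proposition yields
\[
\sum_{\pi\in\p(I'+I\setminus\ell)}\prod_{A\in\pi}\ind(A\cap I'\ne\emptyset)\,\bcum[P_\ell^{I',I}[A]],
\]
and summing over $\ell$ and $I'$ gives \eqref{eq:cumhieRbos}.

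For the antisymmetric hierarchy, the oddness hypothesis on $|I'|$ is exactly what case (1) of Proposition \ref{prop:iteratedcum_F} requires. That result gives the same expression with the extra factor $\sgn(\pi;P_\ell^{I',I})$, summed over $\pi\in\p'_2(I'+I\setminus\ell)$.

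The main point to check is a mismatch in the summation range. The statement \eqref{eq:cumhieRfer} is written with $\p(I'+I\setminus\ell)$, whereas the proposition uses $\p'_2$, and $\sgn(\pi;P)$ is defined only for $\pi\in\p'_2$. I would read the statement with $\p'_2$, which matches Proposition \ref{prop:iteratedcum_F}. Alternatively, one can interpret $\sgn$ through the cluster-ordering convention $\sgn(A_1,\dots,A_n)$ and note that this is the intended meaning.

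The remaining verifications are routine. The subalgebra in Assumption \ref{ass:derivative} contains all the relevant products, the index sets $\I_{j_\ell}$ are finite so all sums are finite, and the permutation $P_\ell^{I',I}$ is independent of the auxiliary choices made in Remark \ref{rem:insert_permutation}.
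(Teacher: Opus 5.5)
Your proposal is correct and follows the paper's own route: Lemma~\ref{th:partialtosum}, multilinearity in the $\ell$-th slot, then case (1) of Propositions~\ref{prop:iteratedcum} and~\ref{prop:iteratedcum_F}, with oddness of $|I'|$ needed only for the latter. Reading the summation range in \eqref{eq:cumhieRfer} as $\p'_2(I'+I\setminus\ell)$ is right, but your suggested alternative via the cluster-ordering convention is not, because $\sgn(A_1,\dots,A_n)$ depends on the order of the clusters when two or more of them are odd, and such partitions do not occur in Proposition~\ref{prop:iteratedcum_F} anyway.
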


\begin{theorem}
    \label{thm:cumhie}
    Consider some fixed $t\in U$ and $I\in\I(J)$ so that Assumptions  \ref{ass:b_time} and \ref{ass:derivative} hold.
    We denote by $\bcum[\es{b_t}{I}]$ the
    symmetric cumulants constructed with $\E|_{\mathcal{A}_t}$ and $b_t$.
    Suppose that for each $(\ell,j_\ell)\in I$
    we are given a finite subset $\I_{j_\ell}$ of $\I(J)$ and
    scalars $M^{I'}_{j_\ell}(t)\in \C$, for every $I'\in\I_{j_\ell}$,
    such that
    \begin{align}
        \label{eq:general_a_evo}
        \partial b_t(j_\ell) = \sum_{I'\in\I_{j_\ell}}M^{I'}_{j_\ell}(t)\bwick{b^{I'}_t}\,.
    \end{align}
    Define the corresponding permutations $P_\ell^{I',I}$ as in Remark \ref{rem:insert_permutation}.
    Then, there holds
    \begin{align}
    \label{eq:cumhie}
    \ddt\bcum[\es{b_t}{I}]=
    \sum_{(\ell,j_\ell)\in I}^{}
    \sum_{I'\in\I_{j_\ell}}
    M^{I'}_{j_\ell}(t)
    \sum_{\pi\in\p(I'+I\setminus \ell)}
    \prod_{A\in \pi}
    \ind(A\not\subset I', A\not\subset (I\setminus \ell+|I'|))
    \bcum[\es{b_t}{P_\ell^{I',I}[A]}],
    \end{align}
\end{theorem}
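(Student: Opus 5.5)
The plan is to combine Lemma~\ref{th:partialtosum} with case (2) of Proposition~\ref{prop:iteratedcum}, using multilinearity of the symmetric cumulants.

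\textbf{Step 1: reduce the derivative to single insertions.} First I would apply \eqref{eq:ddtcumforpartial} of Lemma~\ref{th:partialtosum}. This gives
\[
\ddt\bcum[\es{b_t}{I}]=\sum_{(\ell,j_\ell)\in I}\bcum[\es{\tilde b_t}{\tilde I_\ell}],
\]
where $\tilde b_t(\tilde j_\ell)=\partial b_t(j_\ell)$.

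\textbf{Step 2: expand the inserted element.} By \eqref{eq:general_a_evo}, $\tilde b_t(\tilde j_\ell)$ is a finite linear combination of the Wick polynomials $\bwick{b_t^{I'}}$, $I'\in\I_{j_\ell}$. The multilinearity \eqref{eq:bcum_multilinear} applies after extending $J$ by new indices $\tilde j_{\ell,I'}$ with $b_t(\tilde j_{\ell,I'})=\bwick{b_t^{I'}}$. It splits $\bcum[\es{\tilde b_t}{\tilde I_\ell}]$ into $\sum_{I'}M^{I'}_{j_\ell}(t)\,\bcum[\tilde I_\ell^{I'}]$, where $\tilde I_\ell^{I'}$ is $I$ with the element at position $\ell$ replaced by $\tilde j_{\ell,I'}$.

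\textbf{Step 3: apply the product formula.} Each $\bcum[\tilde I_\ell^{I'}]$ is exactly case (2) of Proposition~\ref{prop:iteratedcum}. Formula \eqref{eq:cofprod_B2} then expresses it as a sum over $\pi\in\p(I'+I\setminus\ell)$ of products of $\ind(A\cap I'\neq\emptyset)\ind(A\not\subset I')\,\bcum[P_\ell^{I',I}[A]]$.

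\textbf{Main obstacle: matching the indicators.} The stated indicator $\ind(A\not\subset I',\,A\not\subset (I\setminus\ell+|I'|))$ must be identified with the one in \eqref{eq:cofprod_B2}. Within the glued sequence $I'+I\setminus\ell$, the part coming from $I\setminus\ell$ is the shifted copy $(I\setminus\ell)+\max\pr_1(I')$, which is what $I\setminus\ell+|I'|$ denotes (after normalizing labels via \eqref{eq:cumorder}). So the condition $A\cap I'\neq\emptyset$ is equivalent to $A\not\subset (I\setminus\ell+|I'|)$. There is one exception, $A=\emptyset$, but clusters of a partition are nonempty, so it does not arise.

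Two further checks are needed, both of which I expect to be routine. First, Assumption~\ref{ass:derivative} guarantees that the Wick polynomials appearing lie in $\A_t$, so the extended $b_t$ is legitimate. Second, the permutation $P_\ell^{I',I}$ does not depend on the auxiliary choices made in Remark~\ref{rem:insert_permutation}. Once these are confirmed, summing over $\ell$ and $I'$ gives \eqref{eq:cumhie}.
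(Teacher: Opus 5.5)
Your proposal is correct and is essentially the paper's own argument: the paper obtains Theorem~\ref{thm:cumhie} as a direct corollary of Lemma~\ref{th:partialtosum}, multilinearity of $\bcum$, and case (2) of Proposition~\ref{prop:iteratedcum}. Your identification of $\ind(A\cap I'\neq\emptyset)\ind(A\not\subset I')$ with the stated indicator (for nonempty clusters, with the labels of $I'$ normalized to $\{1,\dots,|I'|\}$) is exactly the bookkeeping that remains. One small correction: $\bwick{b_t^{I'}}\in\A_t$ holds simply because $\A_t$ is a unital algebra containing $b_t(J)$, not because of Assumption~\ref{ass:derivative}.
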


\begin{theorem}
    \label{thm:cumhie_F}
  Consider some fixed $t\in U$ and $I\in\I(J)$ so that Assumptions  \ref{ass:b_time} and \ref{ass:derivative} hold.
    We denote by $\fcum[\es{b_t}{I}]$ the
    antisymmetric cumulants constructed with $\E|_{\mathcal{A}_t}$ and $b_t$.
    Suppose that for each $(\ell,j_\ell)\in I$
    we are given a finite subset $\I_{j_\ell}$ of $\I(J)$ and
    scalars $M^{I'}_{j_\ell}(t)\in \C$, for every $I'\in\I_{j_\ell}$.
    such that
    $|I'|$ is odd for all $I'\in\I_{j_\ell}$ and
\begin{align}
        \label{eq:general_a_evo_F}
        \partial b_t(j_\ell) = \sum_{I'\in\I_{j_\ell}}M^{I'}_{j_\ell}(t)\fwick{b^{I'}_t}\,.
    \end{align}
    Define the corresponding permutations $P_\ell^{I',I}$ as in Remark \ref{rem:insert_permutation}.
    Then, there holds
    \begin{align}
    \label{eq:cumhie_F}
   & \ddt\fcum[\es{b_t}{I}]=
    \sum_{(\ell,j_\ell)\in I}^{}
    \sum_{I'\in\I_{j_\ell}}
    M^{I'}_{j_\ell}(t)
    \sum_{\pi\in\p'_2(I'+I\setminus \ell)}
    \sgn(\pi;P_\ell^{I',I})
  \nonumber \\ & \qquad \times
     \prod_{A\in\pi}^{}
     \ind(|A| \text{ odd, or } A\not\subset I', A\not\subset (I\setminus \ell+|I'|))
     \fcum[\es{b_t}{P_\ell^{I',I}[A]}]\,.
    \end{align}
\end{theorem}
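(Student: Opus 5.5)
We prove Theorem \ref{thm:cumhie_F} by combining Lemma \ref{th:partialtosum} with case (2) of Proposition \ref{prop:iteratedcum_F} and multilinearity, exactly as the paper announces for all three hierarchy theorems.

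First, Lemma \ref{th:partialtosum} gives
\[
\ddt\fcum[\es{b_t}{I}]=\sum_{(\ell,j_\ell)\in I}\fcum[\es{\tilde b_t}{\tilde I_\ell}],
\]
where $\tilde b_t(\tilde j_\ell)=\partial b_t(j_\ell)$. By \eqref{eq:general_a_evo_F}, $\tilde b_t(\tilde j_\ell)$ is a finite linear combination of the Wick polynomials $\fwick{b_t^{I'}}$, $I'\in\I_{j_\ell}$. For each such $I'$ we extend $J$ by a new index $\tilde j_{\ell,I'}$ with $b_t(\tilde j_{\ell,I'})=\fwick{b_t^{I'}}$. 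This extension is allowed by the remark following Proposition \ref{prop:iteratedcum_F}, since all constructions there are for an arbitrary map $b$. The multilinearity \eqref{eq:fcum_multilinear}, applied in the $\ell$-th slot, then gives
\[
\fcum[\es{\tilde b_t}{\tilde I_\ell}]=\sum_{I'\in\I_{j_\ell}}M^{I'}_{j_\ell}(t)\,\fcum[\tilde I_{\ell,I'}],
\qquad
\tilde I_{\ell,I'}=I\setminus\{(\ell,j_\ell)\}\cup\{(\ell,\tilde j_{\ell,I'})\}.
\]

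Next, since every $|I'|$ is odd, case (2) of Proposition \ref{prop:iteratedcum_F} applies to each term. It gives a sum over $\pi\in\p'_2(I'+I\setminus\ell)$ with sign $\sgn(\pi;P_\ell^{I',I})$ and, for each cluster $A$, the factor
\[
\ind(A\cap I'\ne\emptyset)\,\ind(|A|\text{ odd, or }A\not\subset I')\,\fcum[P_\ell^{I',I}[A]].
\]
The cumulants here are built from $b_t$ on the original labels, which is what \eqref{eq:cumhie_F} requires.

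It remains to match the indicator functions. In $I'+I\setminus\ell$ the shifted copy of $I\setminus\ell$ is exactly the complement of $I'$, since the gluing places it after $\max\pr_1(I')$. Therefore $A\cap I'\neq\emptyset$ is equivalent to $A\not\subset(I\setminus\ell+|I'|)$. This identification holds for nonempty $A$, which every cluster is, provided one normalizes $\pr_1(I')=\{1,\dots,|I'|\}$ so that the shift equals $|I'|$. That normalization is harmless by \eqref{eq:fer_cumorder} and the construction in Remark \ref{rem:insert_permutation}.

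In the product of indicators the parity condition must be read as attaching only to the $A\not\subset I'$ clause. Doing so reproduces $\ind(A\cap I'\ne\emptyset)\,\ind(|A|\text{ odd, or }A\not\subset I')$. This bookkeeping of the shift and of the indicator grouping is where I expect the only real care to be needed. The analytic content is already contained in Lemma \ref{th:partialtosum} and Proposition \ref{prop:iteratedcum_F}. With this identification, summing over $\ell$ and $I'$ yields \eqref{eq:cumhie_F}.
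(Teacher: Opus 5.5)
Your proposal is correct and follows the paper's own argument. The paper also obtains the theorem as a corollary of Lemma \ref{th:partialtosum}, multilinearity of $\fcum$, and case (2) of Proposition \ref{prop:iteratedcum_F}, where the latter applies because each $|I'|$ is odd. Your reading of the indicator as $\ind(|A|\text{ odd, or }A\not\subset I')\,\ind(A\not\subset(I\setminus\ell+|I'|))$ is the right one, with the second factor equivalent to $A\cap I'\neq\emptyset$ once $\pr_1(I')=\{1,\dots,|I'|\}$ is normalized; this is exactly the bookkeeping the paper leaves implicit.
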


%% file: section6.tex
\section{Cumulant hierarchies for bosons and fermions}
\label{sec:hie_bos_fer}

As an application of the previous methods, 
we consider in this Section evolution of two quantum systems, interacting bosons or fermions hopping on a finite $d$-dimensional lattice. 
In addition to hopping, we allow for weak interactions between the particles, determined by a fixed potential function.  For example, such a fermionic lattice model would arise physically as a
description of a fluid of electrons evolving in a crystal background potential.  For more details about such extensions, we refer to
\cite{lukkarinen_not_2009}, for instance, Remark 2.3 therein summarizes the changes and new properties which arise for
fermions and bosons moving not on a lattice, but in the continuum, $\R^d$.
We use the earlier constructions of symmetric and antisymmetric cumulants, and their evolution hierarchies, to connect the properties of such systems to the associated kinetic evolution equations.  In particular, we show how a version of the closure assumptions commonly used in physical applications for such evolution hierarchies, leads to an evolution equation whose scaling limit, at least formally, converges to a solution of the appropriate, bosonic or fermionic, Boltzmann--Nordheim kinetic evolution equation.

For simplicity, we take the lattice to be determined by  $\lattice\subset \Z^d$, with periodic boundary conditions and common side length, $L\in\N$, $L\ge 2$, and we ignore the spin degree of freedom of the particles.  
Then, $|\lattice|=L^d$.  Again for convenience, we centre the lattice at the origin of $\Z^d$ in such a way that the \defem{dual lattice} of $\Lambda$ is given by  
$
\duallattice \coloneqq \frac{1}{L}\lattice\subset \left(-1/2,1/2\right]^d =: \T^d.
$
For any function $f$ from the lattice $\lattice$ to a vector space, we set its finite discrete Fourier transform to be the finite linear combination
\begin{align}
    \label{eq:FT}
     \FT{f}(k) \coloneqq \sum_{y\in \Lambda} \rme^{-\ci 2\pi k\cdot y} f(y)\,, \qquad k\in \Lambda^*\,.
\end{align}
In this discrete case, an inverse transformation is also possible pointwise by finite linear combination.

We use the Fock space representation to describe our quantum particle systems.
The notations and definitions follow those of \cite{cadamuro_kinetic_2018} and \cite{brattelirobinson2}.
The one-particle Hilbert space is now $\h:=\C^{\lattice}$, which leads to the Fock space
$\mathcal{F}=\bigoplus_{n=0}^\infty\hh_n$ where the 
corresponding $n$-particle sectors are $\hh_n=\h^{\otimes n} \simeq \C^{\lattice^n}$.
We use $\norm{\cdot}$ and $\innerprod{\cdot}{\cdot}$ to denote, respectively,
the corresponding $\ell^2$ norm and inner product in all of these spaces.
We use the notation where $\innerprod{\cdot}{\cdot}$ is linear in the
first argument and anti-linear in the second. Let $P_+^{(n)},P_-^{(n)}\colon
\hh_n\to\hh_n$ denote,
respectively,  the orthogonal 
projections onto the subspaces of symmetric and antisymmetric functions, namely,
\begin{align}
    \label{eq:projection}
    (P_\sigma^{(n)}\psi)(x_1,\dots,x_n)=\frac{1}{n!}\sum_{\pi\in S_n}^{}
    \varepsilon_\sigma(\pi)
    \psi(x_{\pi(1)},\dots,x_{\pi(n)}) 
    \qquad 
\end{align}
for $\sigma\in\{+,-\}$ and $\psi\in\hh_n$ and $ x_1,\dots,x_n\in\lattice$.
Here $S_n$ denotes the group of permutations of the set $\{1,2,\dots,n\}$, we define $\varepsilon_+(\pi)=1$,
and $\varepsilon_-(\pi)$ is $1$ if $\pi$ is even and $-1$ if $\pi$ is odd.
We denote the corresponding bosonic and fermionic Fock spaces by $\F_+=P_+\F$ and $\F_-=P_-\F$, respectively, where
$P_\sigma =\bigoplus_{n=0}^\infty P_\sigma^{(n)}$.

For $f\in\h$ we denote the bosonic (fermionic)
annihilation and creation operators on the bosonic (fermionic) Fock space by $a(f)$ and 
$a^*(f)$, respectively. 
We use the shorthand notations $a(x)=a(f_x)$ and $a^*(x)=a^*(f_x)$
for $x\in\lattice$ and $f_x(y)=\ind(x=y)$.
In the fermionic case $a(f)$ and $a^*(f)$ are bounded linear operators 
on the fermionic Fock space $\F_-$ while in the bosonic case they are unbounded linear operators on 
the Bosonic Fock space $\F_+$ \cite{brattelirobinson2}. To construct the cumulant hierarchies we will need 
arbitrary products of these operators. It suffices that we find a subspace $\D\subset \F$ such that 
$a(f)\D\subset \D$ and $a^*(f)\D\subset\D$.
For fermions we can take $\D=\F_-$ as the operators are bounded. For bosons we take 
$\D=\cup_{r>0}D(\rme^{rN})$, where
\begin{align}
    \label{eq:}
    D(\rme^{rN}) = \{\psi\in\F_+\colon \sum_{n=0}^\infty (\rme^{rn}\norm{\psi_n})^2<\infty)\,.
\end{align}
Here $N$ is the number operator $N\psi=(n\psi_n)_{n=0}^\infty$. Note that
\begin{align}
    \label{eq:N}
    N = \sum_{x\in\lattice} a^*(x)a(x).
\end{align}
Detailed justification for the choice of $\D$ in the bosonic case will be discussed 
in the Subsection \ref{sec:bosons} below.

The main motivation for considering such a domain $\D$ for bosons is that then the canonical commutation relations (CCR) are ordinary algebraic relations between linear operators $\D\to \D$ and, hence, arbitrary products of the creation and annihilation operators may be considered without changing the domain.  In particular,
using the fact that the vectors $f_x$, $x\in \lattice$, form an orthonormal collection, the bosonic annihilation and creation operators with the above domains satisfy the following CCR,
for any choice of $x,y\in\lattice$,
\begin{align}
    \label{eq:com}
    [a(x),a^*(y)]=\ind(x=y)\id , \quad [a(x),a(y)]=0=[a^*(x),a^*(y)]\,.
\end{align}

The fermionic creation and annihilation operators belong to the $C^*$-algebra of bounded operators, since for them we can choose $\D=\F_-$.
They
satisfy the canonical anticommutation relations (CAR), namely, we have
for all $x,y\in\lattice$,
\begin{align}
    \label{eq:com_fer}
    \{a(x),a^*(y)\}=\ind(x=y)\id, \quad \{a(x),a(y)\}=0=\{a^*(x),a^*(y)\}\,,
\end{align}
where $\{A,B\}=AB+BA$ denotes the anticommutator of two elements of the algebra.  For fermions, the two operators are related by standard adjoint of bounded operators: we have $(a(x))^*=a^*(x)$ and $(a^*(x))^*=a(x)$.

Our
Hamiltonian  $H:D(H)\to \F$ is a self adjoint operator on the Fock space $\F$ 
such that $H\D\subset \D \subset D(H)$, and 
\begin{align}
\label{eq:hamiltonian}
 H|_\D = \sum_{x,y\in \Lambda} \alpha(x-y) a^*(x) a(y) +
 \lambda \frac{1}{2} \sum_{x,y\in \Lambda;\,x\ne y} V(x-y) a^*(x) a^*(y)
 a(y) a(x)\,,
\end{align}
where $\alpha:\Lambda \to \R$ is called the \defem{harmonic hopping potential}
while $V:\Lambda \to \R$ is called the \defem{interaction potential} with a
coupling $\lambda>0$.  Periodic boundary conditions are implemented by assuming
that the difference ``$x-y$'' is computed using periodic arithmetic, i.e., it denotes
$(x-y)\bmod \Lambda$. We will 
assume that both potentials are symmetric under lattice reflections, e.g., that
$V(-y)=V(y)$ for all $y\in \Lambda$, and, for convenience, we set $V(0)=0$ so that the condition $x\ne y$ can be removed from the above sum.

We will discuss the domains of $H$, $a(f)$ and $a^*(f)$
separately for bosons and fermions in the following subsections, where we will also prove that $\rme^{-\ci t H}\D \subset \D$ for either choice of $\D$ and any $t\in \R$.  However, we recall that the above Hamiltonians have $\hh_n$ as an invariant subspace, corresponding to the projection of the $n$-particle Hamiltonian defined by
\begin{align}\label{eq:Hndef}
 (H_n\psi)(x_1,\ldots,x_n) = \sum_{j=1}^n \sum_{y\in \lattice} \alpha(x_j-y)\psi(x|_{x_j\to y}) + \frac{\lambda}{2} \sum_{j,j'=1}^n V(x_j-x_{j'})\psi(x)\,.
\end{align}
This is a bounded operator on $\hh_n$, and we have
$ H=\bigoplus_{n=0}^\infty H_n$, both on the complete Fock space, as well as on the subspaces $\F_\pm$.  $D(H)$ contains all $\psi=(\psi_n)_n\in \F$ such that $\sum_{n=0}^\infty \norm{H_n \psi_n}^2<\infty$.
More details about these statements can be found from \cite{cadamuro_kinetic_2018} and \cite{brattelirobinson2}.

We are interested in the time evolved annihilation and creation operators
and their Fourier transforms
\begin{align}
    \label{eq:a}
    a_t(x,-)&:=e^{\ci tH}a(x)e^{-\ci tH},
    \qquad
    a_t(x,+):=e^{\ci tH}a^*(x)e^{-\ci tH},
    \\
    \label{eq:aft}
    \fta_t(k,-)&:=\sum_{x\in\lattice} e^{-i2\pi x\cdot k}a_t(x,-),
    \qquad
    \fta_t(k,+):=\sum_{x\in\lattice} e^{-i2\pi x\cdot k}a_t(x,+)
\end{align}
as functions $\D\to \D$ for each  $x\in\lattice$, $k\in\duallattice$,   and $t\in\R$. 
By assumption and since $\D$ is a vector space, linear combinations and products of these operators can be understood in the usual sense of operators on a fixed vector space. 
We point out that then also the time-evolved operators will satisfy the appropriate CCR or CAR relations.  For example, for bosonic operators $a_t(k,\pm)$ we know that
\begin{align}
    \label{eq:comk}
    & [a_t(k,-),a_t(k',+)]=|\lattice| \ind(k+k'=0 \bmod\duallattice),\
    \nonumber \\ &
    [a_t(k,-),a_t(k',-)]=0=[a_t(x,+),a_t(y,+)],
\end{align}
for all $k,k'\in\duallattice$ and $t\in\R$, while
the corresponding fermionic operators satisfy
\begin{align}
    \label{eq:comk_fer}
   & \{a_t(k,-),a_t(k',+)\}=|\lattice| \ind(k+k'=0 \bmod\duallattice),
    \nonumber \\ &
    \{a_t(k,-),a_t(k',-)\}=0=\{a_t(x,+),a_t(y,+)\}\,.
\end{align}
Based on these relations, it is now straightforward to check
that $[H,N]\psi=0$ for all $\psi\in \D$, i.e., the number operator
$N|_\D$ commutes with the Hamiltonian in both fermionic and bosonic case.

We say that a function $\Psi\colon \R\to \D$ is differentiable if the derivative 
\begin{align}
    \ddt  \Psi_t =\lim_{h \to 0}  (\Psi_{t+h}-\Psi_t )/h
\end{align}
exists in $\D$ for all $t\in\R$, using the topology inherited from the appropriate Fock space. If for each $t\in\R$, $A_t$ is an operator on $\D$, 
we say that $t\mapsto A_t$ is (strongly) differentiable if $t\mapsto A_t\psi$ is differentiable for all $\psi\in \D$. 
If $t\mapsto A_t$ is differentiable, we define $\ddt A_t:\D\to\D$ by 
\begin{align}
    (\ddt A_t) \psi =  
    \ddt (A_t \psi)\,,
\end{align}
which implies that $\ddt A_t$ is a linear map, i.e., an operator on $\D$.
Note that $\ddt$ satisfies the product rule, i.e. $\ddt(A_tB_t)=(\ddt A_t)B_t+A_t\ddt B_t$.

We will now compute the time derivatives of $a_t(x,\sigma)$ and $\fta_t(k,\sigma)$. Since $\D\subset D(H)$, the limit defining the derivative of $t\mapsto\rme^{\pm\ci t H}$ exists in the Fock space, and 
$\ddt (\rme^{\pm\ci t H}\psi)= \pm\ci H\rme^{\pm\ci t H}\psi$ for $\psi\in\D$.
Since $H\D \subset \D$, $a(x)\D \subset \D$ and $\rme^{\pm \ci t H}\D \subset \D$, we find that $t\mapsto a_t(x,-)$ is differentiable and the 
product rule implies $\ddt a_t(x,-)=\ci \rme^{\ci tH}[H,a(x)]\rme^{-\ci tH}$. 
By a similar argument, we get $\ddt a_t(x,+)=\ci \rme^{\ci tH}[H,a^*(x)]\rme^{-\ci tH}$.
Computing the commutators and using the CCR algebra properties for bosons  \eqref{eq:com} and CAR algebra properties for fermions \eqref{eq:com_fer}, yields for either case the expression
\begin{align}
    \label{eq:atxevol}
    &\ddt a_t(x,\sigma) = \ci \sigma \sum_{y\in \Lambda} \alpha(x-y)
    a_t(y,\sigma) +\ci\sigma \lambda  \sum_{y\in \Lambda} V(x-y)
 \begin{cases}
  a_t(y,+) a_t(y,-)a_t(x,-)\,, & \sigma=- \\
  a_t(x,+) a_t(y,+)a_t(y,-)\,, & \sigma=+
 \end{cases}
 \,.
 \end{align}
 Taking Fourier transform implies
 \begin{align}
    \nonumber
    &\ddt \fta_t(k,\sigma)=\ci\sigma\omega(k)\fta_t(k,\sigma)
    \\
    &+
    \label{eq:FTa_evol}
    \ci\sigma\lambda
    \int_{(\duallattice)^3}dk_1dk_2dk_3
    \delta_{L}(k-k_1-k_2-k_3)
    \FT{V}(k_2+k_{2+\sigma1})
    \fta_t(k_1,+)\fta_t(k_2,\sigma)\fta_t(k_3,-)\,,
\end{align}
where $\delta_L(k)=|\lattice|\ind(k{ = }0 \mod\duallattice)$ and $\omega=
\FT{\alpha}$ is the \defem{dispersion relation} of the interaction $\alpha$.
Here $\int_\duallattice\rmd k$ is a shorthand notation for
$|\duallattice|^{-1}\sum_{k\in\duallattice}$.
Note that by the assumptions on $\alpha$ and $V$, also both $\omega$ and
$\FT{V}$ are real 
valued and symmetric functions on $\duallattice$.

A general discussion about states of CCR and CAR algebras can be found from \cite{brattelirobinson2}.
Here, we consider a specific case of a tracial state on the appropriate Fock space, i.e., 
we assume that the initial state $\mean{\cdot}$ is given by 
a density operator:  there is a non-negative, trace class operator $\rho$ on the Fock space which satisfies $\tr \rho=1$, and $\mean{A}:=\tr(A\rho)$ for all bounded operators $A$ on the Fock space.
Diagonalizing, we thus can find an orthonormal sequence $(\varphi^{(i)})_i^\infty$ of 
vectors in the appropriate Fock space and real numbers $p_i\ge0$, $\sum_{i=1}^\infty p_i=1$ such that 
\begin{align}
    \label{eq:state}
    \mean{A} = \sum_{i=1}^\infty p_i \innerprod{A\varphi^{(i)}}{\varphi^{(i)}}
\end{align}
for all bounded operators $A$ on the Fock space $\F_\pm$.

After these common preliminaries,
we will continue the discussion separately for  fermions in Subsection
\ref{sec:fermions} and cover bosons in Subsection \ref{sec:bosons}.  
% We will do this in the case when the initial state is both translation and gauge invariant. 

\subsection{Fermions}
\label{sec:fermions}

We will now focus on the fermionic case.   
The fermionic annihilation and creation operators are bounded operators on the fermionic Fock space $\F_-$, and thus we can take $\D=\F_-$, as explained earlier, and this obviously is an invariant subspace for $\rme^{-\ci t H}$. More precisely, in the fermionc case we have 
\begin{align}
    \label{eq:a_bounded}
    \norm{a^\#(f)\psi} 
    \leq 
    \norm{f} 
    \norm{\psi} 
\end{align}
for all $f\in\h$ and $\psi\in\F_-$ , where $a^\#(f)$ denotes either $a(f)$ or $a^*(f)$.
Hence, $\norm{a(x,\sigma)}\le 1$ for $x\in \Lambda$, $\sigma = \pm$.
It follows that $H$ is a bounded, self-adjoint operator on the Fock space, and thus the corresponding semigroup $\rme^{-\ci t H}$ is unitary, and commutes
with $N$.  Therefore, $\norm{a_t(x,\sigma)}\le 1$ for all $t\in \R$, and thus $\norm{\FT{a}_t(k,\sigma)}\le |\Lambda|$ for all allowed values.

The idea is to use the antisymmetric cumulant hierarchy \eqref{eq:cumhieRfer}
separately for either of the fields $a$ and $\fta$.  Therefore the first
question is about Assumption \ref{ass:b_time}, namely, what are the objects $U,
  \A_t, \E$ and $b_t:J\to\A_t$ in these two cases.  
 In both cases we take $U=\R$ and for each $t\in\R$ let $\A_t=\A$, where $\A$ is the complex unital algebra of bounded linear operators on $\F_-$. We denote the null and identity elements of 
$\A$ by $0$ and $\id=\text{id}_\D$, respectively.
Each $a_t(x,\sigma)$ and $\fta_t(k,\sigma)$ belong to $\A$ as 
bounded operators.
We use the tracial state \eqref{eq:state}, to define $\E$, namely, we
set $\E[A]=\mean{A}=\tr(\rho A)$ for all $A\in\cup_t\A_t=\A$.
Then, the map $A\mapsto \E[A]$ is linear on the vector space $\A$ and
satisfies $\E[\id]=1$.  Hence, $\E$ is an expectation map on
$\A$.  

In the case of the field $a$ we will use $b_t=a_t\colon \lattice\times\{\pm\}\to\A$ and for 
the field $\fta$ we will take $b_t=\fta_t\colon \duallattice\times\{\pm\}\to\A$.
We have now checked that these definitions yields a collection which
satisfies Assumption \ref{ass:b_time}.

The main reason to consider antisymmetric cumulants  of the fields $a$ and
$\fta$ for fermions is that these operators satisfy CAR
relations and, hence, we can apply  Proposition
\ref{prop:cum_per_invariance_F} to their \defem{antisymmetric} cumulants
which are now permutation invariant up to a sign, apart from second order
cumulants.  We could equally well study the symmetric cumulants of these operators but no
such simplifying property is expected for these.  In addition, if the quantum
evolution leads to states whose number distribution is close to the
Fermi--Dirac distribution, as would be expected for fermions near equilbrium,
then this condition can be characterized by assuming $\fcum[\es{a_t}{I}]\approx 0$ if
$I\in\I(\lattice\times\{\pm\})$, $|I|>2$.  Meanwhile, then the corresponding symmetric cumulants would not be
small.  These properties are analogous to standard cumulants of  real-valued
random variables which are zero from order $3$ onwards if and only if the joint
distribution of the random variables is Gaussian.

Indeed, one important goal for an analysis of the cumulant hierarchy would be
to study the equilibriation of fermionic system.  We motivate this by pointing
out a connection of these hierarchies to quantum kinetic theory in Section\
\ref{sec:derivation}.

Finally, we need to show that the time-evolution of these cumulants also
satisfies the requirements of Assumption \ref{ass:derivative}.
Note that for $A_0\in\A$ and $t\in\R$ we have 
$A_t=\rme^{\ci t H}A_0\rme^{-\ci t H}\in \A$ and $\ddt A_t = \ci [H,A_t]$.
For every $t\in \R$ and $A\in\A$ we define
%\[
% \partial A := \ci \rme^{\ci t H}[H, A_0]\rme^{-\ci t H}
% = \dds \rme^{\ci s H}A_0\rme^{-\ci s H}\big|_{s=t}
% \,.
%\]
\[
 \partial A := \ci [H, A]
 \,.
\]
Note that, since $H=H|_\D\in \A$, also $[H,A]\in \A$ and thus the above
definition yields a map $\partial :\A \to \A$.  If $A,B\in  \A$, we
have $[H,A B]=[H,A] B + A [H,B]$, and it follows that $\partial(A
B) = \partial(A) B+A \partial(B)$ for all $A,B\in \A$.  Hence, the required
product rule is satisfied by $\partial$ in the entire algebra.

Now, take any non-empty
$I\in\I(\lattice\times\{\pm\})$.
Given any $(x,\sigma)\in \lattice\times\{\pm\}$, we already know that
 the map $t\mapsto a_t(x,\sigma)$ is differentiable.
Note that $\ddt a_t(x,\sigma) = \partial a_t(x,\sigma)$.
  The right hand side of \eqref{eq:atxevol} is differentiable by the product
  rule, so we find that the map $t\mapsto a_t(x,\sigma)\psi$ is smooth, in
  particular, it is continuously differentiable.  Therefore, by the product 
  rule, also the map $t\mapsto a_t^I \psi$
    is smooth and satisfies
\[
 \ddt(a_t^I\psi) = \sum_{(\ell,(x_\ell,\sigma_\ell))\in  I} a_t^{I|_{<\ell}}\partial (a_t(x_\ell,\sigma_\ell))  a_t^{I|_{>\ell}}\psi
 = \partial(a_t^I)\psi\,.
\]
It follows that
\[
 \innerprod{\psi}{a_t^I\psi}
 = \innerprod{\psi}{a_0^I\psi}
 + \int_0^t\! \rmd s\,\innerprod{\psi}{\partial(a_s^I)\psi}\,,
\]
where the integrand is bounded by $\norm{\psi}^2|I| (\norm{\alpha}_1+\lambda
\norm{V}_1)$.  This can be used to justify relying on Fubini's theorem in
\eqref{eq:state}, and to conclude that then
\[
 \mean{a_t^I} =\mean{a_0^I} + \int_0^t\! \rmd s\,  \mean{\partial(a_s^I)}\,,
\]
where the integrand is a continuous function of $s$.
Hence, we can now 
conclude that the map $t\mapsto \E[a_t^I]=\mean{a_t^I}$ is continuously differentiable and
satisfies $\ddt \E[a_t^I] = \E[\partial(a_t^I)]$ for any $I\in
\I(\lattice\times\{\pm\})$.  

We conclude that this construction of $\partial$ and $\E$ satisfies Assumption
\ref{ass:derivative} for any $t\in \R$ and non-empty $I \in
\I(\lattice\times\{\pm\})$.  In addition, choosing any  $j =(x,\sigma)\in \lattice\times\{\pm\}$,
we have \[
  \partial a_t(j) =
  \sum_{I'\in\I_{j}}R^{I'}_{j}(t)b^{I'}_t\,
\]
with $\I_{(x,\sigma)}:=\{I^1(y;x,\sigma),I^3(y;x,\sigma)\}_{y\in \lattice}$,
where 
\begin{align}
\label{eq:I1}
I^1(y;x,\pm)&=\{(1,(y,\pm))\}\,,  
\\
\label{eq:I3plus}
I^3(y;x,-)&=\{(1,(y,+)),(2,(y,-)),(3,(x,-))\}\,,
\\
\label{eq:I3minus}
  I^3(y;x,)&=\{(1,(x,+)),(2,(y,+)),(3,(y,-))\}\,.
\end{align}
  The respective coefficients are $R^{I^1(y;x,\sigma)}_{(x,\sigma)}(t)
  := \ci \sigma \alpha(x-y)$ and  $R^{I^3(y;x,\sigma)}_{(x,\sigma)}(t)
  := \ci \sigma \lambda V(x-y)$.
Therefore, also the assumption
\eqref{eq:nonreq_a_evo} is satisfied for this construction, and every $I'\in
\I_{(x,\sigma)}$ is odd.  By Theorem \ref{thm:cumhieR}, the 
antisymmetic cumulants $\fcum[\es{a_t}{I}]$,
$I\in\I(\lattice\times\{\pm\})$ satisfy \eqref{eq:cumhieRfer}.

Similarly, for $\fta_t(k,\sigma)$, $(k,\sigma)\in \duallattice\times\{\pm\}$, we have by \eqref{eq:FTa_evol}
that there are such $R^{I'}_{j_\ell}(t)$ and
$\I_{j_\ell}$ that
each $I'\in \I_{(x,\sigma)}$ is odd  and \eqref{eq:nonreq_a_evo} holds.  The
related derivatives still correspond to smooth functions in time.  Hence,
repeating the above argument, also the antisymmetric cumulants $\fcum[\es{\fta_t}{I}]$,
$I\in\I(\duallattice\times\{\pm\})$, satisfy an evolution hiearchy, as determined in
Theorem \ref{thm:cumhieR}.
Let us collect the above results into a Theorem containing also the explicit
forms of the two hierarchies.  
\begin{theorem}
    \label{thm:fermion_hie}
   Consider a tracial state satisfying \eqref{eq:state}, and use
   $\E[\cdot]=     \mean{\cdot}$ to construct antisymmetric cumulants.
    Then both
    $(\fcum[\es{a_t}{I}])_{I\in\I(\lattice\times\{\pm\})}$ and
    $(\fcum[\es{\fta_t}{I}])_{I\in\I(\duallattice\times\{\pm\})}$ satisfy the
    antisymmetric cumulant hierarchy \eqref{eq:cumhie_F}.  Explicitly, then
    for $t\in\R$ and any  $I\in\I(\lattice\times\{\pm\})$ there holds
    \begin{align}
        \label{eq:ferhie}
        &\partial_t
        \fcum[\es{a_t}{I}]
        =
        \sum_{(\ell,(x,\sigma))\in I}^{}
        \ci \sigma \sum_{y\in \Lambda}
        \Bigg(
        \alpha(x-y)
        \fcum[\es{a_t}{I^{\ell}_{x\to y}}]
        +
        \lambda
         V(x-y)
        \nonumber \\&
        \qquad \quad \times
    \sum_{\pi\in\p'_2(I'+I\setminus \ell)}
        \sgn(\pi; P^{I',I}_\ell)
        \prod_{A\in \pi}
        \ind(A\cap   I'\ne \emptyset)
        \fcum[\es{a_t}{P_\ell^{I',I}(A)}]
       \Biggr|_{I'= I^3(y;x,\sigma)}  \Bigg)
        \,,
    \end{align}
    where  $P_\ell^{I',I}$ is as in Remark \ref{rem:insert_permutation},
    $I^{\ell}_{x\to y} := I\setminus \{(\ell,(x,\sigma))\}\cup \{(\ell,(y,\sigma))\}$, and
    $I^3(y;x,\sigma)$ is defined in \eqref{eq:I3plus}--\eqref{eq:I3minus}.

    In addition,  for $t\in\R$ and any  $I\in\I(\duallattice\times\{\pm\})$ there holds
    \begin{align}
        \label{eq:ferfthie}
        &\partial_t\fcum[\es{\fta_t}{I}]=
        \ci\Omega(I)
        \fcum[\es{\fta_t}{I}]
+  \ci \lambda
        \sum_{(\ell,(k,\sigma))\in I}^{}
        \int_{(\duallattice)^3}
        \rmd k_{1}
        \rmd k_{2}
        \rmd k_{3}
        \delta_{L}(k-k_1-k_2-k_3)
        \sigma \FT{V}(k_2+k_{2+\sigma })
        \nonumber \\&
        \qquad \quad \times
       \sum_{\pi\in\p'_2(I'+I\setminus \ell)}
        \sgn(\pi; P^{I',I}_\ell)
        \prod_{A\in \pi}
        \ind(A\cap   I'\ne \emptyset)
        \fcum[\es{a_t}{P_\ell^{I',I}(A)}]
       \Biggr|_{I'= \hat I^3(k;\sigma)}
    \end{align}
   where $\Omega(I):=\sum_{(\ell,(k,\sigma))\in I}^{}
        \sigma \omega(k)$, and
   $\hat I^3(k;\sigma):=\{(1,(k_1,+)),(2,(k_2,\sigma)),(3,(k_3,-))\}$.
\end{theorem}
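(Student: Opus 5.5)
The plan is to derive both hierarchies directly from Theorem \ref{thm:cumhieR}, in its antisymmetric form \eqref{eq:cumhieRfer}, rather than from the Wick-regularized Theorem \ref{thm:cumhie_F}. The preceding discussion has already checked Assumptions \ref{ass:b_time} and \ref{ass:derivative} for every $t\in\R$ and every nonempty $I$, with $\partial=\ci[H,\cdot]$. It has also produced the polynomial rate representation \eqref{eq:nonreq_a_evo} for $b_t=a_t$, in which every $I'\in\I_{(x,\sigma)}$ is odd. So for the position-space statement the only work is to substitute the explicit data into \eqref{eq:cumhieRfer} and simplify.

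First I would split the sum over $I'\in\I_{(x,\sigma)}$ into the one-element sequences $I^1(y;x,\sigma)$ and the cubic sequences $I^3(y;x,\sigma)$. For $I'=I^1$ the gluing $I'+I\setminus\ell$ followed by $P_\ell^{I',I}$ just puts the new index $(y,\sigma)$ at position $\ell$. Since $|I'|=1$, every admissible partition has exactly one cluster $A$ meeting $I'$, and the constraint $\ind(A\cap I'\neq\emptyset)$ forces $\pi=\{I'+I\setminus\ell\}$. The sign is then $+1$ (single cluster, identity reordering), and $P_\ell^{I',I}[I'+I\setminus\ell]=I^\ell_{x\to y}$. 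This gives the term $\alpha(x-y)\fcum[\es{a_t}{I^\ell_{x\to y}}]$.

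The cubic terms carry coefficient $\ci\sigma\lambda V(x-y)$ and are copied verbatim, which yields \eqref{eq:ferhie}. Here I would also note that the partition sum written over $\p$ in \eqref{eq:cumhieRfer} is really over $\p'_2$. This follows because the cumulants are defined with $\p'_2$ and Proposition \ref{prop:iteratedcum_F}, from which Theorem \ref{thm:cumhieR} descends, sums over $\p'_2$.

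For the momentum-space hierarchy I would repeat the argument with $b_t=\fta_t$ on $J=\duallattice\times\{\pm\}$. By \eqref{eq:FTa_evol}, $\partial\fta_t(k,\sigma)$ is a finite linear combination with odd index sequences: $\{(1,(k,\sigma))\}$ with coefficient $\ci\sigma\omega(k)$, and $\hat I^3(k;\sigma)$ with coefficients $\ci\sigma\lambda|\duallattice|^{-3}\delta_L\FT V$. The integrals are finite sums, so each one is a finite family. The differentiability and trace arguments carry over unchanged, using the bound $\|\fta_t\|\le|\Lambda|$ and linearity. In the linear term the inserted index equals the removed one, so $\tilde I_\ell=I$. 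Summing over $\ell$ then produces $\ci\Omega(I)\fcum[\es{\fta_t}{I}]$, and the cubic part gives the stated integral term.

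The main obstacle is bookkeeping rather than analysis. I need to verify that the insertion permutation and the sign $\sgn(\pi;P_\ell^{I',I})$ are exactly those of Theorem \ref{thm:cumhieR} once the discrete sums are written as $\int\rmd k$. In particular, the variables $k_1,k_2,k_3$ must be treated as labels of distinct index sequences $I'$, with the oddness of $|I'|=3$ making Proposition \ref{prop:iteratedcum_F} applicable. I would also fix the evident typo that the cumulants in \eqref{eq:ferfthie} should carry $\fta_t$.
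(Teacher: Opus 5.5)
Your proposal is correct and follows the paper's own argument. The paper likewise verifies Assumptions \ref{ass:b_time} and \ref{ass:derivative} for $\partial=\ci[H,\cdot]$, records the polynomial rate representation with the odd sequences $I^1(y;x,\sigma)$, $I^3(y;x,\sigma)$ (and their Fourier analogues), and then invokes Theorem \ref{thm:cumhieR}, i.e.\ \eqref{eq:cumhieRfer}, rather than the Wick-regularized \eqref{eq:cumhie_F}. Your explicit reduction of the one-element terms to $\alpha(x-y)\fcum[\es{a_t}{I^{\ell}_{x\to y}}]$ and $\ci\Omega(I)\fcum[\es{\fta_t}{I}]$ is consistent with that route, as are your two corrections: $\p'_2$ in place of $\p$ in \eqref{eq:cumhieRfer}, and $\fta_t$ in place of $a_t$ in \eqref{eq:ferfthie}.
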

\begin{remark}
    Note that the rate function on the right hand side of both \eqref{eq:ferhie} and \eqref{eq:ferfthie} is at most a third order polynomial in cumulants since in both case $|I'|=3$, independently of $I$.  The term containing the longest cumulant is obtained when $|\pi|=1$ which gives a term proportional to $\fcum[\es{a_t}{P_\ell^{I',I}(I'+I\setminus \ell)}]$ where the length of the sequence is $|I|+2$.
\end{remark}

It is possible to use a Wick polynomial regularization to simplify  $\partial a_t$ and, hence, the evolution hierarchy.  By definition,
\[
 b_1= \fwick{b_1} \,, \quad
 b_1 b_2 b_3 = \fwick{b_1 b_2 b_3}
 + \E[b_1 b_2] \fwick{b_3}
 -  \E[b_1 b_3] \fwick{b_2}
 + \E[b_2 b_3] \fwick{b_1}\,,
\]
where $\E[b_1 b_2]=\fcum[b_1,b_2]$.  With the shorthand notation
$C_{\sigma\sigma'}(x,x';t) := \fcum[a_t(x,\sigma),a_t(x',\sigma')]$, we thus have
\[
  \partial b_t(x,\sigma) =
  \sum_{I'\in\I_{(x,\sigma)}}M^{I'}_{(x,\sigma)}(t)\fwick{b^{I'}_t}\,
\]
with otherwise same definitions of $I'\in \I_{(x,\sigma)}$ as before, except now we need to include also sets $I^1(y;x,-\sigma)=\{(1,(y,-\sigma))\}$, $y\in \lattice$, in $\I_{(x,\sigma)}$.
Then we set
$M^{I^3(y;x,\sigma)}_{(x,\sigma)}(t):= \ci \sigma \lambda V(x-y)$, as before, but change the first order terms to $M^{I^1(y;x,\sigma')}_{(x,\sigma)}(t):=\ci \sigma \tilde{\alpha}_{\sigma\sigma'}(x,y;t)$, where
\begin{align*}
 &\tilde{\alpha}_{\sigma\sigma}(x,y;t) :=
\alpha(x-y) + \ind(x=y) \lambda \sum_{x'\in \lattice} V(x-x') C_{+-}(x',x';t)
\\ & \qquad
 - \lambda V(x-y) \times \begin{cases}
   C_{+-}(y,x;t)
  \,, & \sigma=-\,, \\
  C_{+-}(x,y;t)\,, & \sigma=+\,,
 \end{cases}
\end{align*}
and
\begin{align*}
 &\tilde{\alpha}_{\sigma,-\sigma}(x,y;t) :=
\lambda V(x-y) \times \begin{cases}
   C_{--}(y,x;t)
  \,, & \sigma=-\,, \\
  C_{++}(x,y;t)\,, & \sigma=+\,.
 \end{cases}
\end{align*}

With these definitions, we can simplify the corresponding hierarchy as follows: By Theorem \ref{thm:cumhie_F}, we then have for the two lowest orders
 \begin{align*}
& \partial_t \E[a_t(x,\sigma)]
 = \ci \sigma
 \sum_{y\in \Lambda, \sigma' = \pm}
        \tilde{\alpha}_{\sigma\sigma'}(x,y;t)\E[a_t(y,\sigma')]
        + \ci  \lambda \sigma \sum_{y\in \Lambda}
        V(x-y)  \fcum[\es{a_t}{I^3(y;x,\sigma)}]
        \,,
\\
 & \partial_t C_{\sigma\sigma'}(x,x';t)
 = \ci \sigma
 \sum_{y\in \Lambda, \sigma'' = \pm}
        \tilde{\alpha}_{\sigma\sigma''}(x,y;t)
        C_{\sigma''\sigma'}(y,x';t)
 + \ci \sigma'
 \sum_{y\in \Lambda, \sigma'' = \pm}
        \tilde{\alpha}_{\sigma'\sigma''}(x',y;t)
        C_{\sigma\sigma''}(x,y;t)
 \\ & \quad
 +   \ci  \lambda \sum_{y\in \Lambda}
         \left(\sigma V(x-y)  \fcum[\es{a_t}{I^3(y;x,\sigma)+(x',\sigma')}]
          + \sigma' V(x'-y)\fcum[\es{a_t}{(x,\sigma)+I^3(y;x',\sigma')}]\right)
 \end{align*}
 where the second order equation is quadratic in $C$ due to definition of $\tilde{\alpha}$.
For $|I|>2$, we have
 \begin{align}
        \label{eq:ferhieWick}
        &\partial_t
        \fcum[\es{a_t}{I}]
        =
        \sum_{(\ell,(x,\sigma))\in I}^{}
        \ci \sigma \sum_{y\in \Lambda, \sigma' = \pm}
        \tilde{\alpha}_{\sigma\sigma'}(x,y;t)
        \fcum[\es{a_t}{I^{\ell}_{(x,\sigma)\to (y,\sigma')}}]
                \nonumber \\&
        \qquad
        + \sum_{(\ell,(x,\sigma))\in I}^{}
        \ci \sigma \lambda \sum_{y\in \Lambda}
         V(x-y) \sum_{\pi\in\p'_2(I'+I\setminus \ell)}
        \sgn(\pi; P^{I',I}_\ell)
        \nonumber \\&
        \qquad \quad \times
          \prod_{A\in \pi}
        \ind(|A| \text{ odd, or } A \not \subset   I',\, A\not \subset I\setminus \ell +|I'|)
        \fcum[\es{a_t}{P_\ell^{I',I}(A)}]
       \Biggr|_{I'= I^3(y;x,\sigma)}
        \,.
    \end{align}
If $|I|$ is even, we have also $|I'+I\setminus \ell|$ even, and hence none of the clusters in a partition in $\pi\in \p'_2(I'+I\setminus \ell)$ can be odd.  In this case, it it thus possible to replace the characteristic function above by a simpler form, by $\ind(A \not \subset   I',\, A\not \subset I\setminus \ell +|I'|)$.

Analogous manipulations are possible for the hierarchy with $I\in\I(\duallattice\times\{\pm\})$ which will be studied in more detail later, in Section \ref{sec:derivation}.

\subsection{Bosons}
\label{sec:bosons}

We will now focus on the bosonic case. The idea is to use the symmetric cumulant hierarchy \eqref{eq:cumhieRbos}
separately for either of the fields $a$ and $\fta$.
The bosonic annihilation and creation operators are unbounded, and we have
\begin{align}
    \label{eq:a_unbounded}
    \norm{a^\#(f)\psi_n} 
    \leq
     (n+1)^{1/2}
    \norm{f}
    \norm{\psi_n}
\end{align}
for all $\psi_n\in \hh_n$ and $f\in\h$, where $a^\#(f)$ denotes either $a(f)$
or $a^*(f)$ and we have identified $\psi_n$ with $\psi'\in\F$,
$\psi'_m=\ind(n=m)\psi_n$ \cite{brattelirobinson2}.
If considered to be unbounded operators on $\F_+$, the natural domains for their products will depend on how many and which operators are in the product, and one has to worry about shrinking the domain too much, e.g., into a zero subspace.  Hence, there is no straightforward way of using our earlier algebraic computations in this setup.

To alleviated this problem, we consider these operators on the domain
$\D=\bigcup_{r>0}D(\rme^{rN})$.  Note that this set is indeed a vector subspace of $\F_+$.
Now, if $\psi_n\in \hh_n$, $x\in \lattice$, $\sigma\in\{\pm\}$,
we have that
\[
\norm{a(x,\sigma) \psi_n}^2 \le (n+1) \norm{\psi_n}^2\,,
\]
and thus also for any $I\in \I(\lattice\times\{\pm\})$, with $M=|I|$,
and any $n\ge 0$, $s>0$,
\begin{align}\label{eq:bospsinbound}
 &
\norm{a^I \psi_n}^2 \le (n+1)^{M} \norm{\psi_n}^2\le M! s^{-M} \rme^{sn+s}\norm{\psi_n}^2\,.
\end{align}
Hence, assuming that $\psi\in D(\rme^{r N})$, $r>0$, we find that $a^I\psi\in D(\rme^{r' N})$ for any $r'<r$. Therefore, $a^I\psi\in \D$
for all $\psi\in \D$.
Since $\rme^{-\ci t H} \psi_n=\rme^{-\ci t H_n} \psi_n$, we
have $\norm{\rme^{r N} \rme^{-\ci t H} \psi}^2
=\sum_{n=0}^\infty \rme^{2rn} \norm{\rme^{-\ci t H} \psi_n}^2=\norm{\rme^{r N} \psi}^2 $, for any $r>0$, and thus $\rme^{-\ci t H}\psi\in \D$ for all $\psi\in \D$, $t\in \R$. Therefore, also $a_t^I\psi\in \D$,
for all $\psi\in \D$ and any $I\in \I(\lattice\times\{\pm\})$, $t\in\R$.
Then, $\fta_t^I\psi\in \D$,
for all $\psi\in \D$ and any $I\in \I(\duallattice\times\{\pm\})$.
We also now obtain the following bound from \eqref{eq:bospsinbound},
valid for any $\psi\in D(\rme^{r N})$ and $0<s<r$,
\begin{align}\label{eq:bospsinboundfull}
 & \norm{a_t^I\psi} = \norm{a_0^I\rme^{-\ci t H}\psi}
 \le \sum_{n=0}^\infty \norm{a_0^I\rme^{-\ci t H_n}\psi_n}
 \le C_{|I|,s} \sum_{n=0}^\infty \rme^{s n}\norm{\psi_n}
 \le \frac{C_{|I|,s}}{\sqrt{1-\rme^{-2(r-s)}}} \norm{\rme^{r N} \psi}
\end{align}
where $C_{|I|,s}$ depends only on $s$ and the length of $I$.

To satisfy Assumption \ref{ass:b_time}, we take
$U=\R$, and define $\A_t$ as the 
complex unital subalgebra of the algebra of linear operators on $\D$,
generated by the collection $\{a_t(x,\sigma)\colon x\in\lattice,
\sigma\in\{\pm\}\}$.
Each $\fta_t(k,\sigma)$ belong to $\A_t$ as an
linear combination of the operators $a_t(x,\sigma)$. Note that by \eqref{eq:a_unbounded} the algebra $\A_t$ contains unbounded
operators. 
Since $\rme^{\ci t H}$ is the inverse operator of
$\rme^{-\ci t H}$, for each $t\in\R$ the map $A_0\mapsto \rme^{\ci t H}A_0\rme^{-\ci t H}$
is an algebra isomorphism from $\A_0$ to $\A_t$ with the inverse map $A_t\mapsto
\rme^{-\ci t H}A_t\rme^{\ci t H}$.

Also for bosons we use a tracial state as in \eqref{eq:state}
and aim to set $\E[A]=\tr(\rho A)$,
but now we need to add some requirements to make sure that
$\E[a_t^I]$ are all well-defined.  In this subsection, we assume that the state has \defem{exponential moments} for the number operator: we assume there is $r>0$ such that $\tr(\rho \rme^{2 r N})<\infty$.  By positivity, then
for any $i$ we have
\[
\norm{\rme^{r N}\varphi^{(i)}}^2 = \innerprod{\rme^{2rN}\varphi^{(i)}}{\varphi^{(i)}} <\infty\,,
\]
and, therefore, we conclude that each eigenvector of the density operator satisfies $\varphi^{(i)}\in D(\rme^{rN})\subset \D$.
Moreover, choosing $s=\frac{r}{2}$ in \eqref{eq:bospsinboundfull}, we obtain a constant $C$ which depends only on $|I|$ and $r$, such that
\begin{align*}
 \norm{a_t^I\varphi^{(i)}}
 \le C \norm{\rme^{r N}\varphi^{(i)}}\,.
\end{align*}
Then,
\[
 \sum_{n=0}^\infty p_i | \innerprod{a_t^I\varphi^{(i)}}{\varphi^{(i)}}|
 \le C \sum_{n=0}^\infty p_i \norm{\varphi^{(i)}}\norm{\rme^{r N}\varphi^{(i)}}
 \le C \sqrt{\tr \rho} \sqrt{\tr (\rme^{2rN}\rho)}<\infty\,.
\]
Hence, we can define,
\[
   \E[A] := \sum_{i=1}^\infty p_i \innerprod{A\varphi^{(i)}}{\varphi^{(i)}}\,,
   \qquad A\in \bigcup_{t\in\R}\A_t\,, \]
as an absolutely convergent sum
since any such $A$ is a finite linear combination of terms of the form
$a_t^{I}$, $I\in \I(\lattice\times\{\pm\})$.  We also record the time-uniform bound
 $|\E[a_t^I] | \le C(|I|,r)$ for $I\in \I(\lattice\times\{\pm\})$, $t\in \R$.

Then, the map $A\mapsto \E[A]$ is linear on the vector space $\A_t$ and satisfies $\E[\id]=1$.  Hence, each $\E|_{\A_t}$ is an expectation map on $\A_t$. In the case of the field $a$ we will use $b_t=a_t\colon \lattice\times\{\pm\}\to\A_t$ and for 
the field $\fta$ we will take $b_t=\fta_t\colon \duallattice\times\{\pm\}\to\A_t$.
We have now checked that these definitions yields a collection which satisfies Assumption \ref{ass:b_time}.

Now we will consider the corresponding \defem{symmetric} cumulants and Wick polynomials, for same two different cases as for fermions: we either consider the field $a$ or the Fourier transformed field $\fta$.
For bosons, symmetric cumulants are appropriate, since then the annihilation and creation operators  satisfy
CCR and we can apply  Proposition \ref{prop:cum_per_invariance},
and these cumulants are
permutation invariant, apart from the second order.
In addition, if the quantum evolution leads to states whose number distribution is close to the Bose--Einstein distribution, as would be expected for bosons near equilbrium,
then this condition can be characterized by assuming $\bcum[\es{a_t}{I}]\approx 0$ if $I\in\I(\lattice\times\{\pm\})$, $|I|>2$, analogously to Gaussian distributions for normal random variables.

Finally, we need to show that also the time-evolution of these cumulants satisfies the requirements of Assumption \ref{ass:derivative}.
Given $t\in \R$ and $A\in \A_t$,  let $A_0 =\rme^{-\ci t H}A\rme^{\ci t H} \in \A_{0}$,
we define
\[
 \partial A := \ci \rme^{\ci t H}[H, A_0]\rme^{-\ci t H}
 = \dds \rme^{\ci s H}A_0\rme^{-\ci s H}\bigg|_{s=t}
 \,,
\]
Note that, since $H|_\D \in \A_0$, also $[H,A_0]\in \A_0$ and thus the above
definition yields a map $\partial :\A_t \to \A_t$,
which also satisfies
$\partial(A B) = \partial(A) B+A \partial(B)$ for all $A,B\in \A_t$.
Repeating the argument used in the fermionic case, we also conclude that $\ddt a_t(x,\sigma) = \partial a_t(x,\sigma)$ and it is given by \eqref{eq:atxevol}.
Therefore, the map $t\mapsto a_t(x,\sigma)\psi$ is smooth for any fixed $\psi\in \D$ and
for any non-empty $I\in\I(\lattice\times\{\pm\})$ we have 
\[
 \innerprod{\psi}{a_t^I\psi}
 = \innerprod{\psi}{a_0^I\psi}
 + \int_0^t\! \rmd s\,\innerprod{\psi}{\partial(a_s^I)\psi}\,.
\]
If $\psi = \varphi^{(i)}$, we can find $C$ which only depends on $|I|$, $r$, $\lattice$, $\alpha$ and $\lambda V$, such that
\[
|\innerprod{\psi}{\partial(a_s^I)\psi}|\le C
\norm{\varphi^{(i)}} \norm{\rme^{r N}\varphi^{(i)}}\,.
\]
By a similar Cauchy--Scharz estimate as above, we can then justify using Fubini's theorem to conclude
\[
 \E[a_t^I] =\E[a_0^I] + \int_0^t\! \rmd s\,  \E[\partial(a_s^I)]\,,
\]
where the integrand is absolutely continuous.

Thus the map $t\mapsto \E[a_t^I]$ is continuously differentiable and satisfies $\ddt \E[a_t^I] = \E[\partial(a_t^I)]$ for all non-empty $I\in \I(\lattice\times\{\pm\})$.
We conclude that this construction of $\partial$ and $\E$ satisfies Assumption \ref{ass:derivative} for any $t\in \R$ and non-empty $I \in \I(\lattice\times\{\pm\})$.
In addition, with $j =(x,\sigma)$, we have
\[
  \partial a_t(j) =
  \sum_{I'\in\I_{j}}R^{I'}_{j}(t)a^{I'}_t\,
\]
using the same choices of $\I_{(x,\sigma)}$ and $R^{I'}_{j}(t)$ as for fermions.
By Theorem \ref{thm:cumhieR},
the corresponding symmetric cumulants $(\bcum[\es{a_t}{I}])_{I\in\I(\lattice\times\{\pm\})}$ satisfy \eqref{eq:cumhieRbos}.
The case of symmetric cumulants $(\bcum[\es{\fta_t}{I}])_{I\in\I(\duallattice\times\{\pm\})}$ of the Fourier transformed field $\fta$
 is treated similarly, and also they satisfy an evolution hierarchy, as determined in Theorem \ref{thm:cumhieR}.

These results are summarized in the following Theorem.

\begin{theorem}
    \label{thm:boson_hie}
    Consider a tracial state which has exponential moments:
    there is $r>0$ such that $\tr (\rme^{2 r N}\rho)<\infty$.
   Then the series in \eqref{eq:state} converges absolutely for all $A\in \cup_{t\in\R} \A_t$, and we use it to define the map $\E$.
   The corresponding symmetric cumulants
    $(\bcum[\es{a_t}{I}])_{I\in\I(\lattice\times\{\pm\})}$ and
    $(\bcum[\es{\fta_t}{I}])_{I\in\I(\duallattice\times\{\pm\})}$ satisfy the
    symmetric cumulant hierarchy \eqref{eq:cumhieRbos}, namely,
    for $t\in\R$ there holds 
       \begin{align}
        \label{eq:boshie}
        &\partial_t
        \bcum[\es{a_t}{I}]
        =
        \sum_{(\ell,(x,\sigma))\in I}^{}
        \ci \sigma \sum_{y\in \Lambda}
        \Bigg(
        \alpha(x-y)
        \bcum[\es{a_t}{I^{\ell}_{x\to y}}]
        +
        \lambda
         V(x-y)
        \nonumber \\&
        \qquad \quad \times
    \sum_{\pi\in\p(I'+I\setminus \ell)}
        \prod_{A\in \pi}
        \ind(A\cap   I'\ne \emptyset)
        \bcum[\es{a_t}{P_\ell^{I',I}(A)}]
       \Biggr|_{I'= I^3(y;x,\sigma)}  \Bigg)
        \,,
    \end{align}
    where  $P_\ell^{I',I}$ is as in Remark \ref{rem:insert_permutation},
    $I^{\ell}_{x\to y} := I\setminus \{(\ell,(x,\sigma))\}\cup \{(\ell,(y,\sigma))\}$, and
    $I^3(y;x,\sigma)$ is defined in \eqref{eq:I3plus}--\eqref{eq:I3minus}.

    In addition,  for $t\in\R$ and any  $I\in\I(\duallattice\times\{\pm\})$ there holds
    \begin{align}
        \label{eq:bosfthie}
        &\partial_t\bcum[\es{\fta_t}{I}]=
        \ci\Omega(I)
        \bcum[\es{\fta_t}{I}]
+  \ci \lambda
        \sum_{(\ell,(k,\sigma))\in I}^{}
        \int_{(\duallattice)^3}
        \rmd k_{1}
        \rmd k_{2}
        \rmd k_{3}
        \delta_{L}(k-k_1-k_2-k_3)
        \sigma \FT{V}(k_2+k_{2+\sigma })
        \nonumber \\&
        \qquad \quad \times
       \sum_{\pi\in\p(I'+I\setminus \ell)}
        \prod_{A\in \pi}
        \ind(A\cap   I'\ne \emptyset)
        \bcum[\es{a_t}{P_\ell^{I',I}(A)}]
       \Biggr|_{I'= \hat I^3(k;\sigma)}
    \end{align}
   where $\Omega(I):=\sum_{(\ell,(k,\sigma))\in I}^{}
        \sigma \omega(k)$, and
   $\hat I^3(k;\sigma):=\{(1,(k_1,+)),(2,(k_2,\sigma)),(3,(k_3,-))\}$.
\end{theorem}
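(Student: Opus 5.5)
Most of the analytic work has already been done in the discussion preceding the statement, so the plan is to assemble it and then apply Theorem \ref{thm:cumhieR}, first case, which concerns symmetric cumulants and places no parity restriction on $|I'|$. The steps are: confirm Assumptions \ref{ass:b_time} and \ref{ass:derivative} for both fields, identify the polynomial rate function \eqref{eq:nonreq_a_evo}, and read off \eqref{eq:cumhieRbos}, simplifying the linear terms.

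\emph{Step 1: well-definedness of $\E$ and Assumption \ref{ass:b_time}.} Take the eigenvectors $\varphi^{(i)}$ of $\rho$. From $\tr(\rho\,\rme^{2rN})<\infty$ we get $\varphi^{(i)}\in D(\rme^{rN})\subset\D$. The bound \eqref{eq:bospsinboundfull} with $s=r/2$ gives $\norm{a_t^I\varphi^{(i)}}\le C\norm{\rme^{rN}\varphi^{(i)}}$. Cauchy--Schwarz in the $i$-sum then gives absolute convergence of \eqref{eq:state} for every monomial $a_t^I$, and hence, by finite linearity, for every $A\in\cup_t\A_t$. The same holds for $\fta_t$, since each $\fta_t(k,\sigma)$ is a finite linear combination of the $a_t(x,\sigma)$. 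Linearity and $\E[\id]=1$ follow directly, so $\E|_{\A_t}$ is an expectation map.

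\emph{Step 2: Assumption \ref{ass:derivative}.} Define $\partial A:=\ci\,\rme^{\ci tH}[H,A_0]\rme^{-\ci tH}$. This maps $\A_t$ into $\A_t$ because $H|_\D\in\A_0$, and it is a derivation because commutators are. Strong differentiability of $t\mapsto a_t(x,\sigma)\psi$, together with the explicit formula \eqref{eq:atxevol}, shows that $t\mapsto a_t^I\psi$ is $C^1$ with derivative $\partial(a_t^I)\psi$. The delicate point is exchanging $\ddt$ with the infinite sum over $i$. I would do this by writing $\innerprod{a_t^I\varphi^{(i)}}{\varphi^{(i)}}$ as its initial value plus $\int_0^t$ of the derivative, bounding the integrand uniformly in $s$ by $C\norm{\varphi^{(i)}}\norm{\rme^{rN}\varphi^{(i)}}$ using \eqref{eq:bospsinboundfull} (the polynomial $\partial(a_s^I)$ has length at most $|I|+2$), and applying Fubini via the Cauchy--Schwarz summability from Step 1. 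Continuity of the integrand in $s$ then gives $\ddt\E[a_t^I]=\E[\partial a_t^I]$. I expect this domination argument, which is where unboundedness enters, to be the main obstacle.

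\emph{Step 3: apply the hierarchy.} Equation \eqref{eq:atxevol} expresses $\partial a_t(x,\sigma)$ as $\sum_{I'\in\I_{(x,\sigma)}}R^{I'}_{(x,\sigma)}a_t^{I'}$, with $I^1$ and $I^3$ as in \eqref{eq:I1}--\eqref{eq:I3minus}. Theorem \ref{thm:cumhieR} then gives \eqref{eq:cumhieRbos}. The $I^1$ terms simplify as follows. When $|I'|=1$, the only partition with every cluster meeting $I'$ is the single block $I'+I\setminus\ell$. Moreover, $P_\ell^{I',I}$ places $(y,\sigma)$ at position $\ell$, so the resulting cumulant is $\bcum[\es{a_t}{I^\ell_{x\to y}}]$ by \eqref{eq:cumorder}. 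This yields \eqref{eq:boshie}.

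For $\fta$, the Fourier form \eqref{eq:FTa_evol} gives a rate function with linear coefficient $\ci\sigma\omega(k)$ acting diagonally. Summing these contributions over $\ell$ produces $\ci\Omega(I)\bcum$. The cubic part, with $I'=\hat I^3(k;\sigma)$ and integral coefficients, gives the remaining term of \eqref{eq:bosfthie}.
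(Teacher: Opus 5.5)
Your proposal is correct and follows essentially the same route as the paper. You establish absolute convergence of \eqref{eq:state} from the exponential-moment bound \eqref{eq:bospsinboundfull} and Cauchy--Schwarz, verify Assumption~\ref{ass:derivative} through the integral identity plus Fubini, and then apply the symmetric part of Theorem~\ref{thm:cumhieR} with the rate functions \eqref{eq:atxevol} and \eqref{eq:FTa_evol}. Your explicit reduction of the $|I'|=1$ terms to the single-block partition, giving $\bcum[\es{a_t}{I^\ell_{x\to y}}]$ and $\ci\Omega(I)\bcum[\es{\fta_t}{I}]$, is a detail the paper leaves implicit, and you handle it correctly.
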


\begin{remark}
As in the fermionic case, one can simplify the hiearchy by replacing the polynomials by Wick polynomials in the definition of $\partial$, and then using Theorem \ref{thm:cumhie}.  The calculations and conclusions are very similar to those at the end of Section \ \ref {sec:fermions} and will not be reproduced here.
\end{remark}

\subsection{Truncation of hierarchies leading to Boltzmann--Nordheim equations}
\label{sec:derivation}

As one further application of the above results, we show here a connection between the above cumulant hierarchies and kinetic theory of these two lattice systems.  We aim at spatially homogeneous kinetic theory and recall that the two evolutions are expected to match only after a scaling limit in which the lattice size is sent to infinity and the coupling to zero.  More precisely, we focus on the case with $L\gg 1$ and $0<\lambda\ll 1$.

Derivation of kinetic equations for wave turbulence, a problem whose correlation structure is close to the one we find here, has already been accomplished for certain continuum nonlinear Schr\"odinger equations \cite{deng_full_2023,DengHani_fullrange_2023}.  The precise limit studied there amounts to choosing $L=\lambda^{-p}$ for certain $p>0$, constructing suitable families of initial data, and then inspecting the evolution of finite moments of the Fourier transformed field up to times $t=O(\lambda^{-2})$ and taking $\lambda\to \infty$.   For the usual nonlinearity, one then obtains a Boltzmann--Peierls type wave kinetic equation which has been used to study wave turbulence.  The conjecture for the case studied here is that a similar limit should result to a fermionic Boltzmann--Nordheim equation for fermions and to a bosonic
Boltzmann--Nordheim equation for bosons \cite{lukkarinen_not_2009}.

Let us therefore now mimic the assumptions used in the study of kinetic scaling limits, and rewrite the argument from Appendix A of 
\cite{lukkarinen2026finitelatticekineticequations} to derive a version of the above Boltzmann--Nordheim equations using fixed, finite $L,\lambda$.
We will study the Fourier transformed hierarchies for bosons and fermions simultaneously in the case
of translation and gauge invariant initial data.
Our goal is to derive the error term for the finite lattice kinetic equation studied in \cite{lukkarinen2026finitelatticekineticequations},
by showing that its solution is a good approximation to the covariance matrix of the system as long as the effect of certain 4th and 6th order cumulants can be ignored.  The target finite lattice
kinetic equation assumes that a fixed kinetic time $T_*>0$ is given,
corresponding to $t=\lambda^{-2} T_*$ in the original, microscopic time units,
and we  then aim to find a function $W\colon [0,T_*]\times\duallattice\to\R$ for which
\begin{align}
    \label{eq:flke}
      W_{T}(k_0)
      &=
      W_0(k_0)
      +
      \int_0^T
      \rmd t 
       \ccol(W_{t},T-t)(k_0)
       \,.
\end{align}
Here, the finite lattice collision operator is given by
\begin{align}
    \nonumber
    &
    \ccol(W_{t},T-t)(k_0)
   = 
   \pi \int_{(\duallattice)^3}
    \rmd k_1\rmd k_2\rmd k_3
    \delta_L(k_0+k_1-k_2-k_3)
    \delta_{\lambda,T-t}(\omega_0+\omega_1-\omega_2-\omega_3)
    \\
    \label{eq:ccol}
                         &\times
    \left( \FT{V}(k_1-k_2)+\theta\FT{V}(k_1-k_3) \right)^2
    \Big( \theta(h_1h_2h_3+h_0h_2h_3-h_0h_1h_3-h_0h_1h_2)+h_2h_3-h_0h_1 \Big)
    \,,
\end{align}
where $\theta=+$ for bosons and $\theta=-$ for fermions, and we have employed the shorthand notations
$\omega_i=\omega(k_i)$ and $h_i=W_{t}(k_i)$ and used the following approximation for a $\delta$-function,
\begin{align}
     \delta_{\lambda,\tau_0}(\Omega)
     =
     \frac{1}{2\pi}
     \int^{\lambda^{-2}\tau_0}_{-\lambda^{-2}\tau_0} \rmd s
    e^{-\ci s\Omega}
    \,, \qquad \lambda,\tau_0 >0\,, \ \Omega \in \R\,.
    \label{eq:energy_delta}
\end{align}

It has been shown in \cite{lukkarinen2026finitelatticekineticequations} that this problem does have well-behaved solutions, in the sense that given sufficiently small $T_*>0$,
there is a large class of initial data and potential functions so that for all large enough $L$ the solutions exist and remain bounded in the sup-norm.  If a limit of such solutions would also exist
when $L\to\infty$ followed by $\lambda\to 0$,  then this limit function could be expected to solve the corresponding limit equation.  Formally, we then have
$\ccol(W_{T},\tau)(k_0)\to \mathcal{C}(W_T)(k_0)$ for any $\tau>0$,
where $ \mathcal{C}(W)$ is the collision operator of the
fermionic Boltzmann--Nordheim equation, if $\theta=-$, and of the bosonic
Boltzmann--Nordheim equation, if $\theta=+$.  Hence, we could expect the solutions to the finite lattice equation to be well approximated by solutions to the appropriate Boltzmann--Nordheim equation.

To convert the assumptions into the current setup,
we assume that a tracial state has been given and that it is both
translation and gauge invariant, and assume that it has exponential moments for the bosonic case.  Let $\E$ denote the corresponding
map, as defined by \eqref{eq:state} for $A$ in the approriate algebra.

In the present context, translation invariance of the state means that
$\mean{ \tau_{x_0}^{*} A\tau_{x_0}}=\mean{A}$ for all bounded operators $A$
and any $x_0\in \lattice$, where $\tau_{x_0}$ is a unitary map
defined by setting
\[
 (\tau_{x_0} \psi)_n(x_1,\ldots,x_n) :=
 \psi_n(x_1-x_0,\ldots,x_n-x_0)\,, \qquad n\in \N\,,\ (x_1,\ldots,x_n)\in \Lambda^n \,.
\]
for $\psi \in \F_\pm$.  Note that then $\tau_{x_0}$ is a indeed a bounded, unitary operator on $\F_\pm$, with $\tau_{x_0}^{-1}=\tau_{x_0}^{*}=\tau_{-x_0}$.
Hence, the requirement makes sense.  Using \eqref{eq:Hndef}, it is also now obvious that $H$ commutes with each $\tau_{x_0}$ and, thus, also $[\rme^{-\ci t H},\tau_{x_0}]=0$.

Now for any $\Psi\in \F_\pm$, its components can be represent via an action of the appropriate creation operators on the vacuum vector $\Omega_0$, namely,
\[
 \Psi_n(x_1,\ldots,x_n) = \frac{1}{\sqrt{n!}}\innerprod{a^*(x_1)\cdots a^*(x_n)\Omega_0}{\Psi}\,, \qquad n\in \N\,,\ (x_1,\ldots,x_n)\in \Lambda^n \,.
\]
Using this, one can check that now
$\tau_{x_0}a(x-x_0,\sigma)\psi=a(x,\sigma)\tau_{x_0}\psi$ for $\psi\in \D$, $x\in \lattice$, $\sigma\in \{\pm\}$, and thus for any $I\in\I(\lattice\times\{\pm\})$.
\[
 \innerprod{\tau_{x_0}^* a^I\tau_{x_0}\psi}{\psi} = \innerprod{a^I\tau_{x_0}\psi}{\tau_{x_0}\psi}=\innerprod{\tau_{x_0}a^{\tau_{x_0}I}\psi}{\tau_{x_0}\psi}=
 \innerprod{a^{\tau_{x_0}I}\psi}{\psi}
\]
where $\tau_{x_0}I=\{(\ell,(x_\ell-x_0,\sigma_\ell)\ : \ (\ell,x_\ell,\sigma_\ell)\in I\}$.
Hence, by translation invariance and invariance of the time-evolution, we can now conclude that for all $t\in \R$ and $I\in\I(\lattice\times\{\pm\})$,
\[
 \E[a_t^I] = \E[a_t^{\tau_{x_0}I}]\,.
\]
Averaging over $x_0$ and taking Fourier transforms, we thus obtaining the following relations for translation invariant states
\begin{align}
    \label{eq:transinvmom}
\E[a_t^I] = \frac{1}{|\lattice|}\sum_{x_0\in \lattice}\E[a_t^{\tau_{x_0}I}]\,,\qquad
    \E[{\fta}_t^{\hat{I}}] =
 \delta_L\biggl(\sum_{\ell\in {\hat{I}}} k_\ell\biggr)
 \frac{1}{|\Lambda|}\E[{\fta}_t^{\hat{I}}]\,,
\end{align}
valid
for any $t\in \R$ and nonempty  $I\in\I(\lattice\times\{\pm\})$, $\hat I\in\I(\duallattice\times\{\pm\})$.
The shorthand notations used above will be clarified in the following Remark \ref{rem:elements_shorthand}.
\begin{remark}
    \label{rem:elements_shorthand}
    For given $\hat I\in\I(\duallattice\times\{\pm\})$, the shorthand notation
    $\ell\in\hat{I}$ means $\ell\in\pr_1(\hat{I})$ and given such $\ell$, we
    denote by $k_\ell$ and $\sigma_\ell$  the elements for which
    $(\ell,(k_\ell,\sigma_\ell))\in \hat{I}$.
    Similarly for given $I\in\I(\lattice\times\{\pm\})$,  the shorthand notation
    $\ell\in{I}$ means $\ell\in\pr_1({I})$ and  we denote by
    $x_\ell$ and $\sigma_\ell$  the elements for which
    $(\ell,(x_\ell,\sigma_\ell))\in {I}$.
\end{remark}

Gauge invariance here relates to $U(1)$-invariance over rotations of the one-particle space.  In the Fock space, the corresponding unitary operators can be  parametrized by  $\varphi\in \R$, and then setting
\[
 (\tau_{\varphi} \psi)_n(x_1,\ldots,x_n) :=
 \rme^{-\ci n \varphi}\psi_n(x_1,\ldots,x_n)\,, \qquad n\in \N\,,\ (x_1,\ldots,x_n)\in \Lambda^n \,.
\]
In other words, $\tau_{\varphi} = \rme^{-\ci \varphi N}$, and this is a unitary operator on $\F_\pm$, with $\tau_{\varphi}^* = \tau_{-\varphi}$.
It is straightforward to check that now, for any $\varphi\in \R$,
$[\rme^{-\ci t H},\tau_{\varphi}]=0$ and $a(x,\sigma)\tau_{\varphi}\psi =\rme^{\ci \sigma \varphi}\tau_{\varphi}a(x,\sigma)\psi$ for $\psi\in \D$, $x\in \lattice$, $\sigma\in \{\pm\}$.  Hence, for a gauge invariant state we have
\[
 \E[a_t^I] =
  \rme^{\ci \varphi\sum_{\ell\in I}\sigma_\ell}
  \E[a_t^{I}]
  \,,
\]
for all $\varphi, t\in \R$ and $I\in\I(\lattice\times\{\pm\})$.
Averaging over $\varphi$, we thus obtain
\begin{align}
    \label{eq:gaugeinvmom}
 \E[a_t^I] = \ind\biggl(\sum_{\ell\in I}\sigma_\ell=0\biggr)
 \E[a_t^I]\,,\qquad
 \E[{\fta}_t^{\hat I}] = \ind\biggl(\sum_{\ell\in {\hat I}}\sigma_\ell=0\biggr)
 \E[{\fta}_t^{\hat I}]\,,
\end{align}
valid
for any $t\in \R$ and nonempty $I\in\I(\lattice\times\{\pm\})$, $\hat I\in\I(\duallattice\times\{\pm\})$.

After these preliminaries, we now consider the antisymmetric cumulant hierarchy of $\fcum_t[I]:=\fcum[\es{\fta_t}{I}]$,
$I\in\I(\duallattice\times\{\pm\})$, for fermionic operators, and the symmetric cumulant hiearchy $\bcum_t[I]:=\bcum[\es{\fta_t}{I}]$, for bosonic operators.  Some of the computations will now be valid for either of the two cases, and then we drop the supscript and simply denote $\kappa_t[I]$ if the formula is true for both $\kappa_t=\fcum_t$ and for $\kappa_t=\bcum_t$.
For instance, now the gauge and translation invariance imply,
via \ref{eq:transinvmom} and \ref{eq:gaugeinvmom},
that
both cumulants satisfy
\begin{align}
    \label{eq:bothcuminv}
 \kappa_t[I] = \delta_L\!\left(\sum_{\ell\in I} k_\ell\right)
 \frac{1}{|\Lambda|}\kappa_t[I]\,,\qquad
 \kappa_t[I] =  \ind\biggl(\sum_{\ell\in I}\sigma_\ell=0\biggr)
 \kappa_t[I]\,,
\end{align}
for all $I=\in\I(\duallattice\times\{\pm\})$
with the shorthand notations clarified in Remark \ref{rem:elements_shorthand}.
This result can be concluded by induction using the recursion relations \eqref{eq:expressionK} and \eqref{eq:expressionK_fer}.
In
particular, all odd moments and cumulants vanish, including the expectation
values $\E[\fta_t(k,\sigma)]$ which correspond to the cases where $n=1$.
Therefore, $\E|_{\A_t}$ is now an even expectation map
also in the bosonic case.

Therefore, we only need to consider cumulants
 $I\in\I(\duallattice\times\{\pm\})$ with $|I|\ge 2$ even
 and for which $\sum_{\ell\in I}\sigma_\ell=0$ since the only other non-zero cumulant is the trivial one, with $I=\emptyset$.
Hence, Theorems \ref{thm:fermion_hie}  and \ref{thm:boson_hie}
now imply, for any $t\in\R$,
\begin{align}
        \partial_t\fcum_t[{I}] & =
        \ci \Omega(I)\fcum_t[I]
        +\ci
        \lambda
        \LL^{\mathrm{as}}_-(\fcum_t)[I]
        +\ci
        \lambda
        \B^{\mathrm{as}}_0(\fcum_t)[I]
        +\ci
        \lambda
        \B^{\mathrm{as}}_+(\fcum_t)[I]\, ,
        \nonumber \\
        \partial_t\bcum_t[{I}] & =
        \ci \Omega(I)\bcum_t[I]
        +\ci
        \lambda
        \LL^{\mathrm{s}}_-(\bcum_t)[I]
        +\ci
        \lambda
        \B^{\mathrm{s}}_0(\bcum_t)[I]
        +\ci
        \lambda
        \B^{\mathrm{s}}_+(\bcum_t)[I]\, ,
        \label{eq:evo_kappa}
\end{align}
where
both hierarchies contain $\Omega(I):=\sum_{\ell\in I}^{}
        \sigma_\ell \omega(k_\ell)$, and the only terms depending on higher order cumulants are given by
\begin{align}
    \label{eq:L-}
    \LL^{\mathrm{s}}_-(\kappa)[I]
    &=
     \sum_{\ell \in I}^{}
        \int_{(\duallattice)^3}\rmd k'_{123}\delta_{123}(k_\ell)
                \FT{V}_{123}(\sigma_\ell)
\kappa[I'_\ell +I\setminus \ell] \,, \\
        \LL^{\mathrm{as}}_-(\kappa)[I]
    &=
     \sum_{\ell \in I}^{} \sgn(\ell)
        \int_{(\duallattice)^3}\rmd k'_{123}\delta_{123}(k_\ell)
                \FT{V}_{123}(\sigma_\ell)
\kappa[I'_\ell +I\setminus \ell]
        \,,
\end{align}
where
   $I'_\ell := \{(1,(k'_1,+)),(2,(k'_2,\sigma_\ell)),(3,(k'_3,-))\}$,
and we have introduced the shorthand notations
$\rmd k'_{123}:=\rmd k'_{1}\rmd k'_{2}\rmd k'_{3}$,
$\delta_{123}(k):=\delta_{L}(k-k'_1-k'_2-k'_3)$ and $\FT{V}_{123}(\sigma)=
\sigma \FT{V}(k'_2+k'_{2+\sigma})$.  Here, we have used
Propositions \ref {prop:cum_per_invariance} and \ref{prop:cum_per_invariance_F} to reorder the cumulants into a common order where the new, integrated variables come first; note that $|I'_\ell + I\setminus \ell|\ge 4$.  In the antisymmetric case,
this yields the factor
$\sgn(\ell) :=(-1)^{\ell-1}$ assuming that $\pr_1(I) =\{1,2,\ldots,|I|\}$
is given in the standard form.

The remaining terms with $\B_0$ and $\B_+$ are nonlinear and
usually different for the two hierarchies.
The terms in $\B_+$ will only contain lower order cumulants, i.e., have only partitions $\pi$ such that all clusters $A\in \pi$ satisfy $|A|<|I|$.  The terms in $\B_0$ will contain all remaining partitions.  Since every cluster has  even $|A|\ge 2$, there are only three  allowed clusters $A\subset I'_\ell$ and all other clusters must contain elements from both $I'_\ell$ and $I\setminus \ell$.
Thus,
\begin{align}
    & \label{eq:B+}
            \B^{\mathrm{as}}_+(\kappa)[I]
    =
    \ind(|I|\ge 4)
      \sum_{\ell \in I}^{}
        \int_{(\duallattice)^3}\rmd k'_{123}\delta_{123}(k_\ell)
                \FT{V}_{123}(\sigma_\ell)
\nonumber \\ & \qquad \times
\sum_{\substack{\pi\in\p_2(I'_\ell+I\setminus \ell)\\|\pi|\ge2}}
        \sgn(\pi; P^{I',I}_\ell)
        \prod_{A\in \pi}
        \ind(|A|<|I|,\, A\cap   I'\ne \emptyset)\kappa[P^{I',I}_\ell(A)]\biggr|_{I'= I'_\ell}\,.
\end{align}
The definition of the operator $\B^{\mathrm{s}}_+(\kappa)[I]$ is obtaining by
replacing $ \sgn(\pi; P^{I'_\ell,I}_\ell)$ with $1$ in the above formula; note
that also in that case only  partitions with all  clusters even can yield a
nonzero contribution.
Finally, collecting the remaining partitions into $\B_0$, we obtain
\begin{align}    \label{eq:B0as}
    &     \B^{\mathrm{as}}_0(\kappa)[I]
    =
      \sum_{\ell \in I}^{}
        \int_{(\duallattice)^3}\rmd k'_{123}\delta_{123}(k_\ell)
                \FT{V}_{123}(\sigma_\ell)
  \nonumber \\ & \qquad \times
        \Bigg(
     \sum_{m=1}^3 (-1)^{m-1}
        \kappa[I'_\ell\setminus E]
        \kappa[{P_\ell^{E,I}(E+I\setminus\ell)}]\biggr|_{E\subset I'_\ell,\, \pr_1 E =\{m\}}
  \nonumber \\ & \qquad \quad
    +\ind(|I|\ge4)
    \sum_{\substack{\pi\in\p_2(I'_\ell+I\setminus \ell)\\|\pi| = 2}}
        \sgn(\pi; P^{I',I}_\ell)
  \nonumber \\ & \qquad \qquad \times
        \prod_{A\in \pi}
        \ind(|A|\in\{2,|I|\},\, A\not\subset I',\, A\not\subset (I\setminus \ell+3))\kappa[P^{I',I}_\ell(A)]\biggr|_{I'= I'_\ell}\Bigg)\,,\\
         \label{eq:B0s}
    &     \B^{\mathrm{s}}_0(\kappa)[I]
    =
      \sum_{\ell \in I}^{}
        \int_{(\duallattice)^3}\rmd k'_{123}\delta_{123}(k_\ell)
                \FT{V}_{123}(\sigma_\ell)
        \Bigg(
     \sum_{m=1}^3
        \kappa[I'_\ell\setminus E]
        \kappa[{P_\ell^{E,I}(E+I\setminus\ell)}]\biggr|_{E\subset I'_\ell,\, \pr_1 E =\{m\}}
  \nonumber \\ & \qquad 
  +\ind(|I|\ge4)
    \sum_{\substack{\pi\in\p_2(I'_\ell+I\setminus \ell)\\|\pi| = 2}}
        \prod_{A\in \pi}
        \ind(|A|\in\{2,|I|\},\, A\not\subset I',\, A\not\subset (I\setminus
    \ell+3))\kappa[P^{I',I}_\ell(A)]\biggr|_{I'= I'_\ell}\Bigg)\,.
\end{align}

In this case, we are
interested in controlling the covariance, the case with $|I|=2$.
We first define, for $t\in\R$, $k\in \duallattice$, the numbers
\[
 W_t(k) := \frac{1}{|\lattice|}\kappa_t[(-k,+),(k,-)]
 =  \frac{1}{|\lattice|} \E[\FT{a}_{t}(-k,+)\FT{a}_{t}(k,-)]
 = \sum_{y\in \lattice} \rme^{-\ci 2\pi k\cdot y} \E[a^*_{t}(0)a_{t}(y)]
 \, ,
\]
where in the last step we have used translation invariance to simplify the
result.  Taking a complex conjugate and using the properties of creation and
annihilation operators, we then conclude that $W_t(k)\in \R$.  By
\eqref{eq:bothcuminv}, it also holds
then
\begin{align}
    \label{eq:W}
    \kappa_t[(k',+),(k,-)] = \delta_L(k'+k)W_t(k)\,,
\end{align}
for all $k'\in \Lambda^*$.
Analogously, we define $\tilde{W}\colon \R\times\duallattice\to \R$ by
\begin{align}
    \label{eq:Wtilde}
    \tilde{W}_t(k): = \frac{1}{|\lattice|}\kappa_t[(k,-),(-k,+)]=
     \sum_{y\in \lattice} \rme^{-\ci 2\pi k\cdot y} \E[a_{t}(y)a^*_{t}(0)]\,.
\end{align}
Now, by the commutation relations, we always have
\begin{align}
    \tilde{W}_t(k) = \theta W_t(k)+1 \,.
\end{align}
where $\theta=\pm 1$, as defined earlier.

Take $k_0\in\duallattice$ and $t\in\R$. Let $I_{k_0}=\{(1,(-k_0,+)),\, (2,(k_0,-))\}$. Then
$W_t(k_0)=\kappa_t[I_{k_0}]/|\lattice|$. Since $\omega(-k)=\omega(k)$, we have that
$\Omega(I_{k_0}){\kappa_t}[I_{k_0}]=0$.  Also, $\B_+(\kappa_t)[I_{k_0}]=0$ as
$\ind(|I_{k_0}|\ge4)=0$.  Substitution, using translation invariance and symmetry of the potential yields
$\B_0(\kappa_t)[I_{k_0}] =0$. 
Therefore,
\begin{align}
    \label{eq:Wreal}
     W_ 
    {T\lambda^{-2}}
    (k_0) 
    &= W_0(k_0)
    +
    \ci \frac{\lambda}{|\Lambda|}\int_0^{T\lambda^{-2}}\rmd s\LL_-(\kappa_s)[I_{k_0}]\,.
\end{align}
This means that time evolution of $W$ is determined by 4th order cumulants. 

Let $I_{4}=\{(1,(k_a,+)),\,(2,(k_b,+)),\,(3,(k_c,-)),\,(4,(k_d,-))\}$, 
with $k_a,\dots,k_d\in\duallattice$ and $k_a+k_b+k_c+k_d=0 \bmod \duallattice$.
By permutation 
invariance (Propositions \ref{prop:cum_per_invariance} and \ref{prop:cum_per_invariance_F}), 
gauge and translation invariance, the nonzero 4th order cumulants
are determined by
sequences of the form $I_4$. 
 Applying Duhamel's formula to \eqref{eq:evo_kappa}
implies 
\begin{align}
    \label{eq:kappa4}
    \kappa_s[I_4] =  \ci
    \lambda
    \int_0^s \!\rmd r\,
    \rme^{\ci(s-r)
        \Omega_{abcd}
    }
    \B_+(\kappa_r)[I_4]
    +
    \ci 
    \lambda
    \mathcal{E}_4(s,\kappa)[I_4]\,,
\end{align}
where $\Omega_{abcd}=\omega(k_a)+\omega(k_b)-\omega(k_c)-\omega(k_d)$  and
\begin{align*}
    \mathcal{E}_4(s,\kappa)[I_4]
    &=
    \int_0^s\! \rmd r\,
    \rme^{\ci(s-r)
        \Omega_{abcd}
    }
    (\LL_-(\kappa_r)[I_4]+\B_0(\kappa_r)[I_4])
    + {\frac{1}{\ci \lambda}}
    \rme^{\ci s
        \Omega_{abcd}
    }
    \kappa_0[I_4]\,.
\end{align*}
Straightforward computation, together with gauge and translation invariance,
implies
\begin{align}
    \nonumber
    \frac{1}{|\Lambda|}
    \B_+(\kappa_r)[I_4] 
    & = 
    {W}_r(k_c)W_r(k_d)
    \Big[
        \tilde{W}_r(-k_b)
        \Big(
            \FT{V}(k_b+k_c)
            +
            \theta
            \FT{V}(k_b+k_d)
        \Big)
        \\
    \nonumber
    &
    \qquad
    \qquad
    \qquad
    \qquad
    \qquad
        +
        {W}_r(-k_a)
        \Big(
            \FT{V}(k_a+k_c)
            +
            \theta
            \FT{V}(k_a+k_d)
        \Big)
    \Big]
    \\
    \nonumber
    &
    \quad
    -
    {W}_r(-k_a)W_r(-k_b)
    \Big[
        \tilde{W}_r(k_c)
        \Big(
            \FT{V}(k_b+k_c)
            +
            \theta
            \FT{V}(k_a+k_c)
        \Big)
        \\
    \label{eq:B+computed}
    &
    \qquad
    \qquad
    \qquad
    \qquad
    \qquad
        +
        {W}_r(k_d)
        \Big(
            \FT{V}(k_b+k_d)
            +
            \theta
            \FT{V}(k_a+k_d)
        \Big)
    \Big]
    \,,
\end{align}
where the $\theta$ factors are coming from the factor $\sgn(\pi)$ in \eqref{eq:B+}

Take $k_0\in\duallattice$ and let $I_{0123+}=\{(1,(k_1,+)),\,(2,(k_2,+)),\,(3,(k_3,-)),\,(4,(k_0,-))\}$ and 
$I_{0123-}=\{(1,(-k_0,+)),\,(2,(k_1,+)),\,(3,(k_2,-)),\,(4,(k_3,-))\}$ for $k_1,k_2,k_3\in\duallattice$.
Now using \eqref{eq:kappa4} to \eqref{eq:Wreal} implies
\begin{align}
    \nonumber
    W_ 
    {T\lambda^{-2}}
    (k_0) 
    &
    = W_0(k_0)
    -
    \Bigg(
    \int_0^{T} \rmd t
    \int_0^
    {\lambda^{-2}(T-t)}
    \rmd s
        \int_{(\duallattice)^3}
        \rmd k_{123}
    \sum_{\sigma\in\{\pm\}}
     \rme^{-\ci s(\sigma
\omega(k_0)+\omega(k_3) -\omega(k_1)-\sigma\omega(k_2))
     }
        \\
        \label{eq:Wreal2}
    &\quad 
    \times
        \delta_{123}(-\sigma k_0)
        \FT{V}_{123}(\sigma) 
    \frac{1}{|\Lambda|}
     \B_+(\kappa_{T\lambda^{-2}})[I_{0123\sigma}]
    \Bigg)
    -
    \frac{\lambda^2}{|\Lambda|}
    \mathcal{E}_2(T\lambda^{-2},\kappa)[I_{k_0}]\,,
\end{align}
where we have used Fubini's theorem to change the order of $s$ and $t$ integrals, then made a change of variables 
$s\mapsto s-t$, followed by a change of variables $t= r\lambda^2$. Here,
\begin{align*}
    \mathcal{E}_2(T\lambda^{-2},\kappa)[I_{k_0}]
    &= 
     \int_0^
    {T\lambda^{-2}}
    \rmd s
    \sum_{\sigma\in\{\pm\}}
        \int_{(\duallattice)^3}
        \rmd k_{123}
        \delta_{123}(-\sigma k_0)
        \FT{V}_{123}(\sigma) 
    \mathcal{E}_4(s,\kappa)[I_{0123\sigma}]\,.
\end{align*}
We note that \eqref{eq:B+computed}
implies
\begin{align}
    \nonumber
        -\delta_{L}(k_0+k_1+k_2+k_3)
        &
     \B_+(\kappa_{t})[(-k_0,+),\,(-k_3,+),\,(-k_2,-),\,(-k_1,-)]
     \\
    \label{eq:symmetry}
        &
     =
        \delta_{L}(k_0+k_1+k_2+k_3)
     \B_+(\kappa_{t})[I_{0123+}]
\end{align}
Therefore, the second term on the right hand side of \eqref{eq:Wreal2} is equal to 
\begin{align}
    \nonumber
    &\int_0^T
    \rmd t
    \int_{-\lambda^{-2}(T-t)}^
    {\lambda^{-2}(T-t)}
    \rmd s
        \int_{(\duallattice)^3}
        \rmd k_{123}
     \rme^{-\ci s(
\omega(k_0)+\omega(k_3) -\omega(k_1)-\omega(k_2))
     }
        \\
        \label{eq:precol}
    &\quad 
    \times
        \delta_{L}(k_0+k_1+k_2+k_3)
        \FT{V}(k_2+k_3) 
    \frac{1}{|\Lambda|}
     \B_+(\kappa_{T\lambda^{-2}})[I_{0123+}]
    \,,
\end{align}
where we have merged ``$\sigma=+$'' and ``$\sigma=-$'' -terms by modifying only the
``$\sigma=-$'' term, as follows:  We make a change of variables
$(k_1,k_2,k_3)\mapsto (-k_3,-k_2,-k_1)$, then use \eqref{eq:symmetry}, and 
finally make a change of variables $s\mapsto -s$.  Symmetries of $\omega$ and
$\FT{V}$ have also been used to simplify the formula. Hence, after making a change of variables $(k_1,k_2,k_3)\mapsto (-k_3,-k_2,k_1)$ in \eqref{eq:precol},
we obtain from \eqref{eq:B+computed} and \eqref{eq:Wtilde}
 that the covariance $W$
defined in \eqref{eq:W} satisfies 
\begin{align}
    \label{eq:Wreal3}
    &
    W_ 
    {T\lambda^{-2}}
    (k_0) 
    = W_0(k_0)
    +
        \int_0^T
      \rmd t 
       \ccol(W_{T\lambda^{-2}},T-t)(k_0)
-
    \frac{\lambda^2}{|\Lambda|}
    \mathcal{E}_2(T\lambda^{-2},\kappa)[I_{k_0}]
    \,.
\end{align}
To arrive at the given form for $\ccol$, we use that $k_0-k_2=k_1-k_3$ or the integrand is zero due to the $\delta_L$-term,
and observe that it is possible to replace $\FT{V}(k_1-k_2)( \FT{V}(k_1-k_2)+\theta\FT{V}(k_1-k_3))$ by 
$\frac{1}{2}( \FT{V}(k_1-k_2)+\theta\FT{V}(k_1-k_3))^2$ 
in the integrand, due to its symmetry when swapping $k_2\leftrightarrow k_3$.

Hence, whenever the 4th and 6th order cumulants are such that
\begin{align}
    \label{eq:error}
\lambda^2|\Lambda|^{-1} |\mathcal{E}_2(T\lambda^{-2},\kappa)[I_{k_0}]|
\end{align}
is small, 
we obtain from \eqref{eq:Wreal3} that a good approximation for $W$ is given by the finite lattice kinetic equation
\eqref{eq:flke}. Note that \eqref{eq:error} is zero if the state is quasi-free, but in general quasi-freeness is not preserved in time evolution.
The remaining major open question is to actually prove that \eqref{eq:error} goes to zero 
when the lattice size $L$ goes to $\infty$ and the coupling $\lambda$ goes to $0$. 
Another problem we hope to solve in the near future is to check when the
solutions to the finite lattice kinetic equation \eqref{eq:flke} converge at
this limit to a solution of the Boltzmann--Nordheim equation.

%% file: bibliography.bib
@article{Crdenas2025,
  title = {Quantum {B}oltzmann Dynamics and Bosonized Particle-Hole Interactions in Fermion Gases},
  volume = {193},
  ISSN = {1572-9613},
  url = {http://dx.doi.org/10.1007/s10955-025-03557-z},
  DOI = {10.1007/s10955-025-03557-z},
  number = {1},
  journal = {Journal of Statistical Physics},
  publisher = {Springer Science and Business Media LLC},
  author = {Cárdenas,  Esteban and Chen,  Thomas},
  year = {2025},
  month = Dec 
}

@misc{lukkarinen2026finitelatticekineticequations,
      title={Finite lattice kinetic equations for bosons, fermions, and discrete {NLS}}, 
      author={Jani Lukkarinen and Sakari Pirnes and Aleksis Vuoksenmaa},
      year={2026},
      eprint={2601.10486},
      archivePrefix={arXiv},
      primaryClass={math-ph},
      url={https://arxiv.org/abs/2601.10486}, 
      note = {Preprint arXiv 2601.10486}
}

@article{lukkarinen_weakly_2011,
	title = {Weakly nonlinear {Schrödinger} equation with random initial data},
	volume = {183},
	language = {en},
	number = {1},
	journal = {Inventiones mathematicae},
	author = {Lukkarinen, Jani and Spohn, Herbert},
	year = {2011},
	pages = {79--188},
}

@article{lukkarinen_not_2009,
	title = {Not to {Normal} {Order}—{Notes} on the {Kinetic} {Limit} for {Weakly} {Interacting} {Quantum} {Fluids}},
	volume = {134},
	language = {en},
	number = {5-6},
	journal = {Journal of Statistical Physics},
	author = {Lukkarinen, Jani and Spohn, Herbert},
	year = {2009},
	pages = {1133--1172},
}

@misc{lukkarinen_generation_2025,
	title = {Generation of chaos in the cumulant hierarchy of the stochastic {Kac} model},
	publisher = {arXiv},
	author = {Lukkarinen, Jani and Vuoksenmaa, Aleksis},
	year = {2025},
	note = {Preprint arXiv:2407.17068 [math-ph]},
}

@article{deng_full_2023,
	title = {Full derivation of the wave kinetic equation},
	volume = {233},
	language = {en},
	number = {2},
	journal = {Inventiones mathematicae},
	author = {Deng, Yu and Hani, Zaher},
	year = {2023},
	pages = {543--724},
}

@article{benedetto_weak-coupling_2005,
	title = {{On} {the} {Weak}-{Coupling} {Limit} {for} {Bosons} {and} {Fermions}},
	volume = {15},
	issn = {0218-2025, 1793-6314},
	url = {https://www.worldscientific.com/doi/10.1142/S0218202505000984},
	doi = {10.1142/S0218202505000984},
	language = {en},
	number = {12},
	urldate = {2025-05-21},
	journal = {Mathematical Models and Methods in Applied Sciences},
	author = {Benedetto, D. and Pulvirenti, M. and Castella, F. and Esposito, R.},
	month = dec,
	year = {2005},
	pages = {1811--1843},
}

@article{chen_emergence_2023,
	title = {On the {Emergence} of {Quantum} {Boltzmann} {Fluctuation} {Dynamics} near a {Bose}–{Einstein} {Condensate}},
	volume = {190},
	language = {en},
	number = {4},
	journal = {Journal of Statistical Physics},
	author = {Chen, Thomas and Hott, Michael},
	year = {2023},
	pages = {85},
}

@article{kac_foundations_1956,
	title = {Foundations of {Kinetic} {Theory}},
	volume = {3},
	journal = {Proceedings of The third Berkeley symposium on mathematical statistics and probability},
	author = {Kac, Mark},
	year = {1956},
	pages = {171--197},
}

@article{paul_size_2019,
	title = {On the size of chaos in the mean field dynamics},
	volume = {231},
	issn = {0003-9527, 1432-0673},
	url = {http://arxiv.org/abs/1708.07701},
	doi = {10.1007/s00205-018-1280-y},
	number = {1},
	urldate = {2023-06-09},
	journal = {Archive for Rational Mechanics and Analysis},
	author = {Paul, Thierry and Pulvirenti, Mario and Simonella, Sergio},
	month = jan,
	year = {2019},
	note = {Preprint arXiv:1708.07701 [math-ph, physics:quant-ph]},
	pages = {285--317},
}

@article{Pulvirenti2016,
  title = {The {Boltzmann--Grad} limit of a hard sphere system: analysis of the correlation error},
  volume = {207},
  ISSN = {1432-1297},
  url = {http://dx.doi.org/10.1007/s00222-016-0682-4},
  DOI = {10.1007/s00222-016-0682-4},
  number = {3},
  journal = {Inventiones mathematicae},
  publisher = {Springer Science and Business Media LLC},
  author = {Pulvirenti,  M. and Simonella,  S.},
  year = {2016},
  month = Aug,
  pages = {1135–1237}
}

@article{Frst2013,
  title = {Derivation of a matrix-valued {Boltzmann} equation for the {Hubbard} model},
  volume = {46},
  ISSN = {1751-8121},
  url = {http://dx.doi.org/10.1088/1751-8113/46/48/485002},
  DOI = {10.1088/1751-8113/46/48/485002},
  number = {48},
  journal = {Journal of Physics A: Mathematical and Theoretical},
  publisher = {IOP Publishing},
  author = {F\"{u}rst,  Martin L R and Lukkarinen,  Jani and Mei,  Peng and Spohn,  Herbert},
  year = {2013},
  month = Nov,
  pages = {485002}
}

@misc{madsen2026asymptoticbehaviorkinetictime,
      title={On the asymptotic behavior at the kinetic time of a weakly interacting Fermi gas}, 
      author={Peter S. Madsen and Phan Thành Nam and Herbert Spohn and Minh-Binh Tran},
      year={2026},
      eprint={2605.13499},
      archivePrefix={arXiv},
      primaryClass={math-ph},
      url={https://arxiv.org/abs/2605.13499},
      note ={Preprint arXiv 2605.13499}
}

@article{Salmhofer2009,
  title = {Clustering of Fermionic Truncated Expectation Values Via Functional Integration},
  volume = {134},
  ISSN = {1572-9613},
  url = {http://dx.doi.org/10.1007/s10955-009-9698-0},
  DOI = {10.1007/s10955-009-9698-0},
  number = {5-6},
  journal = {Journal of Statistical Physics},
  publisher = {Springer Science and Business Media LLC},
  author = {Salmhofer,  Manfred},
  year = {2009},
  month = Mar,
  pages = {941–952}
}

@article{erdos_quantum_2004,
	title = {On the {Quantum} {Boltzmann} {Equation}},
	volume = {116},
	language = {en},
	number = {1-4},
	journal = {Journal of Statistical Physics},
	author = {Erdős, László and Salmhofer, Manfred and Yau, Horng-Tzer},
	year = {2004},
	pages = {367--380},
}

@inbook{Bodineau2023,
  title = {Dynamics of dilute gases: a statistical approach},
  ISBN = {9783985475605},
  url = {http://dx.doi.org/10.4171/ICM2022/130},
  DOI = {10.4171/icm2022/130},
  booktitle = {International Congress of Mathematicians},
  publisher = {EMS Press},
  author = {Bodineau,  Thierry and Gallagher,  Isabelle and Saint-Raymond,  Laure and Simonella,  Sergio},
  year = {2023},
  month = Dec,
  pages = {750–795}
}

@misc{deng_long_2025,
	title = {Long time derivation of the {Boltzmann} equation from hard sphere dynamics},
	publisher = {arXiv},
	author = {Deng, Yu and Hani, Zaher and Ma, Xiao},
	year = {2025},
	note = {Preprint arXiv:2408.07818 [math].  To appear in Annals of Mathematics, URL \url{https://annals.math.princeton.edu/articles/22284}},
}

@misc{DengHani_fullrange_2023,
  doi = {10.48550/ARXIV.2301.07063},
  url = {https://arxiv.org/abs/2301.07063},
  author = {Deng,  Yu and Hani,  Zaher},
  title = {Derivation of the wave kinetic equation: full range of scaling laws},
  note = {Preprint arXiv:2301.07063},
  publisher = {arXiv},
  year = {2023},
  copyright = {Creative Commons Attribution 4.0 International}
}

@incollection{lanford73,
	author = "O. E. Lanford",
	title = "Entropy and equilibrium states in classical statistical mechanics", 
        booktitle = "Statistical Mechanics and Mathematical Problems",
	editor = "A. Lenard",
        Publisher="Springer",
        Address="Berlin", 
        year=1973, pages="1--113"}

@article{deng_propagation_2024,
  title = {Propagation of chaos and higher order statistics in wave kinetic theory},
  volume = {28},
  ISSN = {1435-9863},
  url = {http://dx.doi.org/10.4171/JEMS/1488},
  DOI = {10.4171/jems/1488},
  number = {2},
  journal = {Journal of the European Mathematical Society},
  publisher = {European Mathematical Society - EMS - Publishing House GmbH},
  author = {Deng,  Yu and Hani,  Zaher},
  year = {2024},
  month = June,
  pages = {673–733}
}

@article{buckmaster_onset_2021,
	title = {Onset of the wave turbulence description of the longtime behavior of the nonlinear {Schrödinger} equation},
	volume = {225},
	issn = {0020-9910, 1432-1297},
	url = {https://link.springer.com/10.1007/s00222-021-01039-z},
	doi = {10.1007/s00222-021-01039-z},
	language = {en},
	number = {3},
	urldate = {2025-11-07},
	journal = {Inventiones mathematicae},
	author = {Buckmaster, T. and Germain, P. and Hani, Z. and Shatah, J.},
	month = sep,
	year = {2021},
	pages = {787--855},
}

@misc{deng_long_2024,
	title = {Long time justification of wave turbulence theory},
	url = {http://arxiv.org/abs/2311.10082},
	doi = {10.48550/arXiv.2311.10082},
	urldate = {2025-11-07},
	publisher = {arXiv},
	author = {Deng, Yu and Hani, Zaher},
	month = apr,
	year = {2024},
	note = {Preprint arXiv:2311.10082 [math]},
}

@article{lukkarinen_wick_2016,
	title = {Wick polynomials and time-evolution of cumulants},
	volume = {57},
	issn = {0022-2488, 1089-7658},
	url = {https://pubs.aip.org/jmp/article/57/8/083301/958884/Wick-polynomials-and-time-evolution-of-cumulants},
	doi = {10.1063/1.4960556},
	language = {en},
	number = {8},
	urldate = {2025-11-07},
	journal = {Journal of Mathematical Physics},
	author = {Lukkarinen, Jani and Marcozzi, Matteo},
	month = aug,
	year = {2016},
	pages = {083301},
}

@article{Dymov2021,
  title = {Formal Expansions in Stochastic Model for Wave Turbulence 1: Kinetic Limit},
  volume = {382},
  ISSN = {1432-0916},
  url = {http://dx.doi.org/10.1007/s00220-021-03955-w},
  DOI = {10.1007/s00220-021-03955-w},
  number = {2},
  journal = {Communications in Mathematical Physics},
  publisher = {Springer Science and Business Media LLC},
  author = {Dymov,  Andrey and Kuksin,  Sergei},
  year = {2021},
  month = Feb,
  pages = {951–1014}
}

@article{Dymov2022,
  title = {Formal Expansions in Stochastic Model for Wave Turbulence 2: Method of Diagram Decomposition},
  volume = {190},
  ISSN = {1572-9613},
  url = {http://dx.doi.org/10.1007/s10955-022-02998-0},
  DOI = {10.1007/s10955-022-02998-0},
  number = {1},
  journal = {Journal of Statistical Physics},
  publisher = {Springer Science and Business Media LLC},
  author = {Dymov,  Andrey and Kuksin,  Sergei},
  year = {2022},
  month = Oct 
}

@misc{dymov_kinetic_2025,
	title = {Kinetic approximation for equations of discrete turbulence in the subcritical case},
	url = {http://arxiv.org/abs/2505.16488},
	doi = {10.48550/arXiv.2505.16488},
	urldate = {2025-11-07},
	publisher = {arXiv},
	author = {Dymov, Andrey},
	month = jun,
	year = {2025},
	note = {Preprint arXiv:2505.16488 [math-ph]},
}

@incollection{cadamuro_kinetic_2018,
	address = {Cham},
	title = {Kinetic {Theory} and {Thermalization} of {Weakly} {Interacting} {Fermions}},
	volume = {270},
	isbn = {978-3-030-01601-2 978-3-030-01602-9},
	url = {http://link.springer.com/10.1007/978-3-030-01602-9_1},
	urldate = {2025-11-07},
	booktitle = {Macroscopic {Limits} of {Quantum} {Systems}},
	publisher = {Springer International Publishing},
	author = {Lukkarinen, Jani},
	editor = {Cadamuro, Daniela and Duell, Maximilian and Dybalski, Wojciech and Simonella, Sergio},
	year = {2018},
	doi = {10.1007/978-3-030-01602-9_1},
	note = {Series Title: Springer Proceedings in Mathematics \& Statistics},
	pages = {1--28},
}

@misc{Hannani_stoch_KdV_2022,
  doi = {10.48550/ARXIV.2210.17445},
  url = {https://arxiv.org/abs/2210.17445},
  author = {Hannani,  Amirali and Rosenzweig,  Matthew and Staffilani,  Gigliola and Tran,  Minh-Binh},
  title = {On the wave turbulence theory for a stochastic {KdV} type equation -- {Generalization} for the inhomogeneous kinetic limit},
  publisher = {arXiv},
  year = {2022},
  copyright = {arXiv.org perpetual,  non-exclusive license},
  note = {Preprint arXiv:2210.17445}
}

@book{brattelirobinson2,
    author = {Bratteli, Ola. and Robinson, Derek W.},
    address = {New York},
    isbn = {0-387-10381-3},
    language = {eng},
    publisher = {Springer-Verlag},
    title = {Operator algebras and quantum statistical mechanics. 2, Equilibrium states models in quantum statistical mechanics},
    year = {1981},
}

@article{Speicher1994,
    author = {Speicher, Roland},
    issn = {0025-5831},
    journal = {Mathematische annalen},
    language = {eng},
    number = {1},
    pages = {611-628},
    title = {Multiplicative functions on the lattice of non-crossing partitions and free convolution},
    volume = {298},
    year = {1994},
}

@book{NicaSpeicher2006,
    author = {Nica, Alexandru and Speicher, Roland},
    address = {Cambridge},
    booktitle = {Lectures on the combinatorics of free probability},
    isbn = {0521858526},
    language = {eng},
    publisher = {Cambridge University Press},
    series = {London Mathematical Society lecture note series ; 335},
    title = {Lectures on the combinatorics of free probability},
    year = {2006},
}
